\declaretheorem[style=plain,parent=section]{definition}
\declaretheorem[sibling=definition]{theorem}
\declaretheorem[sibling=definition]{proposition}
\declaretheorem[sibling=definition]{lemma}
\declaretheorem[style=remark,sibling=definition,qed={\qedsymbol}]{remark}
\declaretheorem[style=remark,sibling=definition,qed={\qedsymbol}]{example}
\declaretheoremstyle[headpunct={},notebraces={\textbf{--}~}{}]{algorithm}
\declaretheorem[style=algorithm]{problem}
\declaretheorem[style=algorithm]{algorithm}
\setlist{topsep=0.25\baselineskip,partopsep=0pt,itemsep=1pt,parsep=0pt}
\crefname{problem}{Problem}{Problems}
\Crefname{problem}{Problem}{Problems}
\newcommand{\storeArg}{} 
\newcommand{\bigO}[1]{\mathchoice{O\left(#1\right)}{O(#1)}{O(#1)}{O(#1)}} 
\newcommand{\softO}[1]{\mathchoice{\tilde{O}\left(#1\right)}{O\tilde{~}(#1)}{O\tilde{~}(#1)}{O\tilde{~}(#1)}} 
\newcommand{\polmultime}[1]{\mathsf{M}(#1)}
\newcommand{\expmatmul}{\omega} 
\newcommand{\algoname}[1]{{\normalfont\textsc{#1}}}
\newcommand{\algoword}[1]{\textsf{#1}}
\newcommand{\assign}{\leftarrow}
\newcommand{\comment}[1]{\texttt{\small/* #1 */}}
\newcommand{\eolcomment}[1]{\hfill\texttt{\small// #1}}
\renewcommand{\ge}{\geqslant} 
\renewcommand{\le}{\leqslant} 
\newcommand{\NN}{\mathbb{N}} 
\newcommand{\NNp}{\mathbb{N}_{> 0}} 
\newcommand{\tuple}[1]{\mathbf{#1}}  
\newcommand{\nvars}{r} 
\newcommand{\var}{X} 
\newcommand{\vars}{\bm{\var}} 
\newcommand{\field}{\mathbb{K}} 
\newcommand{\ring}{\field[\vars]} 
\newcommand{\bvPolRing}{\field[X,Y]} 
\newcommand{\mvPolRing}[1]{\field[\var_1,\ldots,\var_{#1}]} 
\newcommand{\module}[1][M]{\mathcal{#1}} 
\newcommand{\nodule}[1][N]{\mathcal{#1}} 
\newcommand{\ideal}{\mathcal{I}} 
\newcommand{\rdim}{m} 
\newcommand{\sdim}{n} %
\newcommand{\matRing}[1][\rdim]{\renewcommand\storeArg{#1}\matRingAux} 
\newcommand{\matRingAux}[1][\storeArg]{\field^{\storeArg \times #1}} 
\newcommand{\card}[1]{\mathrm{Card}(#1)}
\newcommand{\disuni}{\cup}  
\newcommand{\row}[1]{\bm{#1}} 
\newcommand{\mat}[1]{\bm{#1}} 
\newcommand{\matz}{\bm{0}} 
\newcommand{\matt}[1]{\bm{\bar{#1}}} 
\newcommand{\trsp}[1]{#1^\mathsf{T}} 
\newcommand{\matrows}[2]{\operatorname{Rows}(#1,#2)} 
\newcommand{\idMat}[1]{\mat{I}_{#1}} 
\newcommand{\rank}[1]{\mathrm{rank}(#1)}
\newcommand{\matRank}{\rho} 
\newcommand{\rkprof}{\matRank} 
\DeclareBoldMathCommand{\rkprofs}{\matRank} 
\newcommand{\svdots}{\raisebox{3pt}{$\scalebox{.75}{$\vdots$}$}} 
\newcommand{\vsdim}{D} 
\newcommand{\dimvs}{\Delta} 
\newcommand{\mul}{\cdot} 
\newcommand{\mulmat}[1]{\mat{M}_{#1}} 
\newcommand{\mulmats}{\mat{M}} 
\newcommand{\evMat}{\mat{F}} 
\newcommand{\evRow}{\row{f}} 
\newcommand{\expansion}[2][\ord,\maxDegs]{\mathcal{E}_{#1}(#2)} 
\newcommand{\contraction}[2][\ord,\maxDegs]{\mathcal{C}_{#1}(#2)} 
\newcommand{\krylov}[2][(\mulmats,\evMat)]{\mathcal{K}_{#2}#1} 
\newcommand{\basMat}{\mat{B}} 
\newcommand{\termMat}{\mat{T}} 
\newcommand{\nfMat}{\mat{N}} 
\newcommand{\ordIndex}[1][\ord,\maxDegs]{\phi_{#1}}
\newcommand{\evSpace}[2][\rdim]{\matRing[#1][#2]} 
\newcommand{\mulmatSpace}[1][\vsdim]{\matRing[#1]} 
\newcommand{\maxDeg}{\beta} 
\newcommand{\maxDegs}{\boldsymbol{\beta}} 
\newcommand{\coordVec}[1]{\row{c}_{#1}} 
\newcommand{\expnt}{e} 
\newcommand{\expnts}{\tuple{\expnt}} 
\newcommand{\ngens}{s}  
\newcommand{\gb}{\mathcal{G}} 
\newcommand{\ord}{\prec} 
\newcommand{\ordLex}{\ord_{\mathrm{lex}}} 
\newcommand{\ordDRL}{\ord_{\mathrm{drl}}} 
\newcommand{\ordTOP}[1]{\mathbin{#1^{\mathrm{top}}}} 
\newcommand{\ordPOT}[1]{\mathbin{#1^{\mathrm{pot}}}} 
\newcommand{\LM}{\mathcal{L}} 
\newcommand{\lt}[2][\ord]{\mathrm{lt}_{#1}(#2)} 
\newcommand{\ltMod}[2][\ord]{\mathrm{lt}_{#1}(#2)}  
\newcommand{\nf}[2][\ord]{\mathrm{nf}_{#1}(#2)} 
\newcommand{\monbas}{\mathcal{E}} 
\newcommand{\basVec}[1]{\varepsilon_{#1}} 
\newcommand{\monmod}{\mathcal{T}}  
\newcommand{\genBy}[1]{\langle #1 \rangle} 
\newcommand{\expSet}{\mathcal{S}}  
\newcommand{\nextExpSets}[1]{\hat{\expSet}_{#1}}
\newcommand{\border}{\mathcal{B}}  
\newcommand{\basisM}{\mathcal{F}}
\newcommand{\hyp}[1]{\mathcal{H}(#1)} 
\newcommand{\modSyz}[2][\module]{\operatorname{Syz}_{#1}(#2)} 
\newcommand{\modRel}{\operatorname{Syz}_{\mulmats}(\evMat)} 
\newcommand{\rel}[1][p]{\row{#1}} 
\newcommand{\relSpace}[1][\rdim]{\field[\vars]^{#1}} 
\newcommand{\nbvec}{t} 
\newcommand{\argfig}[1]{\begin{figure}[#1]} 
\newlist{algosteps}{enumerate}{3}
\crefname{algostepsi}{Step}{Steps}
\crefname{algostepsi}{Step}{Steps}
\Crefname{algostepsii}{Step}{Steps}
\Crefname{algostepsii}{Step}{Steps}
\Crefname{algostepsiii}{Step}{Steps}
\Crefname{algostepsiii}{Step}{Steps}
\newenvironment{algobox}[1][htbp]{
  \newcommand{\algoInfo}[2]{
    \begin{algorithm}[{\algoname{##1}}]
    \label{##2}
    ~ \smallskip

  }
  \newcommand{\dataInfos}[2]{
    \algoword{##1:}
      \begin{itemize}[leftmargin=0.8cm]
          ##2
      \end{itemize}}
  \newcommand{\dataInfo}[2]{
    \algoword{##1:} ##2 }
  \newcommand{\algoSteps}[1]{
    \setlist[algosteps,1]{leftmargin=0.6cm}
    \setlist[algosteps,2]{leftmargin=0.8cm}
    \setlist[algosteps,3]{leftmargin=0.4cm}
    \begin{algosteps}[label=\textbf{\arabic*.},ref=\arabic*]
      ##1
    \end{algosteps}
  }
  \expandafter\argfig\expandafter{#1}
  \centering
  \addtolength\fboxsep{0.1cm}
  \begin{boxedminipage}{0.99\textwidth}
  }
  {
  \end{algorithm}
  \end{boxedminipage}
  \end{figure}
}
\newenvironment{problembox}[1][htbp]{
  \newcommand{\problemInfo}[2]{
    \begin{problem}[{name=\emph{##1}}]
      \label{##2}
      ~ \smallskip

  }
  \newcommand{\dataInfos}[2]{
    \algoword{##1:}
      \begin{itemize}[leftmargin=0.8cm]
          ##2
      \end{itemize}}
  \newcommand{\dataInfo}[2]{
    \algoword{##1:} ##2 }
  \expandafter\argfig\expandafter{#1}
  \centering
  \addtolength\fboxsep{0.1cm}
  \begin{boxedminipage}{0.73\textwidth}
  }
  {
  \end{problem}
  \end{boxedminipage}
  \end{figure}
}
\begin{document}

\begin{frontmatter}

  \title{Computing syzygies in finite dimension using fast linear algebra}

  \author{Vincent Neiger}
  \address{Univ.~Limoges, CNRS, XLIM, UMR 7252, F-87000 Limoges, France}

  \author{\'Eric Schost}
  \address{University of Waterloo, Waterloo ON, Canada}

  \begin{abstract}
    We consider the computation of syzygies of multivariate polynomials in a
    finite-dimensional setting: for a $\mathbb{K}[X_1,\dots,X_r]$-module
    $\mathcal{M}$ of finite dimension $D$ as a $\mathbb{K}$-vector space, and
    given elements $f_1,\dots,f_m$ in $\mathcal{M}$, the problem is to compute
    syzygies between the $f_i$'s, that is, polynomials $(p_1,\dots,p_m)$ in
    $\mathbb{K}[X_1,\dots,X_r]^m$ such that $p_1 f_1 + \dots + p_m f_m = 0$ in
    $\mathcal{M}$. Assuming that the multiplication matrices of the $r$
    variables with respect to some basis of $\mathcal{M}$ are known, we give an
    algorithm which computes the reduced Gr\"obner basis of the module of these
    syzygies, for any monomial order, using $O(m D^{\omega-1} + r D^\omega
    \log(D))$ operations in the base field $\mathbb{K}$, where $\omega$ is the
    exponent of matrix multiplication. Furthermore, assuming that $\mathcal{M}$
    is itself given as $\mathcal{M} = \mathbb{K}[X_1,\dots,X_r]^n/\mathcal{N}$,
    under some assumptions on $\mathcal{N}$ we show that these multiplication
    matrices can be computed from a Gr\"obner basis of $\mathcal{N}$ within the
    same complexity bound. In particular, taking $n=1$, $m=1$ and $f_1=1$ in
    $\mathcal{M}$, this yields a change of monomial order algorithm along the
    lines of the FGLM algorithm with a complexity bound which is sub-cubic in
    $D$.
  \end{abstract}

  \begin{keyword}
    Gr\"obner basis, syzygies, complexity, fast linear algebra.
  \end{keyword}

\end{frontmatter}

\section{Introduction}
\label{sec:introduction}

In what follows, $\field$ is a field and we consider the polynomial ring $\ring
= \mvPolRing{\nvars}$. The set of \(\rdim \times \sdim\) matrices over a ring
\(\mathcal{R}\) is denoted by \(\mathcal{R}^{\rdim \times \sdim}\); when
orientation matters, a vector in \(\mathcal{R}^\sdim\) is considered as being
in \(\mathcal{R}^{1 \times \sdim}\) (row vector) or in \(\mathcal{R}^{\sdim
\times 1}\) (column vector). We are interested in the efficient computation of
relations, known as syzygies, between elements of a $\ring$-module $\module$.

Let us write the $\ring$-action on $\module$ as $(p,f) \in \ring\times \module
\mapsto p \mul f$, and let $f_1,\ldots,f_\rdim$ be in $\module$. Then, for a
given monomial order $\ord$ on $\ring^\rdim$, we want to compute the Gr\"obner
basis of the kernel of the homomorphism
\[
  \begin{array}{rcl}
    \ring^\rdim & \to & \module \\ 
    (p_1,\ldots,p_\rdim) & \mapsto & p_1\mul f_1 + \cdots + p_\rdim\mul f_\rdim.
  \end{array}
\]
This kernel is called the \emph{module of syzygies} of $(f_1,\dots,f_m)$ and
written $\modSyz{f_1,\dots,f_m}$.

In this paper, we focus on the case where $\module$ has finite dimension
$\vsdim$ as a $\field$-vector space; as a result, the quotient \(\ring^\rdim /
\modSyz{f_1,\dots,f_m}\) has dimension at most \(\vsdim\) as a
\(\field\)-vector space. Then one may adopt a linear algebra viewpoint detailed
in the next paragraph, where the elements of $\module$ are seen as row vectors
of length $\vsdim$ and the multiplication by the variables is represented by
so-called multiplication matrices. This representation was used and studied in
\cite{AuzingerStetter1988,Mourrain1999,AlonsoMarinariMora2003,KeKrRo05}, mainly
in the context where \(\module\) is a quotient \(\ring/\ideal\) for some ideal
\(\ideal\) (thus zero-dimensional of degree \(\vsdim\)) and more generally a
quotient \(\ring^\sdim / \nodule\) for some submodule
\(\nodule\subseteq\ring^\sdim\) with \(\sdim\in\NNp\) (see \cite[Sec.\,4.4
and\,6]{AlonsoMarinariMora2003}). This representation with multiplication
matrices allows one to perform computations in such a quotient via linear
algebra operations.

Assume we are given a $\field$-vector space basis $\basisM$ of $\module$. For
$i$ in $\{1,\dots,\nvars\}$, the matrix of the structure morphism $f \mapsto
\var_i \mul f$ with respect to this basis is denoted by $\mulmat{i}$; this
means that for $f$ in $\module$ represented by the vector $\evRow \in
\matRing[1][\vsdim]$ of its coefficients on $\basisM$, the coefficients of $X_i
\mul f \in \module$ on $\basisM$ are $\evRow \mulmat{i}$. We call
$\mulmat{1},\dots,\mulmat{\nvars}$ \emph{multiplication matrices}; note that
they are pairwise commuting. The data formed by these matrices defines the
module $\module$ up to isomorphism; we use it as a representation of
\(\module\). For $p$ in $\ring$ and for $f$ in $\module$ represented by the
vector $\evRow \in \matRing[1][\vsdim]$ of its coefficients on \(\basisM\), the
coefficients of $p\mul f \in \module$ on $\basisM$ are $\evRow \,
p(\mulmat{1},\ldots,\mulmat{\nvars})$; hereafter this vector is written $p \mul
\evRow$. From this point of view, syzygy modules can be described as follows.

\begin{definition}
  \label{dfn:syzygy_module}
  For \(\rdim\) and \(\vsdim\) in \(\NNp\), let $\mulmats =
  (\mulmat{1},\ldots,\mulmat{\nvars})$ be pairwise commuting matrices in
  $\matRing[\vsdim]$, and let $\evMat \in \matRing[\rdim][\vsdim]$. Denoting by
  $\row{f}_1,\ldots,\row{f}_m$ the rows of $\evMat$, for $\rel =
  (p_1,\ldots,p_\rdim) \in \relSpace$ we write
  \[
    \rel \mul \evMat
    = p_1 \mul \row{f}_1 + \cdots + p_\rdim \mul \row{f}_m
    = \row{f}_{1}\, p_1(\mulmats) + \cdots + \row{f}_{\rdim}\, p_\rdim(\mulmats) \in \matRing[1][\vsdim].
  \]
  The \emph{syzygy module} $\modRel$, whose elements are called syzygies for
  $\evMat$, is defined as
  \[
    \modRel = \{ \rel \in \relSpace \mid \rel \mul \evMat = \matz \};
  \] 
  as noted above, $\ring^\rdim/\modRel$ has dimension at most \(\vsdim\) as a
  \(\field\)-vector space.
\end{definition}

\noindent In particular, if in the above context $\evMat$ is the matrix of the
coefficients of $f_1,\dots,f_m \in \module$ on the basis $\basisM$, then
$\modRel = \modSyz{f_1,\dots,f_m}$. Our main goal in this paper is to give a
fast algorithm to solve the following problem (for the notions of monomial
orders and Gr\"obner basis for modules, we refer to \cite{Eisenbud95} and the
overview in \cref{sec:preliminaries}).

\begin{problembox}
  \problemInfo
  {Gr\"obner basis of syzygies}
  {pbm:grb}

  \dataInfos{Input}{
    \item a monomial order $\ord$ on $\ring^\rdim$,
    \item pairwise commuting matrices $\mulmats = (\mulmat{1},\ldots,\mulmat{\nvars})$ in $\matRing[\vsdim]$,
    \item a matrix $\evMat \in \evSpace{\vsdim}$. }

  \dataInfo{Output}{ the reduced $\ord$-Gr\"obner basis of $\modRel$.  }
\end{problembox}

\begin{example}
  \label{ex:hermite_pade}
  An important class of examples has $\nvars=1$; in this case, we are working
  with univariate polynomials. Restricting further, consider the case
  $\module=\field[X_1]/\langle X_1^\vsdim \rangle$ endowed with the canonical
  $\field[X_1]$-module structure; then $\mulmat{1}$ is the upper shift matrix,
  whose entries are all $0$ except those on the superdiagonal which are $1$.
  Given $f_1,\dots,f_\rdim$ in $\module$, $(p_1,\dots,p_\rdim) \in
  \field[X_1]^\rdim$ is a syzygy for $f_1,\dots,f_\rdim$ if $p_1 f_1 + \cdots +
  p_\rdim f_\rdim = 0 \bmod X_1^\vsdim$. Such syzygies are known as
  \emph{Hermite-Pad\'e approximants} of $(f_1,\dots,f_\rdim)$
  \cite{Hermite1893,Pade1894}. Using moduli other than $X_1^\vsdim$ leads one
  to generalizations such as \emph{M-Pad\'e approximants} or \emph{rational
  interpolants} (corresponding to a modulus that splits into linear factors)
  \cite{Mahler68,Beckermann92,BarBul92}.

  For $\nvars=1$, $\modRel$ is known to be free of rank $\rdim$. Bases of such
  $\field[X_1]$-modules are often described by means of their so-called
  \emph{Popov} form~\cite{Popov72,Kailath80}. In commutative algebra terms,
  this is a \emph{term over position} Gr\"obner basis. Another common choice is
  the \emph{Hermite} form, which is a \emph{position over term} Gr\"obner
  basis~\cite{KoRaTa07}.
\end{example}

\begin{example}
  \label{ex:gb_ideal}
  For arbitrary $\nvars$, let $\ideal$ be a zero-dimensional ideal in $\ring$
  and let $\module = \ring/\ideal$ with the canonical $\ring$-module structure.
  Then, taking $\rdim=1$ and $f_1=1 \in \module$, we have
  \[
    \modSyz{f_1} = \{ p \in \ring \mid p \, f_1 = 0 \} =
    \{ p \in \ring \mid p \in \ideal \} = \ideal.
  \]
  Suppose we know a Gr\"obner basis of $\ideal$ for some monomial order
  $\ord_1$, together with the corresponding monomial basis of $\module$, and
  the multiplication matrices of $X_1,\dots,X_r$ in \(\module\). Then
  solving~\cref{pbm:grb} amounts to computing the Gr\"obner basis of $\ideal$
  for the new order $\ord$.

  More generally for a given \(f_1=f \in \module\), the case \(\rdim=1\)
  corresponds to the computation of the \emph{annihilator} of \(f\) in
  \(\ring\), often denoted by \(\mathrm{Ann}_{\ring}(\{f\})\). Indeed, the
  latter set is defined as \(\{p \in \ring \mid p f = 0\}\), which is precisely
  \(\modSyz{f}\).
\end{example}

\begin{example}
 \label{eg:moller_buchberger}
  For an arbitrary $\nvars$, let $\bm{\alpha}_1,\dots,\bm{\alpha}_\vsdim$ be
  pairwise distinct points in $\field^\nvars$, with
  $\bm{\alpha}_i=(\alpha_{i,1},\dots,\alpha_{i,\nvars})$ for all $i$. Let
  $\ideal$ be the vanishing ideal of
  $\{\bm{\alpha}_1,\dots,\bm{\alpha}_\vsdim\}$ and $\module = \ring/\ideal$. As
  above, take $\rdim=1$ and $f_1=1$, so that $\modSyz{1}=\ideal$.

  The Chinese Remainder Theorem gives an explicit isomorphism $\module \to
  \field^\vsdim$ that amounts to evaluation at
  $\bm{\alpha}_1,\dots,\bm{\alpha}_\vsdim$. The multiplication matrices induced
  by this module structure on $\field^\vsdim$ are diagonal, with $\mulmat{j}$
  having diagonal $(\alpha_{1,j},\dots,\alpha_{\vsdim,j})$ for $1 \le j \le
  \nvars$. Taking $\evMat = [1~\cdots~1] \in \field^{1 \times D}$,
  solving~\cref{pbm:grb} allows us to compute the Gr\"obner basis of the
  vanishing ideal $\ideal$ for any given order $\ord$. This problem was
  introduced by M\"oller and Buchberger~\cite{MolBuc82}; it may be extended to
  cases where vanishing multiplicities are prescribed~\cite{MaMoMo93}.
\end{example}

\begin{example}
  \label{ex:matrix_moller_buchberger}
  Now we consider an extension of the M\"oller-Buchberger problem due to
  Kehrein, Kreuzer and Robbiano~\cite{KeKrRo05}. Given $\nvars$ pairwise
  commuting $d\times d$ matrices $\mat{N}_1, \dots, \mat{N}_\nvars$, we look
  for their ideal of syzygies, that is, the ideal $\ideal$ of polynomials
  $p\in\ring$ such that $p(\mat{N}_1, \dots, \mat{N}_\nvars) = \matz$. When
  $\nvars=1$, this ideal is generated by the minimal polynomial of $\mat{N}_1$.

  One may see this problem in our framework by considering $\module =
  \matRing[d][d]$ endowed with the $\ring$-module structure given by $X_k\cdot
  \mat{A} = \mat{A} \mat{N}_k$ for all $1 \le k \le \nvars$ and \(\mat{A} \in
  \matRing[d][d]\). The ideal \(\ideal\) defined above is the module of
  syzygies \(\modSyz{f}\) of the identity matrix \(f = \idMat{d} \in \module\),
  so we have $\rdim=d$ and $\vsdim=d^2$ here. To form the input of
  \cref{pbm:grb}, we choose as a basis of \(\module\) the list of elementary
  matrices \(\basisM =
  (\coordVec{1,1},\ldots,\coordVec{1,d},\ldots,\coordVec{d,1},\ldots,\coordVec{d,d})\)
  where \(\coordVec{i,j}\) is the matrix in \(\matRing[d][d]\) whose only
  nonzero entry is a \(1\) at index \((i,j)\). Then, for $k \in
  \{1,\ldots,\nvars\}$, the multiplication matrix $\mulmat{k}$ is the Kronecker
  product \(\idMat{d} \otimes \mat{N}_k\), that is, the block diagonal matrix
  in $\matRing[d^2][d^2]$ with \(d\) diagonal blocks, each of them equal to
  \(\mat{N}_k\). Besides, the input \(\mat{F} \in \matRing[1][d^2]\) is the
  vector of coordinates of \(f=\idMat{d}\) on the basis \(\basisM\), so that
  \(\ideal = \modSyz[\mulmats]{\mat{F}}\) where \(\mulmats =
  (\mulmat{1},\ldots,\mulmat{\nvars})\).
\end{example}

\begin{example}
  \label{ex:multivariate_pade}
  Our last example is a multivariate extension of Hermite-Pad\'e approximants
  and involves arbitrary parameters \(\nvars\ge 1\) and \(\rdim\ge 2\). For a
  positive integer $d$, consider the ideal
  $\ideal=\genBy{X_1^d,\ldots,X_\nvars^d}$, and let \(\module = \ring/\ideal\).
  Then for given $f_2,\ldots,f_\rdim$ in $\module$, which may be seen as
  polynomials truncated in degree $d$ in each variable, the syzygy module
  \(\modSyz{-1,f_2,\ldots,f_\rdim}\) is
  \[
    \{ (p,q_2,\ldots,q_\rdim) \in \ring^\rdim \mid
      p = q_2 f_2 + \cdots + q_\rdim f_\rdim \bmod \genBy{X_1^d,\ldots,X_\nvars^d} \}.
  \]
  It was showed in \cite[Thm.\,3.1]{Fitzpatrick97} that this module is
  generated by
  \[
    \{f_i \coordVec{1} + \coordVec{i} \mid 2 \le i \le \rdim\}
    \cup
    \{X_k^d \coordVec{i} \mid 1 \le k \le \nvars, 2 \le i \le \rdim\}
    ,
  \]
  where \(\coordVec{i}\) is the coordinate vector \((0,\ldots,0,1,0,\ldots,0)
  \in \ring^\rdim\) with $1$ at index $i$. In the same reference, two
  algorithms are given to find the Gr\"obner basis of this syzygy module for
  arbitrary monomial orders. One algorithm uses the FGLM change of order
  (extended to modules), based on the fact that the above generating set is a
  Gr\"obner basis for a well-chosen order. The other one proceeds iteratively
  on the $d^\nvars$ vanishing conditions; this approach is similar to the
  above-mentioned algorithm of \cite{MaMoMo93}, and can be seen as a
  multivariate generalization of the classical iterative algorithm for
  univariate Hermite-Pad\'e approximation \cite{BarBul92,BecLab94}.

  For the linear algebra viewpoint used in this paper, consider the
  $\field$-vector space basis $\basisM$ of \(\module\) formed by all monomials
  \(X_1^{\expnt_1} \cdots X_\nvars^{\expnt_\nvars}\) for \(0 \le \expnt_1,
  \ldots, \expnt_\nvars < d\) ordered by the lexicographic order \(\ordLex\)
  with \(X_\nvars \ordLex \cdots \ordLex X_1\). Computing bases of the above
  syzygy module is an instance of \cref{pbm:grb} with $\vsdim = d^\nvars$,
  taking for $\evMat$ the matrix of the coefficients of
  $(-1,f_2,\ldots,f_\rdim)$ on the basis $\basisM$, and for
  \(\mulmat{1},\ldots,\mulmat{\nvars}\) the multiplication matrices of
  $X_1,\ldots,X_\nvars$ in \(\module\) with respect to \(\basisM\). These
  matrices are types of block upper shift matrices which are nilpotent of order
  \(d\). Taking \(\nvars=2\) for example, \(\mulmat{1}\) is block-diagonal with
  all diagonal blocks equal to the \(d\times d\) upper shift matrix, while
  $\mulmat{2}$ is a matrix formed by $d$ rows and $d$ columns of $d \times d$
  blocks which are all zero except those on the (blockwise) superdiagonal which
  are equal to the $d\times d$ identity matrix.
\end{example}

\paragraph{\textbf{Main result}}

For $\nvars$ variables and an input module of vector space dimension $\vsdim$,
we design an algorithm whose cost is essentially that of performing fast linear
algebra operations with $\nvars$ scalar matrices of dimensions
$\vsdim\times\vsdim$. In the rest of the paper, $\expmatmul$ is a feasible
exponent for matrix multiplication over the field \(\field\); the best known
bound is $\expmatmul < 2.38$~\cite{CopWin90, LeGall14}.

\begin{theorem}
  \label{thm:grb}
  Let $\ord$ be a monomial order on $\relSpace$, let
  $\mulmat{1},\ldots,\mulmat{\nvars}$ be pairwise commuting matrices in
  $\matRing[\vsdim]$, and let $\evMat \in \matRing[\rdim][\vsdim]$. Then there
  is an algorithm which solves \Cref{pbm:grb} using
  \begin{align*}
    \bigO{\rdim \vsdim^{\expmatmul-1} +  \vsdim^\expmatmul (\nvars + \log(d_1 \cdots d_\nvars))} 
    \subset \bigO{\rdim \vsdim^{\expmatmul-1} +  \nvars \vsdim^\expmatmul \log(\vsdim)}
  \end{align*}
  operations in \(\field\), where $d_k \in \{1,\ldots,\vsdim\}$ is the degree
  of the minimal polynomial of $\mulmat{k}$, for $1 \le k \le \nvars$.
\end{theorem}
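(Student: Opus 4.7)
The plan is to generalize the FGLM change-of-order algorithm to the module setting. The algorithm enumerates monomials $\varepsilon$ of $\relSpace$ in $\ord$-increasing order, maintaining for each the vector $\varepsilon \mul \evMat \in \matRing[1][\vsdim]$, and tests whether this vector lies in the $\field$-span of those already computed for $\ord$-smaller monomials. A dependence yields, after normalization, an element of the reduced $\ord$-Gr\"obner basis of $\modRel$; independence appends $\varepsilon$ to the \emph{staircase} $\expSet$, which upon termination forms a $\field$-basis of $\relSpace/\modRel$. Since the staircase has size at most $\vsdim$, the tested border monomials lie in $\{\coordVec{1},\ldots,\coordVec{\rdim}\} \cup \{\var_k \mul \varepsilon : \varepsilon \in \expSet,\ 1 \le k \le \nvars\}$, so at most $\rdim + \nvars \vsdim$ vectors $\varepsilon \mul \evMat$ are ever generated. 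Correctness is the classical FGLM argument, adapted to modules and to arbitrary $\ord$.

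To hit the announced cost, the linear algebra must be performed by fast subroutines and in large batches rather than one monomial at a time. First I would reduce the $\rdim$ rows of $\evMat$ to row echelon form in $\bigO{\rdim \vsdim^{\expmatmul-1}}$ operations, which identifies the staircase members among $\coordVec{1},\ldots,\coordVec{\rdim}$ and produces the initial border relations; this accounts for the $\rdim \vsdim^{\expmatmul-1}$ term. Next I would precompute the powers $\mulmat{k}^{2^j}$ for $0 \le j \le \lceil \log_2 d_k \rceil$ and each $1 \le k \le \nvars$ by repeated squaring, at total cost $\bigO{\vsdim^\expmatmul \log(d_1 \cdots d_\nvars)}$. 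Using these powers, monomials of the form $\var_k^\expnt \mul \varepsilon$ can be applied to the current staircase rows by combining $\lceil \log_2 \expnt \rceil$ fast $\vsdim \times \vsdim$ matrix products, so that batches corresponding to successive $\ord$-levels can be computed with only $\bigO{\nvars}$ fast matrix products in total, delivering the $\bigO{\nvars \vsdim^\expmatmul}$ term. Interleaved row-echelon updates of total cost within $\bigO{\vsdim^\expmatmul}$ detect dependencies, extend the staircase, and output the reduced Gr\"obner basis elements as coordinate vectors of border images on the staircase.

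The main obstacle is to make batching compatible with the strict $\ord$-linear ordering that FGLM requires: at each step the dependence test for a monomial $\varepsilon$ must compare its vector $\varepsilon \mul \evMat$ against those of \emph{all} $\ord$-smaller monomials, so batches must be arranged so that every $\ord$-smaller border candidate is already processed before a larger one is considered. This forces a careful interleaving of the doubling schedule on the powers $\mulmat{k}^{2^j}$ with the level-by-level expansion of $\expSet$, together with a correct propagation of the echelon form across levels so that normal forms, and hence reduced Gr\"obner basis elements, emerge without redundant work. A secondary point is that the dependence on the monomial order $\ord$ must not inflate the cost: the batching must be described intrinsically in terms of $\ord$-levels of the grown staircase, not by unrolling the exponents in a fixed coordinate system. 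The simpler form $\bigO{\rdim \vsdim^{\expmatmul-1} + \nvars \vsdim^\expmatmul \log \vsdim}$ of the bound then follows from $d_k \le \vsdim$ for all $k$.
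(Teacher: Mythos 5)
Your high-level architecture coincides with the paper's: linearize the problem into a Krylov-type matrix over $\field$, use Keller-Gehrig repeated squaring of the multiplication matrices to find its row rank profile (which identifies the $\ord$-monomial basis of $\relSpace/\modRel$), then read off the border/leading monomials and compute normal forms to produce the reduced $\ord$-Gr\"obner basis. The cost breakdown you propose, $\bigO{\rdim\vsdim^{\expmatmul-1}}$ for the initial rank profile of $\evMat$, $\bigO{\vsdim^\expmatmul \log(d_1\cdots d_\nvars)}$ for the doublings, and $\bigO{\nvars\vsdim^\expmatmul}$ for the per-variable base steps and the final normal forms, matches the paper's accounting.

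However, there is a genuine gap precisely at the point you flag as the ``main obstacle,'' and your proposed fix points in the wrong direction. You state that ``batches must be arranged so that every $\ord$-smaller border candidate is already processed before a larger one is considered'' and that the batching must be described ``intrinsically in terms of $\ord$-levels.'' The paper's algorithm does the opposite: it processes the variables \emph{one at a time} in a fixed order ($X_1$, then $X_2$, \ldots, then $X_\nvars$), running the full doubling schedule for $\mulmat{k}$ before touching $\mulmat{k+1}$, which in general does \emph{not} respect the temporal order induced by $\ord$ (e.g.\ for a degree order, $X_2\coordVec{1} \ord X_1^{\vsdim}\coordVec{1}$, yet $X_1^{\vsdim}\coordVec{1}$ is processed first). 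What makes this valid is a nontrivial invariance lemma (Lemma 3.11 in the paper, using the commutativity of the $\mulmat{k}$): the row rank profile of the partial multi-Krylov matrix $\mat{C}_{k,e+1}$ --- with its rows sorted by $\ord$ --- is a subset of the rows obtained by taking the previous rank-profile rows of $\mat{C}_{k,e}$ together with their images under $\mulmat{k}^{2^e}$. It is this lemma, plus a stabilization argument (Lemmas 3.12--3.13) bounding the number of doublings by $\lceil\log_2 d_k\rceil$, that lets the algorithm compute the correct $\ord$-rank-profile without ever enforcing $\ord$-compatibility of the processing order. Your plan does not supply this lemma, and a scheme that actually tries to enforce $\ord$-compatible batching (as you describe) would either revert to one level per step --- losing the $\log$ factor and the claimed bound --- or require its own, unproved, argument that such a schedule can be realized with only $\bigO{\nvars + \log(d_1\cdots d_\nvars)}$ fast products.
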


This theorem is proved in \cref{sec:syzygy}, based on
\cref{algo:syzygy_module_basis} which is built upon \cref{algo:monomial_basis}
(computing the monomial basis) and \cref{algo:normal_forms} (computing normal
forms). Commonly encountered situations involve $\rdim \le \vsdim$ (see
\cref{ex:gb_ideal,eg:moller_buchberger,ex:matrix_moller_buchberger}), in which
case the cost bound can be simplified as $\bigO{\nvars \vsdim^\expmatmul
\log(\vsdim)}$. The interest in the more precise cost bound involving
$d_1,\ldots,d_\nvars$ comes from situations such as that of
\cref{ex:multivariate_pade}, where \(d_1 \cdots d_\nvars = d^\nvars = \vsdim\)
since all the matrices $\mulmat{1},\ldots,\mulmat{\nvars}$ have a minimal
polynomial of degree \(d\); in that case, the first cost bound in the theorem
becomes $\bigO{\rdim \vsdim^{\expmatmul-1} +  \nvars \vsdim^\expmatmul +
\vsdim^\expmatmul \log(\vsdim)}$. Another refinement of the cost bound is given
in \cref{rmk:krylov_lextop} for a particular order \(\ord\), namely the term
over position lexicographic order.

Our algorithm deals with the multiplication matrices
$\mulmat{1},\ldots,\mulmat{\nvars}$ one after another, allowing us to rely on
an approach inspired by that designed for the univariate case \(\nvars=1\) in
\citep{JeNeScVi17}. This also helps us to introduce fast matrix multiplication,
by avoiding the computation of many vector-matrix products involving each time
a different matrix $\mulmat{k}$, and instead grouping these operations into
some matrix-matrix products involving $\mulmat{1}$, then some others involving
$\mulmat{2}$, etc. To our knowledge, this is the first time a general answer is
given to~\cref{pbm:grb}. For problems such as
\cref{ex:gb_ideal,eg:moller_buchberger,ex:matrix_moller_buchberger,ex:multivariate_pade},
ours is the first algorithm that relies on fast linear algebra techniques, with
the partial exception of~\cite{FaGaHuRe13}, as discussed below.

Our cost bound can be compared to the input and output size of the problem. The
input matrices are represented using $\rdim \vsdim + \nvars \vsdim^2$ field
elements, and we will see in~\cref{sec:preliminaries,sec:syzygy} that one can
represent the output Gr\"obner basis using at most $\rdim \vsdim + \nvars
\vsdim^2$ field elements as well.

\paragraph{\textbf{Overview of the algorithm}}

To introduce matrix multiplication in our solution to \cref{pbm:grb}, we rely
on a linearization into linear algebra problems over $\field$. From $\mulmats$
and $\evMat$, we build a matrix over $\field$ whose nullspace corresponds to a
set of syzygies in $\modRel$. This matrix is called a \emph{multi-Krylov
matrix}, in reference to its structure which exhibits a collection of Krylov
subspaces of $\field^\vsdim$.

The multi-Krylov matrix is a generalization to several variables and to an
arbitrary monomial order of the (striped-)Krylov matrices considered for
example in \cite{Kailath80,BecLab00}; already in \cite[Chap.\,6]{Kailath80},
Popov and Hermite bases of modules over the univariate polynomials are obtained
by means of Krylov matrix computations. Its construction is similar to the
Sylvester matrix and more generally to the Macaulay matrix
\cite{Sylvester1853,Macaulay1902,Macaulay1916}, which are commonly used when
adopting a linear algebra viewpoint while dealing with operations on univariate
and multivariate polynomials.

In what follows, given $\expnts=(\expnt_1,\dots,\expnt_\nvars)$ in
$\NN^\nvars$, we write $\vars^\expnts= X_1^{\expnt_1}\cdots
X_\nvars^{\expnt_\nvars}$ and $\mulmats^\expnts = \mulmat{1}^{\expnt_1} \cdots
\mulmat{\nvars}^{\expnt_\nvars}$; the $i$th coordinate vector in $\field^\rdim$
is written $\coordVec{i}$. Then the construction of the multi-Krylov matrix is
based on viewing the product $\vars^{\expnts} \coordVec{i} \mul \evMat$, for a
monomial $\vars^\expnts \coordVec{i} \in \relSpace$ and $\evMat$ in
$\matRing[\rdim][\vsdim]$, as $\row{f}_i \mulmats^\expnts$, where $\row{f}_i$
is the $i$th row of $\mat{F}$. Since a polynomial in $\relSpace$ is a
$\field$-linear combination of monomials, this identity means that a syzygy in
$\modRel$ may be interpreted as a $\field$-linear relation between row vectors
of the form $\row{f}_{i} \mulmats^{\expnts}$.

Choosing some degree bounds $\maxDegs=(\maxDeg_1,\dots,\maxDeg_\nvars) \in
\NNp^\nvars$, our multi-Krylov matrix is then formed by all such rows
$\row{f}_{i} \mulmats^{\expnts}$, for $1 \le i \le \rdim$ and $\tuple{0} \le
\expnts < \maxDegs$ entrywise, ordered according to the monomial order
\(\ord\). For sufficiently large $\maxDegs$ (taking $\beta_k = \vsdim$ for all
$k$ is enough, for instance), we show in \cref{sec:syzygy:monbas} that the row
rank profile of this matrix corresponds to the $\ord$-monomial basis of the
quotient $\relSpace/\modRel$.

The main task of our algorithm is to compute this row rank profile. Adapting
ideas in the algorithms of \cite{FaGiLaMo93,MaMoMo93,Fitzpatrick97}, one would
iteratively consider the rows of the matrix, looking for a linear relation with
the previous rows by Gaussian elimination. When such a linear relation is
found, the corresponding row can be discarded. Now, the multi-Krylov structure
further permits to discard all the rows that correspond to monomial multiples
of the leading term of the discovered syzygy, even before computing these rows.
At some point, the set of rows to be considered is exhausted, and we can deduce
the row rank profile.

In this approach, a row of the multi-Krylov matrix is computed by multiplying
one of the already computed rows by one of the multiplication matrices. This
results in many vector-matrix products, with possibly different matrices each
time: this is an obstacle towards incorporating fast matrix multiplication. We
circumvent this by introducing the variables one after another, thus seemingly
not respecting the order of the rows specified by the monomial order; yet, we
will manage to ensure that this order is respected in the end. When dealing
with one variable $X_k$, we process successive powers $\mulmat{k}^{2^\expnt}$
in the style of Keller-Gehrig's algorithm \cite{KelGeh85}, using a logarithmic
number of iterations. 

Finally, from the monomial basis, one can easily find the minimal generating
set of the leading module of $\modRel$. The union of the monomial basis and of
these generators is a set of monomials, which thus corresponds to a submatrix
of the multi-Krylov matrix; the left nullspace of this submatrix, computed in
reduced row echelon form, yields the reduced $\ord$-Gr\"obner basis of
syzygies.

\paragraph{\textbf{Previous work}}

An immediate remark is that the number of field entries of the multi-Krylov
matrix is $\rdim \maxDeg_1\cdots\maxDeg_\nvars\vsdim \in \bigO{\rdim
\vsdim^{\nvars+1}}$, which significantly exceeds our target cost. Exploiting
the structure of this matrix is therefore a common thread in all efficient
algorithms.

For the univariate case $\nvars=1$, first algorithms with cost quadratic in
$\vsdim$ were given in~\cite{Sergeyev87,Paszkowski87} for Hermite-Pad\'e
approximation (\cref{ex:hermite_pade}); they returned a single syzygy of small
degree. Later, work on this case \cite{BarBul91,BecLab94} showed how to compute
a basis of the module of syzygies in time $\bigO{\rdim \vsdim^2}$ and the more
general M-Pad\'e approximation was handled in the same complexity in
\cite{BarBul92,Beckermann92,BecLab97,BecLab00}, with the algorithm of
\cite{BecLab00} returning the reduced \(\ord\)-Gr\"obner basis (called the
shifted Popov form in that context) of syzygies for an arbitrary monomial order
\(\ord\). Then a cost quasi-linear in \(\vsdim\) was achieved for M-Pad\'e
approximation by a divide and conquer approach based on fast polynomial matrix
multiplication \cite{BecLab94,GiJeVi03,JeNeScVi17}, with the most recent
algorithms returning the reduced \(\ord\)-Gr\"obner basis of syzygies for an
arbitrary \(\ord\) at such a cost \cite{JeNeScVi16,JeannerodNeigerVillard2019}.
M-Pad\'e approximation exactly corresponds to instances of \cref{pbm:grb} with
$\nvars=1$ and a multiplication matrix in Jordan normal form (which is further
nilpotent for Hermite-Pad\'e approximation) \cite{JeNeScVi17}. For achieving
such costs, that are better than quadratic in \(\vsdim\), it is necessary that
the multiplication matrix exhibits such a structure and that the algorithm
takes advantage of it, since in general merely representing the input
multiplication matrix already requires \(\vsdim^2\) field elements. 

Still for \(\nvars=1\), the case of an upper triangular multiplication matrix
\(\mulmat{1}\) was handled in \cite[Algo.\,FFFG]{BecLab00} by relying on the
kind of linearization discussed above. This algorithm exploits the triangular
shape to iterate on the $\vsdim$ leading principal submatrices of
\(\mulmat{1}\), which extends the iteration for M-Pad\'e approximation in
\cite{BarBul92,Beckermann92,BecLab97} and costs \(\bigO{\rdim \vsdim^2 +
\vsdim^3}\) operations (see \cite[Prop.\,6.5]{Neiger16a} for a complexity
analysis). To take advantage of fast matrix multiplication, another approach
was designed in \cite[Sec.\,7]{JeNeScVi17}, considering the same Krylov matrix
as in \cite{BecLab00} but processing its structure in the style of
Keller-Gehrig's algorithm \cite{KelGeh85}. The algorithm in \cite{JeNeScVi17}
supports an arbitrary matrix \(\mulmat{1}\) at the cost \(\bigO{\rdim
\vsdim^{\expmatmul-1} + \vsdim^\expmatmul \log(d)}\), and the algorithm in this
paper can be seen as a generalization of it to the multivariate case since both
coincide when \(\nvars=1\) (up to the conversion between shifted Popov forms and
reduced \(\ord\)-Gr\"obner bases). This generalization is not straightforward:
besides the obvious fact that the output basis usually has more than \(\rdim\)
elements because most submodules of \(\ring^\rdim\) are not free (unlike in the
univariate case), as highlighted above the multivariate case also involves a
more complex management of the order in which rows in the multi-Krylov matrix
are inserted and processed, in relation with the monomial order \(\ord\) and
the successive introductions of the different variables.

On the other hand, previous algorithms dealing with the case $\nvars \ge 1$
were developed independently of this line of work, starting from M\"oller and
Buchberger's algorithm~\cite{MolBuc82} and Faug\`ere \emph{et al.}'s FGLM
algorithm~\cite{FaGiLaMo93}. The former computes the ideal of a finite set of
points (\cref{eg:moller_buchberger}), while the latter is specialized to the
change of monomial order for ideals (\cref{ex:gb_ideal}), with a cost bound in
$\bigO{\nvars \vsdim^3}$. Note however that the input in the FGLM algorithm is
not the same as in \cref{pbm:grb}; this is discussed below.

A first generalization of \cite{MolBuc82,FaGiLaMo93} was presented in
\cite[Algo.\,1]{MaMoMo93}, still in the case $\rdim=1$ and \(f_1=1\): the input
describes \(\module = \ring/\ideal\) for some zero-dimensional ideal \(\ideal\)
of degree \(\vsdim\), and the algorithm outputs a \(\ord\)-Gr\"obner basis of
$\ideal$. The cost bound for this algorithm involves a term $\bigO{\nvars
\vsdim^3}$, but also a term related to the description of the input. This input
description is different from ours: it consists of a set of linear functionals
which defines the ideal \(\ideal\); thus one should be careful when comparing
this work to our results. Another related extension of \cite{MolBuc82} is the
Buchberger-M\"oller algorithm for matrices given in
\cite[Sec.\,4.1.2]{KeKrRo05}, which solves \cref{ex:matrix_moller_buchberger};
the runtime is not specified in that reference.

Another type of generalizations of \cite{MolBuc82} was detailed in
\cite[Algo.\,2]{MaMoMo93}, \cite[Algo.\,4.7]{Fitzpatrick97}, and
\cite[Algo.\,3.2]{OKeeFit02}, with the last reference tackling syzygy modules
with \(\rdim\ge1\) and arbitrary \(f_1,\ldots,f_\rdim\) like in this paper, yet
with assumptions on \(\module\); the cost bounds given in \cite{MaMoMo93} and
\cite{Fitzpatrick97} involve a term in $\bigO{\nvars \vsdim^3}$ whereas
\cite{OKeeFit02} does not report on complexity bounds. The assumptions on
\(\module\) are specified in the input of \cite[Algo.\,2]{MaMoMo93}, in
\cite[Eqn.~(4.1)]{Fitzpatrick97}, and in \cite[Eqn.~(5)]{OKeeFit02}, and they
imply that one can solve the problem by finding iteratively Gr\"obner bases for
a sequence of ``approximating'' modules of syzygies which decrease towards the
actual solution. Such assumptions lead to instances of \cref{pbm:grb} which
generalize to \(\nvars\ge 1\) the above-mentioned cases of M-Pad\'e
approximation and of a triangular multiplication matrix when \(\nvars=1\);
\cite[Sec.\,5]{OKeeFit02} explicitly mentions the link with Beckermann and
Labahn's algorithm for M-Pad\'e approximation \cite{BecLab97,BecLab00}. In both
the univariate and multivariate settings, it seems that such an iterative
approach cannot be applied to the general case of \cref{pbm:grb}, where the
input module has no other property than being finite-dimensional, and is
described through arbitrary commuting matrices.

For the particular case of the change of order for ideals (\cref{ex:gb_ideal}),
so with $\rdim=1$, when the target order is the lexicographic order $\ordLex$,
and under the assumption that the ideal is in Shape Position, fast matrix
multiplication was used for the first time in \cite{FaGaHuRe14}, yielding a
sub-cubic complexity. Indeed, if for example \(X_\nvars\) is the smallest
variable, these assumptions ensure that only $\mulmat{\nvars}$ is needed; with
this matrix as input, \cite[Prop.\,3]{FaGaHuRe14} gives a probabilistic
algorithm to compute the $\ordLex$-Gr\"obner basis of syzygies within the cost
bound $\bigO{\vsdim^\expmatmul \log(\vsdim) + \nvars \polmultime{\vsdim}
\log(\vsdim)}$. Besides ideas from \cite{FauMou11,FauMou16}, this uses
repeating squaring as in \cite{KelGeh85}. In this paper, we manage to
incorporate fast matrix multiplication without assumption on the module, and
for an arbitrary order.

Still for the particular of \cref{ex:gb_ideal}, Faug\`ere and Mou give in
\cite{FauMou11,FauMou16} probabilistic algorithms based on sparse linear
algebra. These papers do not consider the computation of the multiplication
matrices, which are assumed to be known.  While we do not make any sparsity
assumption on the multiplication matrices, for the sake of comparison we still
summarize this approach below. Noticing that the multiplication matrices
arising from the context of polynomial system solving are often sparse,
\cite{FauMou11,FauMou16} tackle \cref{pbm:grb} from a point of view similar to
the Wiedemann algorithm. Evaluating the monomials in $\ring/\ideal$ at certain
linear functionals allows one to build a multi-dimensional recurrent sequence
which admits $\ideal$ as its ideal of syzygies (this is only true for some type
of ideals $\ideal$). In terms of the multi-Krylov matrices we are considering
in \cref{sec:syzygy} concerning \cref{pbm:grb}, this is similar to introducing
an additional projection on the right of the multiplication matrices to take
advantage of the sparsity by using a black-box point of view. Then recovering a
$\ordLex$-Gr\"obner basis of this ideal of syzygies can be done via the
Berlekamp-Massey-Sakata algorithm \cite{Sakata90}, or the recent improvements
in \cite{BerBoyFau15,BerBoyFau16,BerFau16,BerFau18}.

\paragraph{\textbf{Application: change of order}}

The FGLM algorithm \cite{FaGiLaMo93} solves the change of order problem for
Gr\"obner bases of ideals in \(\bigO{\nvars \vsdim^3}\) operations for
arbitrary orders \(\ord_1\) and \(\ord_2\): starting \emph{only} from a
\(\ord_1\)-Gr\"obner basis $\gb_1$ for the input order $\ord_1$, it computes
the \(\ord_2\)-Gr\"obner basis \(\gb_2\) of the ideal \(\ideal =
\genBy{\gb_1}\). Following \cite[Sec.\,2.1]{FauMou11}, one can view the
algorithm as a two-step process: it first computes from \(\gb_1\) the
multiplication matrices of \(\module=\ring/\ideal\) with respect to the
\(\ord_1\)-monomial basis, and then finds \(\gb_2\) as a set of
\(\field\)-linear relations between certain normal forms modulo $\gb_1$. The
algorithm extends to the case of submodules of \(\ring^\sdim\) for
\(\sdim\ge1\) (see e.g.~\cite[Sec.\,2]{Fitzpatrick97}).

Our algorithm for \cref{pbm:grb} incorporates fast linear algebra into the
second step, so once the multiplication matrices are known, one can find the
reduced $\ord_2$-Gr\"obner basis in $\bigO{\sdim\vsdim^{\expmatmul-1} + \nvars
\vsdim^\expmatmul \log(\vsdim)}$ operations. We now discuss how fast linear
algebra may be incorporated into the computation of the multiplication matrices
(\cref{pbm:mulmat}).

\begin{problembox}
  \problemInfo
  {Computing the multiplication matrices}
  {pbm:mulmat}

  \dataInfos{Input}{
    \item a monomial order $\ord$ on $\ring^\sdim$,
    \item a reduced $\ord$-Gr\"obner basis $\{f_1,\ldots,f_\ngens\} \subset
      \ring^\sdim$ such that $\module=\ring^\sdim/\genBy{f_1,\ldots,f_\ngens}$ has finite
      dimension as a \(\field\)-vector space. }

  \dataInfo{Output}{%
    the multiplication matrices $\mulmat{1},\ldots,\mulmat{\nvars}$ of
    $X_1,\ldots,X_\nvars$ in $\module=\ring^\sdim/\genBy{f_1,\ldots,f_\ngens}$ with
    respect to its $\ord$-monomial basis.  }
\end{problembox}

Our solution to this problem finds its roots in \cite[Sec.\,4]{FaGaHuRe14},
which focuses on the case where $\sdim=1$ and \(\ideal =
\genBy{f_1,\ldots,f_\ngens}\) is an ideal in \(\ring\). In the context studied
in this reference only the matrix \(\mulmat{\nvars}\) of the smallest variable
\(X_\nvars\) is needed, and it is showed that this matrix can be simply read
off from the input Gr\"obner basis without arithmetic operations, under some
structural assumption on the ideal of leading terms of \(\ideal\) described in
\cite[Prop.\,7]{FaGaHuRe14}.  Here we consider the more general case of
submodules \(\nodule = \genBy{f_1,\ldots,f_\ngens}\) of \(\ring^\sdim\), and we
design an algorithm which computes all multiplication matrices
\(\mulmat{1},\ldots,\mulmat{\nvars}\) in $\module=\ring^\sdim/\nodule$ using
$\bigO{\nvars \vsdim^\expmatmul \log(\vsdim)}$ operations, under a structural
assumption on the module of leading terms of \(\nodule\). Situations where this
assumption on \(\genBy{\ltMod{\nodule}}\) holds typically involve a monomial order
such that \(X_\nvars\coordVec{i} \ord \cdots \ord X_1\coordVec{i}\) for \(1\le
i\le\sdim\). The assumption, described below, naturally extends the one
from \cite{FaGaHuRe14} to the case of submodules.

\begin{definition}
  \label{dfn:structure_monomial}
  For a monomial submodule $\monmod \subseteq \ring^\sdim$, the assumption
  \(\hyp{\monmod}\) is: ``for all monomials $\mu \in \monmod$, for all \(j \in
  \{1,\ldots,\nvars\}\) such that \(X_j\) divides \(\mu\), for all
  \(i\in\{1,\ldots,j-1\}\), the monomial $\frac{X_i}{X_j} \mu$ belongs to
  $\monmod$''.
\end{definition}

\noindent In fact, instead of considering all monomials in \(\monmod\), one can
observe that \(\hyp{\monmod}\) holds if and only if the property holds for each
monomial in the minimal generating set of \(\monmod\) (see
\cref{lem:structural_assumption_mingens}).

\begin{theorem}
  \label{thm:mulmat}
  For \(\sdim\ge1\), let $\ord$ be a monomial order on $\ring^\sdim$ and let
  $\{f_1,\ldots,f_\ngens\}$ be a reduced $\ord$-Gr\"obner basis defining a
  submodule \(\nodule = \genBy{f_1,\ldots,f_\ngens}\) of $\ring^\sdim$ such
  that $\ring^\sdim/\nodule$ has dimension \(\vsdim\) as a \(\field\)-vector
  space.  Assuming $\hyp{\genBy{\ltMod{\nodule}}}$,
  \begin{itemize}
    \item \cref{pbm:mulmat} can be solved using $\bigO{\nvars \vsdim^\expmatmul
      \log(\vsdim)}$ operations in $\field$;
    \item the change of order problem, that is, computing the reduced
      \(\ord_2\)-Gr\"obner basis of \(\nodule\) for a monomial order
      \(\ord_2\), can be solved using $\bigO{\sdim\vsdim^{\expmatmul-1} +
      \nvars \vsdim^\expmatmul \log(\vsdim)}$ operations in $\field$.
  \end{itemize}
\end{theorem}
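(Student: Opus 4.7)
The second bullet follows from the first combined with Theorem~\ref{thm:grb}. Given the matrices $\mulmats = (\mulmat{1},\ldots,\mulmat{\nvars})$ produced by the first bullet, let $\evMat \in \matRing[\sdim][\vsdim]$ be the matrix whose $i$th row encodes the coordinates of $\coordVec{i} \in \ring^\sdim$, viewed as an element of $\module = \ring^\sdim/\nodule$, on the $\ord$-monomial basis. A vector $(p_1,\ldots,p_\sdim) \in \ring^\sdim$ satisfies $p_1 \coordVec{1} + \cdots + p_\sdim \coordVec{\sdim} = 0$ in $\module$ precisely when $(p_1,\ldots,p_\sdim) \in \nodule$, so $\modRel = \nodule$. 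Feeding $\ord_2, \mulmats, \evMat$ into the algorithm of Theorem~\ref{thm:grb} then returns the reduced $\ord_2$-Gröbner basis of $\nodule$ in $\bigO{\sdim \vsdim^{\expmatmul-1} + \nvars \vsdim^\expmatmul \log(\vsdim)}$ operations, and absorbing the $\bigO{\nvars \vsdim^\expmatmul \log(\vsdim)}$ cost of the first bullet stays within the same bound.

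For the first bullet, the plan is to assemble each $\mulmat{k}$ directly from the reduced Gröbner basis, using $\hyp{\genBy{\ltMod{\nodule}}}$ to organise the computation into a logarithmic number of matrix-matrix products per variable. The $\ord$-monomial basis $\monbas$ of $\module$ is the set of monomials outside $\genBy{\ltMod{\nodule}}$ and is read off from the leading terms $\ltMod{f_1},\ldots,\ltMod{f_\ngens}$. For each pair $(k,\mu)$ with $\mu \in \monbas$ and $X_k \mu \in \monbas$, the corresponding row of $\mulmat{k}$ is a coordinate vector and requires no arithmetic; only the pairs $(k,\mu)$ with $X_k \mu \in \genBy{\ltMod{\nodule}}$ are nontrivial. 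I would process the variables from $X_\nvars$ down to $X_1$: for the smallest variable $X_\nvars$, the ``leftward shift'' stability granted by $\hyp{\cdot}$ (replacing $X_j$ by $X_i$ for $i<j$) forces each such $X_\nvars \mu$ to coincide with some leading term $\ltMod{f_\ell}$, so that $\mulmat{\nvars}$ can be read off from the tails of the $f_\ell$'s with no field operations, extending \cite[Prop.~7]{FaGaHuRe14} from ideals to submodules. For each subsequent variable $X_k$, the structural assumption lets the required normal forms factor through the matrices $\mulmat{k+1},\ldots,\mulmat{\nvars}$ already computed, and I would then compute $\mulmat{k}$ via a Keller-Gehrig-style iterated squaring~\cite{KelGeh85}: in $\bigO{\log(\vsdim)}$ rounds, each performing a constant number of $\vsdim \times \vsdim$ matrix products at cost $\bigO{\vsdim^\expmatmul}$. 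Summing over $k$ gives the $\bigO{\nvars \vsdim^\expmatmul \log(\vsdim)}$ total cost.

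The main obstacle is translating $\hyp{\genBy{\ltMod{\nodule}}}$ into a precise recipe for expressing each nontrivial normal form as a product of matrices that have already been computed at the current stage, so that the iterated-squaring scheme really applies. This requires tracking how the leftward-shift stability propagates through the staircase $\monbas$, which via \cref{lem:structural_assumption_mingens} should reduce to a statement about the minimal generators of $\genBy{\ltMod{\nodule}}$. Once this structural lemma is in place, both the correctness and the cost analysis are routine: each round is a single $\vsdim \times \vsdim$ matrix product of cost $\bigO{\vsdim^\expmatmul}$, and the number of rounds is $\bigO{\log(\vsdim)}$ per variable.
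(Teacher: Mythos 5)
Your reduction of the second bullet to the first is exactly what the paper does (\cref{algo:change_order}, \cref{prop:algo:change_order}): build $\evMat \in \matRing[\sdim][\vsdim]$ from the normal forms of the $\coordVec{i}$, check $\modRel = \nodule$, and invoke \cref{thm:grb}. That part is fine.

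For the first bullet, your broad strategy matches the paper — read $\mulmat{\nvars}$ off the reduced Gr\"obner basis via the structural assumption (\cref{lem:bivariate_mulmaty}), then descend through $X_{\nvars-1},\dots,X_1$ using Keller-Gehrig-style repeated squaring. But the step you defer (``translating $\hyp{\cdot}$ into a precise recipe'') is not a routine structural lemma; it is precisely the point where the naive plan breaks, and the paper explicitly flags it (end of \cref{sec:mulmat:overview}, after \cref{rmk:bivariate_mulmatx}). For $\nvars > 2$, some rows of $\mulmat{\nvars-2}$ are normal forms $\nf{X_{\nvars-2}\basVec{j}}$ obtained by writing $X_{\nvars-2}\basVec{j} = X_{\nvars-1}^a X_\nvars^b f$ with $f \in \LM$ and $b > 0$; computing these requires Krylov evaluation with $\mulmat{\nvars}$ starting from $\nf{f}$, passing through intermediate vectors $\nf{X_\nvars^c f}$ that are \emph{not} rows of $\mulmat{\nvars}$ (they are normal forms of border monomials, not of $X_\nvars\basVec{j}$'s). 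If at each stage you only retain the matrices $\mulmat{k+1},\dots,\mulmat{\nvars}$, these intermediate normal forms are gone, and the iterated-squaring scheme cannot be seeded correctly.

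The paper's resolution is a change of invariant: instead of carrying ``the matrices computed so far,'' the algorithm carries the normal forms $\nf{\nextExpSets{i}}$ of a growing set of monomials, with $\nextExpSets{\nvars} = \monbas \cup \LM$ increasing to $\nextExpSets{1} = \monbas \cup \border$. \cref{lem:expSeti} shows that the increment $\expSet_i = \nextExpSets{i} - \nextExpSets{i+1}$ consists of monomials of the form $X_{i+1}^e f$ with $f \in \nextExpSets{i+1} \cap \border$, so each step is a \emph{single-matrix} Krylov evaluation with $\mulmat{i+1}$ (\cref{lem:nextvar_iteration}), and \cref{lem:read_mulmat} then shows $\mulmat{i}$ can be read off from $\nf{\nextExpSets{i}}$. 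Your proposal, as written, would need this reformulation — ``compute $\mulmat{k}$'' is the wrong target at each stage; ``extend the set of known normal forms along powers of $X_{k+1}$'' is the right one. Your cost accounting (``constant number of $D\times D$ products per round'') is also too coarse, since the Krylov matrices at stage $i$ have $\sigma_i = \lvert\expSet_i\rvert$ rows with $\sum_i \sigma_i$ of order $\nvars\vsdim$; the bound survives via the finer analysis in \cref{lem:krylov_evaluation} and \cref{prop:algo:multiplication_matrices}, but not by the argument you give.
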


Concerning the first item, an overview of our approach is presented in
\cref{sec:mulmat:overview} and the detailed algorithms and proofs are in
\cref{sec:mulmat:krylov_evaluation,sec:mulmat:computing_mulmat}, with a
slightly refined cost bound in \cref{prop:algo:multiplication_matrices}. The
second item is proved in \cref{sec:mulmat:change_order}, based on
\cref{algo:change_order} which essentially calls our algorithms to compute
first the multiplication matrices (\cref{pbm:mulmat}) and then the
\(\ord_2\)-Gr\"obner basis of \(\nodule\) by considering a specific module of
syzygies (\cref{pbm:grb}).

Our structural assumption has been considered before, in particular in the case
where $\sdim=1$ and we work modulo an ideal $\ideal =
\genBy{f_1,\ldots,f_\ngens}$ (and the monomial order is such that \(X_\nvars
\ord \cdots \ord X_1\), which is always true up to renaming the variables).
In this case, it holds assuming for instance that the coefficients of
$f_1,\dots,f_\ngens$ are generic (that is, pairwise distinct indeterminates
over a given ground field) and that the Moreno-Socias conjecture holds
\cite[Sec.\,4.1]{FaGaHuRe14}. Another important situation where the assumption
holds is when the leading ideal \(\genBy{\lt{\ideal}}\) is \emph{Borel-fixed}
and the characteristic of $\field$ is zero, see~\cite[Sec.\,15.9]{Eisenbud95}
and~\cite[Sec.\,4.2]{FaGaHuRe14}. A theorem first proved by Galligo in power
series rings~\cite{Galligo74}, then by Bayer and
Stillman~\cite[Prop.~1]{BaSt87} for a homogeneous ideal $\ideal$ in $\ring$
shows that after a generic change of coordinates, \(\genBy{\lt{\ideal}}\) is
Borel-fixed.

The most general version of this result we are aware of is due to
Pardue~\cite{Pardue94}. It applies to $\ring$-submodules $\nodule \subset
\ring^\sdim$, for certain monomial orders $\ord$ on $\ring^\sdim$; the precise
conditions on $\ord$ are too technical to be stated here, but they hold in
particular for the term over position order induced by a monomial order on
$\ring$ which refines the (weighted) total degree. In such cases, Pardue shows
that after a generic linear change of variables, $\genBy{\ltMod{\nodule}}$
satisfies a Borel-fixedness property on $\ring^\sdim$ which implies that
$\hyp{\genBy{\ltMod{\nodule}}}$ holds, at least in characteristic zero.

For polynomial system solving, with $\sdim=1$, an interesting
particular case of the change of order problem is that of
$\ord_1=\ordDRL$ being the degree reverse lexicographic order and
$\ord_2=\ordLex$ being the lexicographic order (so that in
characteristic zero, Pardue's result shows that in generic
coordinates, our structural assumption holds for such inputs). Fast
algorithms for this case have been studied in
\cite{FaGaHuRe14,FaGaHuRe13}. The former assumes the ideal $\ideal$ is
in Shape Position, whereas this assumption is not needed here. In
\cite{FaGaHuRe13}, an algorithm is designed to compute the
multiplication matrices from a $\ordDRL$-Gr\"obner basis in
time $\bigO{\maxDeg \nvars^\expmatmul \vsdim^\expmatmul}$, where $\maxDeg$
is the maximum total degree of the elements of the input Gr\"obner
basis. This is obtained by iterating over the total degree: the normal
forms of all monomials of the same degree are dealt with using only
one call to Gaussian elimination. While this does not require an
assumption on the leading ideal, it is unclear to us how to remove the
dependency in $\maxDeg$ in general.

\paragraph{\textbf{Outline}}

\cref{sec:preliminaries} gathers preliminary material used in the rest of the
paper: some notation, as well as basic definitions and properties related to
monomial orders, Gr\"obner bases, and monomial staircases. Then
\cref{sec:syzygy} gives algorithms and proofs concerning the computation of
bases of syzygies, leading to the main result of this paper (\cref{thm:grb}),
while \cref{sec:mulmat} focuses on the computation of the multiplication
matrices (\cref{thm:mulmat}); both sections are introduced with a more detailed
outline of their content.

\section{Notations and definitions}
\label{sec:preliminaries}

\paragraph{\textbf{Monomial orders and Gr\"obner bases for modules}} Hereafter, we
consider a multivariate polynomial ring \(\ring=\mvPolRing{\nvars}\), for some
field $\field$. Recall that the coordinate vectors are denoted by
\(\coordVec{1},\ldots,\coordVec{\rdim}\), that is,
\[
  \coordVec{j} = (0,\ldots,0,1,0,\ldots,0) \in \ring^\rdim \text{ with } 1 \text{
  at index } j.
\]
A monomial in $\ring$ is defined from exponents $\expnts =
(e_1,\ldots,e_\nvars) \in \NN^\nvars$ as $\vars^\expnts = X_1^{e_1} \cdots
X_\nvars^{e_\nvars}$ and a monomial in $\ring^\rdim$ is $\mu \coordVec{j} =
(0,\ldots,0,\mu,0,\ldots,0)$ with $1 \le j \le \rdim$ and where $\mu$ is any
monomial of $\ring$. A term in $\ring$ or in $\ring^\rdim$ is a monomial
multiplied by a nonzero scalar from $\field$. Given terms $\mu$ and $\nu$ in
$\ring$, we say that the term $\mu \coordVec{j}$ is divisible by the term $\nu
\coordVec{k}$ if $j = k$ and $\mu$ is divisible by $\nu$ in $\ring$.

A submodule of \(\ring^\rdim\) generated by monomials of \(\ring^\rdim\) is
called a monomial submodule. A submodule of \(\ring^\rdim\) generated by
homogeneous polynomials of \(\ring^\rdim\) is called a homogeneous submodule.

Following \cite[Sec.\,15.3]{Eisenbud95}, a monomial order on \(\ring^\rdim\) is
a total order $\ord$ on the monomials of \(\ring^\rdim\) such that, for any
monomials \(\mu,\nu\) of \(\ring^\rdim\) and any monomial \(\kappa\) of
\(\ring\), \(\mu \ord \nu \;\;\text{implies}\;\; \mu \ord \kappa \mu \ord
\kappa \nu\). Examples of common monomial orders on $\ring^\rdim$ are so-called
\emph{term over position} (top) and \emph{position over term} (pot). In both
cases, we start from a monomial order on $\ring$ written $\ord$. Then, given
monomials $\mu \coordVec{i}$ and $\nu \coordVec{j}$, we say that $\mu
\coordVec{i} \ordTOP \ord \nu\coordVec{j}$ if $\mu \ord \nu$ or if $\mu=\nu$
and $i<j$. Similarly, we say that $\mu \coordVec{i} \ordPOT \ord
\nu\coordVec{j}$ if $i < j$ or if $i=j$ and $\mu\ord \nu$.

For a given monomial order \(\ord\) on \(\ring^\rdim\) and an element \(f \in
\ring^\rdim\), the \(\ord\)-leading term of \(f\), denoted by \(\lt{f}\), is
the term of \(f\) whose monomial is the greatest with respect to \(\ord\). This
extends to any collection \(\mathcal{F} \subseteq \ring^\rdim\) of polynomials:
\(\ltMod{\mathcal{F}}\) is the set of leading terms \(\{\lt{f} \mid f \in
\mathcal{F}\}\) of the elements of \(\mathcal{F}\). In particular, for a module
\(\nodule\) in \(\ring^\rdim\), \(\genBy{\ltMod{\nodule}}\) is a monomial
submodule of \(\ring^\rdim\) which is called the \(\ord\)-leading module of
\(\nodule\). 

\begin{definition}[Gr\"obner basis]
  \label{dfn:grobner_basis}
  Let \(\ord\) be a monomial order on \(\ring^\rdim\) and let \(\nodule\) be a
  \(\ring\)-submodule of \(\ring^\rdim\). A subset
  \(\{f_1,\ldots,f_\ngens\} \subset \nodule\) is said to be a \(\ord\)-Gr\"obner
  basis of \(\nodule\) if the \(\ord\)-leading module of \(\nodule\) is
  generated by \(\{\lt{f_1},\ldots,\lt{f_\ngens}\}\),
  i.e.~\(\genBy{\ltMod{\nodule}} = \genBy{\lt{f_1},\ldots,\lt{f_\ngens}}\).
\end{definition}

There is a specific $\ord$-Gr\"obner basis of $\nodule$, called the reduced
$\ord$-Gr\"obner basis of $\nodule$, which is uniquely defined in terms of the
module $\nodule$ and the monomial order $\ord$. Namely, this is the Gr\"obner
basis $\{f_1,\ldots,f_\ngens\}$ of $\nodule$ such that for $1 \le i\le \ngens$,
$\lt{f_i}$ is monic and does not divide any term of $f_j$ for $j\neq i$.

\paragraph{\textbf{Monomial basis and staircase monomials}} In what follows, the
submodules \(\nodule \subseteq \ring^\rdim\) we consider are such that the
quotient \(\ring^\rdim/\nodule\) has finite dimension as a \(\field\)-vector
space. We will often use its basis formed by the monomials not in
\(\ltMod{\nodule}\) \citep[Thm.\,15.3]{Eisenbud95}; this basis is denoted by
\(\monbas = (\basVec{1},\ldots,\basVec{\vsdim})\) and called the
\emph{\(\ord\)-monomial basis} of \(\ring^\rdim/\nodule\). Any polynomial $f
\in \ring^\rdim$ can be uniquely written $f = g + h$, where $g \in \nodule$ and
$h \in \ring^\rdim$ is a $\field$-linear combination of the monomials in
$\monbas$; this polynomial $h$ is called the \emph{$\ord$-normal form of $f$
(with respect to $\nodule$)} and is denoted by $\nf{f}$. We extend the notation
to sets of polynomials \(\mathcal{F} \subseteq \ring^\rdim\), that is,
\(\nf{\mathcal{F}} = \{\nf{f} \mid f \in \mathcal{F}\}\).

As in \cite[Sec.\,3]{MaMoMo93} (which focuses on the case of ideals), we will
use other sets of monomials related to this monomial basis. First, we consider
the monomials obtained by multiplying those of $\monbas$ by a variable:
\[
  \expSet = \{ \var_k \basVec{j} \mid 1 \le k \le \nvars, 1 \le j \le \vsdim \}
  \cup
  \{ \coordVec{i} \mid 1 \le i \le \rdim \text{ such that } \coordVec{i} \not\in \monbas\} .
\]
This allows us to define the border \(\border = \expSet - \monbas\), which is a
set of monomial generators of \(\genBy{\ltMod{\nodule}}\). Then the polynomials
\(\{ \mu - \nf{\mu} \mid \mu \in \border \}\) form a canonical generating set of
\(\nodule\), called the \(\ord\)-border basis of \(\nodule\)
\cite{MaMoMo91,MaMoMo93}. Finally, we consider the minimal generating set
\(\LM\) of \(\genBy{\ltMod{\nodule}}\): it is a subset of \(\border\) such that
the reduced \(\ord\)-Gr\"obner basis of \(\nodule\) is \(\gb = \{\mu - \nf{\mu}
\mid \mu \in \LM\}\). In particular, we have $\LM = \lt{\gb} = \{\lt{f}
\mid f \in \gb\}$.

By construction, \(\card{\LM} = \card{\gb} \le \card{\border} \le
\card{\expSet}\). Above, \(\expSet\) is defined as the union of a set of
cardinality at most \(\nvars \vsdim\) and a set of cardinality at most
\(\rdim\), hence \(\card{\expSet} \le \nvars\vsdim + \rdim\). Besides, since
the coordinate vectors in the second set are in the minimal generating set
\(\LM\) of \(\genBy{\ltMod{\nodule}}\), we have \(\card{\border-\LM} \le \nvars
\vsdim\).
Note that if the upper bound on \(\card{\expSet}\) is an equality, then all
coordinate vectors \(\coordVec{1},\ldots,\coordVec{\rdim}\) are in \(\LM\), which
holds only in the case \(\nodule = \ring^\rdim\) (in particular \(\monbas =
\emptyset\) and \(\vsdim=0\)).

Finally, we give a characterization of the structural assumption of monomial
submodules described in \cref{dfn:structure_monomial}, showing that one can
focus on the monomials in the minimal generating set instead of all monomials
in the module.
\begin{lemma}
  \label{lem:structural_assumption_mingens}
  Let \(\monmod\) be a monomial submodule of $\ring^\rdim$ and let
  \(\{\mu_1,\ldots,\mu_\ngens\}\) be the minimal generating set of \(\monmod\).
  Then \(\hyp{\monmod}\) holds if and only if for all \(k \in
  \{1,\ldots,\ngens\}\), for all \(j \in \{1,\ldots,\nvars\}\) such that
  \(X_j\) divides \(\mu_k\), for all \(i\in\{1,\ldots,j-1\}\), we have
  $\frac{X_i}{X_j} \mu_k \in \monmod$.
\end{lemma}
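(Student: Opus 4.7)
The plan is to prove the equivalence by handling the two directions separately. The ``only if'' direction is immediate: since each $\mu_k$ belongs to $\monmod$, if $\hyp{\monmod}$ holds then in particular it holds for the generators.

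For the ``if'' direction, I would assume that the property holds for all generators $\mu_1,\ldots,\mu_\ngens$, take an arbitrary monomial $\mu \in \monmod$ together with indices $j$ and $i < j$ such that $X_j \mid \mu$, and show that $(X_i/X_j)\mu \in \monmod$. Since $\monmod$ is generated by $\{\mu_1,\ldots,\mu_\ngens\}$ as a monomial submodule, there exists some index $k$ such that $\mu_k$ divides $\mu$; writing $\mu_k = \kappa_k \coordVec{\ell_k}$ and $\mu = \kappa \coordVec{\ell_k}$ for monomials $\kappa_k, \kappa$ in $\ring$, this gives $\kappa = \nu \kappa_k$ for some monomial $\nu \in \ring$, so $\mu = \nu \mu_k$.

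The next step is a case split based on where the variable $X_j$ appears. Since $X_j \mid \mu$ and $\mu = \nu\mu_k$, we have $X_j \mid \nu$ or $X_j \mid \kappa_k$ (i.e.~$X_j \mid \mu_k$). In the first case, $(X_i/X_j)\nu$ is a genuine monomial in $\ring$ and
\[
  (X_i/X_j)\mu = ((X_i/X_j)\nu)\,\mu_k
\]
is a monomial multiple of $\mu_k$, hence lies in $\monmod$. In the second case, the assumption applied to $\mu_k$ yields $(X_i/X_j)\mu_k \in \monmod$, and therefore
\[
  (X_i/X_j)\mu = \nu\,((X_i/X_j)\mu_k)
\]
is a monomial multiple of an element of $\monmod$, which again belongs to $\monmod$.

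No real obstacle is expected: the only subtlety is keeping track of the coordinate $\coordVec{\ell_k}$ of $\ring^\rdim$ (so that divisibility of monomials of $\ring^\rdim$ by $\mu_k$ forces the same coordinate), and checking that $(X_i/X_j)$ times a monomial divisible by $X_j$ is again a monomial. Both points are handled by the case split above.
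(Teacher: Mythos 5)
Your proposal is correct and follows essentially the same argument as the paper: the ``only if'' direction is immediate, and for the ``if'' direction you write $\mu = \nu\mu_k$ and split on whether $X_j$ divides $\nu$ or $\mu_k$, concluding in each case by a monomial multiple. The extra bookkeeping of the coordinate vector $\coordVec{\ell_k}$ is a harmless and sound elaboration of what the paper leaves implicit.
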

\begin{proof}
  Obviously, \(\hyp{\monmod}\) implies the latter property since each \(\mu_k\)
  is a monomial in \(\monmod\). Conversely, we assume that for all \(k \in
  \{1,\ldots,\ngens\}\), for all \(j \in \{1,\ldots,\nvars\}\) such that
  \(X_j\) divides \(\mu_k\), for all \(i\in\{1,\ldots,j-1\}\), we have
  $\frac{X_i}{X_j} \mu_k \in \monmod$, and we want to prove that
  \(\hyp{\monmod}\) holds. Let \(\mu\) be a monomial in \(\monmod\), let \(j
  \in \{1,\ldots,\nvars\}\) be such that \(X_j\) divides \(\mu\), and let
  \(i\in\{1,\ldots,j-1\}\); we want to prove that $\frac{X_i}{X_j} \mu \in
  \monmod$. Since \(\monmod\) is a monomial module, $\mu = \nu \mu_k$ for some
  monomial \(\nu\in\ring\) and \(1\le k\le\ngens\). By assumption, \(X_j\)
  divides \(\nu \mu_k\), thus either \(X_j\) divides \(\nu\) or \(X_j\) divides
  \(\mu_k\).  In the first case \(\frac{X_i}{X_j} \mu = (\frac{X_i}{X_j} \nu)
  \mu_k \in \monmod\), and in the second case $\frac{X_i}{X_j} \mu_k \in
  \monmod$ by assumption and therefore \(\frac{X_i}{X_j} \mu = \nu
  (\frac{X_i}{X_j} \mu_k) \in \monmod\).
\end{proof}

\section{Computing bases of syzygies via linear algebra}
\label{sec:syzygy}

In this section, we focus on \cref{pbm:grb} and we prove
\cref{thm:grb}. Thus, we are given pairwise commuting matrices
$\mulmats = (\mulmat{1},\ldots,\mulmat{\nvars})$ in
$\matRing[\vsdim]$, a matrix $\evMat \in \evSpace{\vsdim}$ and a
monomial order $\ord$ on $\ring^\rdim = \mvPolRing{\nvars}^\rdim$; we 
compute the reduced $\ord$-Gr\"obner basis of $\modRel$.

The basic ingredient is a \emph{linearization} of the problem, meaning
that we will interpret all operations on polynomials as operations
of \(\field\)-linear algebra. In
\cref{sec:syzygy:linearization}, we show a correspondence
between syzygies of bounded degree and vectors in the nullspace of a
matrix over \(\field\) which exhibits a structure that we call
\emph{multi-Krylov}.  The multi-Krylov matrix is formed by
multiplications of \(\evMat\) by powers of the multiplications
matrices; its rows are ordered according to the monomial order
\(\ord\) given as input of \cref{pbm:grb}.

Then, in \cref{sec:syzygy:monbas}, we show that the row rank
profile of this multi-Krylov matrix exactly corresponds to the
\(\ord\)-monomial basis of the quotient \(\relSpace/\modRel\). To
compute this row rank profile efficiently, we use in particular an
idea from \cite{KelGeh85} which we extend to our context, while also
both exploiting the structure of the multi-Krylov matrix to always work on
a small subset of its rows and ensuring that the rows are considered
in the right order. Finally, in \cref{sec:syzygy:relbas}, we
exploit the knowledge of the \(\ord\)-monomial basis to compute the
reduced \(\ord\)-Gr\"obner basis of syzygies.

\subsection{Monomial basis as the rank profile of a multi-Krylov matrix}
\label{sec:syzygy:linearization}

We first describe \emph{expansion} and \emph{contraction} operations,
which convert polynomials of bounded degrees into their coefficient
vectors and vice versa (the bound is written $\maxDegs$ below).  It
will be convenient to rely on the following indexing function.

\begin{definition}
  \label{dfn:indexing_function}
  Let \(\maxDegs = (\maxDeg_1,\ldots,\maxDeg_\nvars) \in \NNp^\nvars\) and let
  \(\ord\) be a monomial order on \(\relSpace\). Then we define the
  \emph{\((\ord,\maxDegs)\)-indexing function} \(\ordIndex\) as the unique
  bijection
  \[
    \ordIndex :
    \{ \vars^\expnts \coordVec{i} \mid \tuple{0} \le \expnts < \maxDegs, 1\le i \le \rdim\}
    \to \{1, \ldots, \rdim \maxDeg_1 \cdots \maxDeg_\nvars \}
  \]
  which is increasing for \(\ord\), that is, such that \(\vars^{\expnts} \coordVec{i} \ord
  \vars^{\expnts'} \coordVec{i'}\) if and only if \(\ordIndex(\vars^{\expnts}
  \coordVec{i}) < \ordIndex(\vars^{\expnts'} \coordVec{i'})\).
\end{definition}
In other words, take the sequence of monomials $(\vars^\expnts
\coordVec{i}, \tuple{0} \le \expnts < \maxDegs, 1\le i \le \rdim)$ and
sort it according to $\ord$; then $\ordIndex(\vars^{\expnts}
\coordVec{i})$ is the index of $\vars^{\expnts} \coordVec{i}$ in the
sorted sequence (assuming indices start at $1$).

Hereafter, \(\ring_{<\maxDegs}\) stands for the set of polynomials \(p
\in \ring\) such that \(\deg_{X_k}(p) < \maxDeg_k\) for \(1\le k\le
\nvars\). This yields a $\field$-linear correspondence between
bounded-degree polynomials and vectors
\[
\begin{array}{cccc}
\mathcal{E}_{\ord,\maxDegs}: & \ring_{<\maxDegs}^m\ & \to &  \matRing[1][\rdim\maxDeg_1\cdots\maxDeg_\nvars] \\
& \rel = \displaystyle\sum_{\substack{\row{f} = \vars^\expnts \coordVec{i} \\ \tuple{0} \le \expnts < \maxDegs, 1\le i \le \rdim}}  u_{\row{f}} \; \row{f}
& \mapsto & \row{v} = [ u_{\ordIndex^{-1}(k)} \mid 1 \le k\le \rdim\maxDeg_1\cdots\maxDeg_\nvars   ]
  \end{array}
\]
called \emph{expansion}, with inverse $\mathcal{C}_{\ord,\maxDegs}$
called \emph{contraction}.  For a polynomial \(\rel
\in \relSpace_{<\maxDegs}\), $\mathcal{E}_{\ord,\maxDegs}(\rel)$ is
the vector in $\matRing[1][\rdim\maxDeg_1\cdots\maxDeg_\nvars]$ whose
entry at index \(\ordIndex(\vars^\expnts \coordVec{i})\) is the
coefficient of the term involving \(\vars^\expnts \coordVec{i}\) in
\(\rel\).

\begin{example}
  \label{eg:linearization_multiv}
  Consider the case with \(\nvars=2\) variables and \(\rdim=2\), using the
  \(\ordLex\)-term over position order \(\ordTOP{\ordLex}\) on
  \(\bvPolRing^{2}\), with \(Y\ordLex X\). Choosing the degree bounds
  \(\maxDegs = (2,3)\), the monomials
  \[
    \{X^j Y^k \coordVec{i} \mid \tuple{0} \le (j,k) < (2,3), 1 \le i \le 2\}
  \]
  are indexed as follows, according to \cref{dfn:indexing_function}:
  \[
    \begin{array}{cp{1cm}c}
      \text{Monomial}                           &  & \text{Index} \\
      X^j Y^k \coordVec{i}                       &  & \ordIndex[\ordTOP{\ordLex},(2,3)](X^j Y^k \coordVec{i}) \\[0.2cm] \hline \\[-0.3cm]
      \begin{bmatrix} 1    & 0    \end{bmatrix} &  & 1  \\[0.1cm]
      \begin{bmatrix} 0    & 1    \end{bmatrix} &  & 2  \\[0.1cm]
      \begin{bmatrix} Y    & 0    \end{bmatrix} &  & 3  \\[0.1cm]
      \begin{bmatrix} 0    & Y    \end{bmatrix} &  & 4  \\[0.1cm]
      \begin{bmatrix} Y^2  & 0    \end{bmatrix} &  & 5  \\[0.1cm]
      \begin{bmatrix} 0    & Y^2  \end{bmatrix} &  & 6  \\[0.1cm]
      \begin{bmatrix} X    & 0    \end{bmatrix} &  & 7  \\[0.1cm]
      \begin{bmatrix} 0    & X    \end{bmatrix} &  & 8  \\[0.1cm]
      \begin{bmatrix} XY   & 0    \end{bmatrix} &  & 9  \\[0.1cm]
      \begin{bmatrix} 0    & XY   \end{bmatrix} &  & 10  \\[0.1cm]
      \begin{bmatrix} XY^2 & 0    \end{bmatrix} &  & 11 \\[0.1cm]
      \begin{bmatrix} 0    & XY^2 \end{bmatrix} &  & 12
    \end{array}
  \]
  Let \(\row{p}\) be the polynomial in \(\bvPolRing^{2}_{<(2,3)}\) and
  \(\row{v}\) be the vector in \(\matRing[1][12]\) defined by
  \begin{align*}
    \row{p} & = \begin{bmatrix} 46 + 95Y + 75X + 10 XY\; & \;36 + 18Y + 38Y^2 + 77X + 83 XY + 35 XY^2 \end{bmatrix}, \\
    \row{v} & = \begin{bmatrix} 86 & 0 & 32 & 83 & 54 & 26 & 0 & 68 & 86 & 0 & 54 & 22 \end{bmatrix}.
  \end{align*}
  In this case, the expansion of \(\row{p}\) and the contraction of \(\row{v}\)
  are given by
  \begin{align*}
    \expansion{\row{p}} & = \begin{bmatrix} 46 & 36 & 95 & 18 & 0 & 38 & 75 & 77 & 10 & 83 & 0 & 35 \end{bmatrix}, \\
    \contraction{\row{v}} & =  \begin{bmatrix} 86 + 32Y + 54Y^2 + 86XY + 54XY^2\; & \;83Y + 26 Y^2 + 68X + 22XY^2 \end{bmatrix}.
    \qedhere
  \end{align*}
\end{example}

Now, we detail the construction of the multi-Krylov matrix. Let
\(\mulmats = (\mulmat{1},\ldots,\mulmat{\nvars})\) be pairwise
commuting matrices in \(\matRing[\vsdim]\) that define a
$\field[\vars]$-module structure on $\field^{1\times D}$, and let
\(\evMat\) be in \(\matRing[\rdim][\vsdim]\), with rows
$\row{f}_1,\dots,\row{f}_m$. As mentioned in \cref{dfn:syzygy_module}, for
a polynomial \(\rel = [p_1,\ldots,p_\rdim] \in \relSpace\) we write
\[
  \rel \mul \evMat
  = p_1 \mul \row{f}_{1} + \cdots + p_\rdim \mul \row{f}_{\rdim} =
    \row{f}_{1}\, p_1(\mulmats) + \cdots + \row{f}_{\rdim}\, p_\rdim(\mulmats).
\]
As a result, a polynomial \(\rel\) is in \(\modRel\) if its coefficients form a
\(\field\)-linear combination of vectors of the form \(\row{f}_{i}
\mulmats^\expnts\) which is zero. If furthermore \(\rel\) is nonzero and has
its degrees in each variable bounded by \((\maxDeg_1,\ldots,\maxDeg_\nvars)\),
then it corresponds to a nontrivial \(\field\)-linear relation between the row
vectors
\[
  \{ \row{f}_{i}  \mulmats^\expnts \mid \tuple{0} \le \expnts < \maxDegs, 1 \le i \le \rdim\}.
\]
This leads us to consider the matrices formed by these vectors, ordered
according to \(\ordIndex\).

\begin{definition}
  \label{dfn:multi_krylov}
  Let \(\ord\) be a monomial order on \(\relSpace\), let \(\mulmats =
  (\mulmat{1},\ldots,\mulmat{\nvars}) \in \matRing[\vsdim]\) be pairwise
  commuting matrices, let \(\evMat \in \matRing[\rdim][\vsdim]\) whose rows are
  \(\row{f}_1,\ldots,\row{f}_\rdim\), and let
  \(\maxDegs=(\maxDeg_1,\ldots,\maxDeg_\nvars) \in \NNp^\nvars\). The
  \emph{\((\ord,\maxDegs)\)-multi-Krylov matrix for \((\mulmats,\evMat)\)},
  denoted by \(\krylov{\ord,\maxDegs}\), is the matrix in
  \(\matRing[\rdim\maxDeg_1\cdots\maxDeg_\nvars][\vsdim]\) whose row at index
  \(\ordIndex(\vars^\expnts \coordVec{i})\) is \(\vars^\expnts \coordVec{i} \cdot
  \evMat= \row{f}_{i} \mulmats^\expnts\), for \(\tuple{0} \le \expnts <
  \maxDegs\) and \(1 \le i \le \rdim\).
\end{definition}

\begin{example}
  \label{eg:krylovmat}
  Following on from \cref{eg:linearization_multiv}, we consider the vector
  space dimension \(\vsdim = 3\) and matrices \(\evMat\) in \(\matRing[2][3]\)
  and \(\mulmats = (\mulmat{X},\mulmat{Y})\) in \(\matRing[3]\) such that
  \(\mulmat{X} \mulmat{Y} = \mulmat{Y}\mulmat{X}\).  Then from the indexing
  function \(\ordIndex[\ordTOP{\ordLex},(2,3)]\) described above we obtain
  \[
    \krylov{\ordTOP{\ordLex},(2,3)} =
    \begin{bmatrix}
      \evMat \\
      \evMat \mulmat{Y} \\
      \evMat \mulmat{Y}^2 \\
      \evMat \mulmat{X} \\
      \evMat \mulmat{X}\mulmat{Y} \\
      \evMat \mulmat{X}\mulmat{Y}^2
    \end{bmatrix}
    \in \matRing[12][3]. \qedhere
  \]
\end{example}

By construction, we have the following result, which relates the left nullspace
of the multi-Krylov matrix with the set of bounded-degree syzygies.

\begin{lemma}
  \label{lem:krylov_nullspace}
  If \(\row{v} \in \matRing[1][\rdim\maxDeg_1\cdots\maxDeg_\nvars]\) is in the
  left nullspace of \(\krylov{\ord,\maxDegs}\), then its contraction
  \(\contraction{\row{v}} \in \relSpace_{<\maxDegs}\) is in \(\modRel\).
  Conversely, if \(\rel \in \relSpace_{<\maxDegs}\) is in \(\modRel\), then
  its expansion \(\expansion{\rel} \in
  \matRing[1][\rdim\maxDeg_1\cdots\maxDeg_\nvars]\) is in the left nullspace
  of \(\krylov{\ord,\maxDegs}\).
\end{lemma}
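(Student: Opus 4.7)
The plan is to prove the two implications in parallel by establishing the single identity
\[
  \expansion{\rel} \cdot \krylov{\ord,\maxDegs} = \rel \mul \evMat
  \quad\text{for every } \rel \in \relSpace_{<\maxDegs},
\]
which, since \(\mathcal{E}_{\ord,\maxDegs}\) and \(\mathcal{C}_{\ord,\maxDegs}\) are mutually inverse \(\field\)-linear bijections, immediately yields both directions of the lemma: an expansion is in the left nullspace of the multi-Krylov matrix if and only if the original polynomial acts as zero on \(\evMat\), i.e.\ lies in \(\modRel\).

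To establish this identity, I would simply unpack the definitions. Writing \(\rel \in \relSpace_{<\maxDegs}\) in its monomial expansion
\[
  \rel = \sum_{\substack{\tuple{0}\le \expnts<\maxDegs \\ 1\le i\le \rdim}}
           u_{\vars^\expnts \coordVec{i}}\; \vars^\expnts \coordVec{i},
\]
the module action recalled in \cref{dfn:syzygy_module} gives
\[
  \rel \mul \evMat
  = \sum_{\expnts, i} u_{\vars^\expnts \coordVec{i}}\; \row{f}_{i}\, \mulmats^\expnts.
\]
On the other hand, by \cref{dfn:indexing_function} the entry of \(\expansion{\rel}\) at index \(k\) equals \(u_{\ordIndex^{-1}(k)}\), while by \cref{dfn:multi_krylov} the row at index \(k = \ordIndex(\vars^\expnts\coordVec{i})\) of \(\krylov{\ord,\maxDegs}\) is exactly \(\row{f}_i \mulmats^\expnts\). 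Expanding \(\expansion{\rel} \cdot \krylov{\ord,\maxDegs}\) as a linear combination of its rows therefore produces the same sum as above, proving the claimed identity.

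From here the conclusion is essentially a matter of reading off. For the converse direction, I set \(\rel = \contraction{\row{v}}\), apply the identity, and observe that \(\expansion{\contraction{\row{v}}} = \row{v}\). For the direct direction I apply the identity to \(\rel\) and observe that \(\rel \mul \evMat = \matz\) precisely translates to \(\expansion{\rel}\) being in the left nullspace. There is no real obstacle here; the only thing that needs care is that the correspondence between the position \(k\) of a coefficient in \(\expansion{\rel}\) and the position of the corresponding row \(\row{f}_i \mulmats^\expnts\) in \(\krylov{\ord,\maxDegs}\) uses the \emph{same} indexing function \(\ordIndex\), which is guaranteed by construction and makes the bookkeeping align exactly.
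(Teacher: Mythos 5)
Your proof is correct and follows exactly the route the paper has in mind: the paper presents this lemma with no written proof beyond the phrase ``by construction'', because it regards the identity $\expansion{\rel}\cdot\krylov{\ord,\maxDegs} = \rel\mul\evMat$ as immediate from \cref{dfn:indexing_function,dfn:multi_krylov}. You have simply made that identity and the resulting two implications explicit, which is the intended argument.
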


Our first step towards finding a \(\ord\)-Gr\"obner basis \(\gb\) of
\(\modRel\) consists in computing the \(\ord\)-monomial basis $\monbas$
of the quotient \(\relSpace/\modRel\); we are going to prove that it
corresponds to the row rank profile of the multi-Krylov matrix. From
the above discussion, we know that considering this matrix only gives
us access to syzygies which satisfy degree constraints.  In what
follows, we will choose $\maxDegs=(\maxDeg_1,\dots,\maxDeg_r)$, with
$\maxDeg_i \ge D$ for all $i$. In particular, for this choice, all
elements in the monomial basis of \(\relSpace/\modRel\) are in
$\relSpace_{<\maxDegs}$.

We recall that, for a matrix \(\mat{A}\) in \(\matRing[\mu][\nu]\), the
\emph{row rank profile} of \(\mat{A}\) is the lexicographically smallest
subtuple \((\rkprof_1,\ldots,\rkprof_\dimvs)\) of \((1,\ldots,\mu)\) such that
\(\dimvs\) is the rank of \(\mat{A}\) and the rows
\((\rkprof_1,\ldots,\rkprof_\dimvs)\) of \(\mat{A}\) are linearly independent.

\begin{theorem}
  \label{thm:rrp_monbas}
  Let \(\ord\) be a monomial order on \(\relSpace\), let \(\mulmats
  = (\mulmat{1},\ldots,\mulmat{\nvars})\) be pairwise commuting
  matrices in \(\matRing[\vsdim]\), let \(\evMat \in
  \matRing[\rdim][\vsdim]\), and let
  $\maxDegs=(\maxDeg_1,\dots,\maxDeg_r)$, with $\maxDeg_i \ge D$ for
  all $i$. Let further \((\rkprof_1,\ldots,\rkprof_\dimvs) \in
  \NNp^\dimvs\) be the row rank profile of
  \(\krylov{\ord,\maxDegs}\). Then the \(\ord\)-monomial basis
  $\monbas$ of \(\relSpace/\modRel\) is equal to
  \(\{\ordIndex^{-1}(\rkprof_1), \ldots,
  \ordIndex^{-1}(\rkprof_\dimvs)\}\).
\end{theorem}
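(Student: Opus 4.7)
The plan is to reduce the theorem to the following equivalence for each index $k$ in the range of $\ordIndex$: writing $\mu_k = \ordIndex^{-1}(k)$, the $k$th row of $\krylov{\ord,\maxDegs}$ is linearly independent from rows $1,\ldots,k-1$ if and only if $\mu_k \notin \ltMod{\modRel}$. This is sufficient because the row rank profile, as defined in the paper, equivalently consists of exactly those indices $k$ such that row $k$ does not lie in the span of the earlier rows; hence the rank profile will correspond bijectively to the set of monomials $\mu_k$ with $\mu_k \notin \ltMod{\modRel}$, which is the defining description of $\monbas$.

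Before proving the equivalence, I would establish a degree bound: every monomial in $\monbas$ has exponent strictly less than $\vsdim$ in each variable, and hence lies in $\relSpace_{<\maxDegs}$ thanks to the hypothesis $\maxDeg_i \ge \vsdim$. Indeed, if some $\mu \coordVec{i}\in\monbas$ had $X_j$-degree equal to some $d \ge \vsdim$, then $\mu \coordVec{i},(\mu/X_j) \coordVec{i},\ldots,(\mu/X_j^d)\coordVec{i}$ would be $d+1 > \vsdim$ distinct monomials outside $\ltMod{\modRel}$, contradicting $|\monbas| = \dim(\relSpace/\modRel) \le \vsdim$. In particular, for any $\mu_k = \vars^\expnts \coordVec{i}$ with $\expnts < \maxDegs$, the normal form $\nf{\mu_k}$ is a $\field$-linear combination of monomials all lying in $\relSpace_{<\maxDegs}$.

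For the forward direction of the equivalence, suppose $\mu_k \in \ltMod{\modRel}$. Consider the polynomial $\rel = \mu_k - \nf{\mu_k}$: it lies in $\modRel$ by definition of the normal form, its leading term equals $\mu_k$ (since every monomial in $\nf{\mu_k}$ is in $\monbas$ and therefore strictly smaller than $\mu_k$ for $\ord$), and by the degree bound above it lies in $\relSpace_{<\maxDegs}$. Then Lemma~\ref{lem:krylov_nullspace} applied to $\expansion{\rel}$ gives a left nullspace vector of $\krylov{\ord,\maxDegs}$ whose last nonzero coordinate is at index $k$ with value $1$, which expresses row $k$ as a $\field$-linear combination of rows $\ordIndex(\nu)$ with $\nu \ord \mu_k$. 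For the converse, if row $k$ is a $\field$-linear combination $\sum_{j<k} c_j \cdot (\text{row } j)$, then contracting the null vector of coefficients produces a polynomial $\rel \in \modRel$ whose only term of monomial $\mu_k$ has coefficient $1$ and whose other terms involve monomials $\mu_j$ with $j<k$, so $\lt{\rel} = \mu_k$ and thus $\mu_k \in \ltMod{\modRel}$.

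The main delicate point is the forward direction: one must exhibit a syzygy with leading monomial $\mu_k$ that is moreover bounded componentwise by $\maxDegs$, which is exactly where the hypothesis $\maxDeg_i \ge \vsdim$ enters via the degree bound on $\monbas$. Once the equivalence is established, it remains only to observe that $\monbas \subseteq \relSpace_{<\maxDegs}$ (again by the degree bound) so that $\monbas$ is entirely indexed by $\ordIndex$, whence $\{\ordIndex^{-1}(\rkprof_1),\ldots,\ordIndex^{-1}(\rkprof_\dimvs)\}$ is exactly the set of monomials outside $\ltMod{\modRel}$, i.e., $\monbas$.
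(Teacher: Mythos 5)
Your proof is correct and follows essentially the same strategy as the paper's: establish that the indexed rank-profile monomials coincide with the complement of $\ltMod{\modRel}$, using Lemma~\ref{lem:krylov_nullspace} for both directions and a degree bound on $\monbas$ to handle the forward direction and to account for monomials with exponent $\not<\maxDegs$. The only visible difference is the justification of the degree bound: you use the counting argument (a chain $\mu\coordVec{i}, (\mu/X_j)\coordVec{i}, \ldots$ of $d+1 > \vsdim$ staircase monomials contradicts $|\monbas| \le \vsdim$), while the paper appeals to the degree of the minimal polynomial of $\mulmat{j}$; both are fine and the two observations are close cousins. One small point worth tightening: you justify $\lt{\mu_k - \nf{\mu_k}} = \mu_k$ by saying every monomial of $\nf{\mu_k}$ ``is in $\monbas$ and therefore strictly smaller than $\mu_k$''; being in $\monbas$ alone does not imply being $\ord$-smaller. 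The correct reason is that $\lt{\mu_k - \nf{\mu_k}}$ must lie in $\ltMod{\modRel}$, and $\mu_k$ is the only monomial appearing in $\mu_k - \nf{\mu_k}$ that is in $\ltMod{\modRel}$; hence it is the leading term, and consequently all monomials of $\nf{\mu_k}$ are $\prec \mu_k$.
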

\begin{proof}
  Write \(\rkprof_j = \ordIndex(\vars^{\expnts_j}\coordVec{i_j})\), so that
  \(\{\ordIndex^{-1}(\rkprof_1), \ldots, \ordIndex^{-1}(\rkprof_\dimvs)\} =
  \{\vars^{\expnts_1} \coordVec{i_1}, \ldots, \vars^{\expnts_\dimvs}
  \coordVec{i_\dimvs}\}\). We want to prove that \(\lt{\modRel}\) is the set
  of monomials not in \(\{\vars^{\expnts_1} \coordVec{i_1}, \ldots,
  \vars^{\expnts_\dimvs} \coordVec{i_\dimvs}\}\).

  First, consider any monomial \(\vars^\expnts \coordVec{i}\) for \(1\le i \le
  \rdim\) and \(\expnts \in \NN^\nvars\) such that \(\expnts \not< \maxDegs\).
  Such a monomial cannot be in \(\{\vars^{\expnts_1} \coordVec{i_1}, \ldots,
  \vars^{\expnts_\dimvs} \coordVec{i_\dimvs}\}\) since by construction of
  \(\krylov{\ord,\maxDegs}\) we have \(\expnts_j < \maxDegs\) for all \(j\). On
  the other hand, \(\vars^\expnts \coordVec{i}\) cannot be in the
  \(\ord\)-monomial basis $\monbas$ either. Indeed, writing
  $\vars^\expnts=X_1^{e_1} \cdots X_r^{e_r}$, $\expnts \not< \maxDegs$ means
  that $e_k \ge \maxDeg_k \ge D$ for some index $k$; if \(\vars^\expnts
  \coordVec{i}\) is in $\monbas$ then $X_k^{e_k} \coordVec{i}$ is also in
  $\monbas$ and thus $\coordVec{i},X_k\coordVec{i}, \ldots, X_k^\vsdim
  \coordVec{i}$ are in $\monbas$, which is a contradiction since linear
  independence would imply that the minimal polynomial of $\mulmat{k}$ has
  degree greater than $\vsdim$. Hence \(\vars^\expnts \coordVec{i} \in
  \ltMod{\modRel}\).

  Now, let \(\vars^\expnts \coordVec{i} \in \ltMod{\modRel}\) be such that
  \(\expnts < \maxDegs\). Then there is a polynomial \(\rel\) in \(\modRel\)
  such that \(\lt{\rel} = \vars^\expnts \coordVec{i}\), and
  \cref{lem:krylov_nullspace} implies that \(\expansion{\rel}\) is in the left
  nullspace of \(\krylov{\ord,\maxDegs}\). Since by construction the rightmost
  nonzero entry of \(\expansion{\rel}\) is \(1\) at index
  \(\ordIndex(\vars^\expnts \coordVec{i})\), the vector \(\expansion{\rel}\)
  expresses the row of \(\krylov{\ord,\maxDegs}\) with index
  \(\ordIndex(\vars^\expnts \coordVec{i})\) as a \(\field\)-linear combination
  of the rows with smaller indices. By definition of the row rank profile, this
  implies \(\ordIndex(\vars^\expnts \coordVec{i}) \not\in
  \{\rkprof_1,\ldots,\rkprof_\dimvs\}\), and therefore \(\vars^\expnts
  \coordVec{i} \not\in \{\vars^{\expnts_1} \coordVec{i_1}, \ldots,
  \vars^{\expnts_\dimvs} \coordVec{i_\dimvs}\}\).

  Conversely, let \(\vars^\expnts \coordVec{i} \not\in \{\vars^{\expnts_1}
  \coordVec{i_1}, \ldots, \vars^{\expnts_\dimvs} \coordVec{i_\dimvs}\}\) be a
  monomial such that \(\expnts < \maxDegs\). Then \(\ordIndex(\vars^\expnts
  \coordVec{i}) \not\in \{\rkprof_1,\ldots,\rkprof_\dimvs\}\). Thus, by
  definition of the row rank profile, there is a nonzero vector \(\row{v} \in
  \matRing[1][\rdim D^\nvars]\) such that \(\row{v}\) is in the left nullspace
  of \(\krylov{\ord,\maxDegs}\) and the rightmost nonzero entry of \(\row{v}\)
  is \(1\) at index \(\ordIndex(\vars^\expnts \coordVec{i})\).  Then
  \(\lt{\contraction{\row{v}}}=\vars^\expnts \coordVec{i}\), and according to
  \cref{lem:krylov_nullspace}, \(\contraction{\row{v}}\) is in \(\modRel\),
  hence \(\lt{\contraction{\row{v}}} \in \ltMod{\modRel}\).
\end{proof}

In particular, we see that the dimension $\dimvs$ of \(\relSpace/\modRel\) as
a \(\field\)-vector space is equal to the rank of the multi-Krylov matrix
\(\krylov{\ord,\maxDegs}\). In particular this implies $\dimvs \le \vsdim$,
since \(\krylov{\ord,\maxDegs}\) has \(\vsdim\) columns.


\subsection{Computing the monomial basis}
\label{sec:syzygy:monbas}

We now show how to exploit the structure of the multi-Krylov matrix so as to
efficiently compute its row rank profile, yielding the \(\ord\)-monomial basis
$\monbas$ of \(\relSpace/\modRel\).

For $\maxDegs$ as in~\cref{thm:rrp_monbas}, the dense representation of
\(\krylov{\ord,\maxDegs}\) uses at least \(\rdim \vsdim^{\nvars+1}\) field
elements, which is well beyond our target cost \(\softO{\rdim
\vsdim^{\expmatmul-1} + \nvars \vsdim^\expmatmul}\). On the other hand, this
matrix is succinctly described by the data \((\ord,\mulmats,\evMat)\), which
requires \(\bigO{\rdim \vsdim + \nvars \vsdim^2}\) field elements. Like
previous related algorithms \cite{MolBuc82,FaGiLaMo93,MaMoMo93}, we will never
compute the full dense representation of this matrix, but rather always store
and use a minimal amount of data that allows the algorithm to progress. The
main property behind this is that once some monomial is found not to be in the
sought monomial basis, then all multiples of that monomial can be discarded
from the rest of the computation.

Our algorithm for computing the monomial basis $\monbas=(\basVec{1},
\ldots,\basVec{\dimvs})$ also returns the matrix $\mat{B} \in
\matRing[\dimvs][\vsdim]$ whose rows are $\basVec{1} \mul
\evMat,\dots,\basVec{\dimvs} \mul \evMat$; it will be needed later on.
The algorithm exploits the structure of this matrix, and uses fast
arithmetic for matrices over \(\field\) through
\begin{itemize}
  \item a procedure \algoname{RowRankProfile} which computes the row rank
    profile of any matrix in \(\matRing[\mu][\nu]\) of rank \(\rho\) in
    \(\bigO{\rho^{\expmatmul-2} \mu \nu}\) operations in \(\field\), as
    described in \cite[Sec.\,2.2]{Storjohann00};
  \item matrix multiplication which is incorporated by following a strategy in
    the style of Keller-Gehrig \cite{KelGeh85}.
\end{itemize}

In short, the latter strategy can be thought of as precomputing powers of the
form \(\mulmat{j}^{2^e}\) of the multiplication matrices, which then allows us
to group many vector-matrix products into a small number of matrix-matrix
products. To achieve this we work iteratively on the variables, thus first
focusing on all operations involving \(\mulmat{1}\), then on those involving
\(\mulmat{2}\), etc. The order of the rows specified by \(\ord\) is not
respected in this process, since at a fixed stage of the algorithm we will only
have considered a submatrix of the multi-Krylov matrix which does not involve
the last variables. To fix this we constantly re-order, according to \(\ord\),
the rows that have been processed and the ones that we introduce.

\begin{algobox}
  \algoInfo
  {MonomialBasis}
  {algo:monomial_basis}

  \dataInfos{Input}{
    \item monomial order \(\ord\) on \(\ring^{\rdim}=\mvPolRing{\nvars}^{\rdim}\),
    \item pairwise commuting matrices \(\mulmats = (\mulmat{1},\ldots,\mulmat{\nvars})\) in \(\matRing[\vsdim]\), 
    \item matrix \(\evMat \in \evSpace{\vsdim}\). }

  \dataInfos{Output}{
    \item the \(\ord\)-monomial basis $\monbas=(\basVec{1},\dots,\basVec{\dimvs})$ of \(\relSpace/\modRel\),
    \item the matrix $\mat{B} \in \matRing[\dimvs][\vsdim]$ whose rows are $\basVec{1} \mul
    \evMat,\dots,\basVec{\dimvs} \mul \evMat$. }

  \algoSteps{
  \item $\maxDeg \assign 2^{\lceil \log_2(\vsdim) \rceil + 1}$;~ $\maxDegs \assign (\maxDeg,\dots,\maxDeg) \in \NNp^\nvars$
  \item \(\ordIndex \assign\) the indexing function in \cref{dfn:indexing_function}
  \item \(\pi \assign\) the permutation matrix in \(\{0,1\}^{\rdim\times\rdim}\)
    such that the tuple \(\tuple{t} = \pi
    \trsp{[\ordIndex(\coordVec{1}),\ldots,\ordIndex(\coordVec{\rdim})]}\) is
    increasing
  \item \(\basMat \assign \pi \evMat\)
  \item \(\hat \delta,(i_1,\ldots,i_{\hat \delta}) \assign \algoname{RowRankProfile}(\basMat)\)
  \item \algoword{For} \(k\) \algoword{from} \(1\) \algoword{to} \(\nvars\) \eolcomment{iterate over the variables}
    \begin{algosteps}[{\bf a.}]
      \item \(\mat{P} \assign \mulmat{k}\);~ $\expnt \assign 0$
      \item \algoword{Do}
        \begin{algosteps}[(i)]
          \item \(\delta \assign \hat\delta\)
          \item \((\rkprof_1,\ldots,\rkprof_{\delta}) \assign\) subtuple of \(\tuple{t}\) with  entries \(i_1,\ldots,i_{\delta}\)
          \item \(\basMat \assign\) the submatrix of \(\basMat\) with rows \(i_1,\ldots,i_{\delta}\)
          \item \(\hat{\rkprof}_j \assign \ordIndex(\var_k^{2^\expnt}\ordIndex^{-1}(\rkprof_j))\) for \(1\le j\le \delta\)
          \item \(\pi \assign\) the permutation matrix in \(\{0,1\}^{2\delta\times2\delta}\) such that the tuple
            \(\tuple{t} = \pi \trsp{[\rkprof_1,\ldots,\rkprof_\delta,\hat\rkprof_1,\ldots,\hat\rkprof_\delta]}\) is increasing
          \item \(\basMat \assign \pi \begin{bmatrix} \basMat \\ \basMat \mat{P} \end{bmatrix}\)
          \item \(\hat\delta, (i_1,\ldots,i_{\hat\delta}) \assign \algoname{RowRankProfile}(\basMat)\)
          \item \(\mat{P} \assign \mat{P}^2\); $\expnt \assign \expnt +1$
        \end{algosteps}
      \item[] \algoword{Until} $\hat\delta=\delta$ and \((\rkprof_1,\ldots,\rkprof_{\delta}) =\) subtuple of \(\tuple{t}\) with entries \(i_1,\ldots,i_{\delta}\)
    \end{algosteps}
    \item \algoword{Return} \(\monbas=(\ordIndex^{-1}(\rkprof_1), \ldots, \ordIndex^{-1}(\rkprof_\delta))\)
    and the submatrix of \(\basMat\) with rows \(i_1,\ldots,i_{\delta}\) }
\end{algobox}

\begin{proposition}
  \label{prop:algo:monomial_basis}
  \cref{algo:monomial_basis} returns the \(\ord\)-monomial basis
  $\monbas=(\basVec{1},\dots,\basVec{\dimvs})$ of \(\relSpace/\modRel\) and
  the matrix $\mat{B} \in \matRing[\dimvs][\vsdim]$ whose rows are
  $\basVec{1} \mul \evMat,\dots,\basVec{\dimvs} \mul \evMat$. It uses
  \[
    \bigO{\rdim \vsdim^{\expmatmul-1} +  \vsdim^\expmatmul (\nvars + \log(d_1 \cdots d_\nvars))} 
    \subset \bigO{\rdim \vsdim^{\expmatmul-1} + \nvars \vsdim^\expmatmul \log(\vsdim)}
  \]
  operations in \(\field\), where $d_k \in \{1,\ldots,\vsdim\}$ is the degree
  of the minimal polynomial of $\mulmat{k}$, for $1 \le k \le \nvars$.
\end{proposition}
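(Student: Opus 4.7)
The plan is to prove correctness through a loop invariant relating the row rank profile of the retained submatrix of $\basMat$ to the $\ord$-monomial basis $\monbas$ of $\relSpace/\modRel$, then to read off the complexity by bounding the size of $\basMat$ and the number of inner iterations. The choice $\maxDeg = 2^{\lceil \log_2 \vsdim \rceil + 1} \ge \vsdim$ brings us into the scope of \cref{thm:rrp_monbas}, which identifies $\monbas$ with the row rank profile of the full multi-Krylov matrix $\krylov{\ord,\maxDegs}$; the algorithm must be viewed as extracting this row rank profile without ever materializing that exponentially large matrix. I will write $\monbas^{(k)} \subseteq \monbas$ for the subset of basis monomials $\vars^\expnts\coordVec{i}$ with $e_j=0$ for all $j>k$, so that $\monbas^{(\nvars)}=\monbas$.

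For correctness the key invariant I would maintain is the following: at the end of the $\expnt$-th iteration of the inner loop for variable $k$, the monomials $\ordIndex^{-1}(\rkprof_1),\ldots,\ordIndex^{-1}(\rkprof_\delta)$ enumerate, in $\ord$-increasing order, exactly the elements of $\monbas^{(k)}$ whose $X_k$-exponent is strictly less than $2^{\expnt+1}$, and the retained rows of $\basMat$ are the corresponding vectors $\mu\mul\evMat$. The base case, after step~5, should follow from a direct argument: $\coordVec{i}\in\monbas^{(0)}$ iff no syzygy has leading monomial $\coordVec{i}$, iff $\coordVec{i}\mul\evMat$ is linearly independent from $\{\coordVec{i'}\mul\evMat : \coordVec{i'}\ord\coordVec{i}\}$, which is exactly the condition detected by the row rank profile of the permuted $\evMat$. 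The inductive step is where I expect the main effort: I would need to check that steps (iv)--(vi) faithfully stack the previous rank-profile rows together with their multiples by $\mat{P}=\mulmat{k}^{2^\expnt}$ in the correct $\ord$-order, and then use the staircase property of $\monbas$ (every divisor of a basis monomial is itself a basis monomial) to conclude that every $\mu\in\monbas^{(k)}$ with $\deg_{X_k}(\mu)<2^{\expnt+1}$ either already lies in the previous rank profile or writes as $X_k^{2^\expnt}\mu'$ with $\mu'$ in it; the rank profile of the enlarged $\basMat$ therefore captures precisely the claimed enlarged subset. The transition between outer iterations $k-1$ and $k$ is then for free, since at the end of iteration $k-1$ the invariant has stabilized to all of $\monbas^{(k-1)}$.

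Termination of the $k$-th inner loop follows from the fact that the maximum $X_k$-exponent appearing in $\monbas$ is at most $d_k-1$: otherwise the staircase property would force $\coordVec{i}, X_k\coordVec{i},\ldots,X_k^{d_k}\coordVec{i}$ to lie in $\monbas$ and therefore project to linearly independent rows via $\evMat$, contradicting the minimal polynomial relation of $\mulmat{k}$. Hence the loop saturates after $\bigO{1+\log d_k}$ doublings, and the stopping test detects this. For the complexity, the rank $\delta$ is monotonically non-decreasing and bounded by $\vsdim$, so $\basMat$ has at most $2\vsdim$ rows throughout, and each inner iteration costs $\bigO{\vsdim^\expmatmul}$: this is dominated by the matrix product $\basMat\mat{P}$, the squaring $\mat{P}\assign\mat{P}^2$, and the \algoname{RowRankProfile} call on a $2\delta\times\vsdim$ matrix of rank at most $\delta$, all within this bound via the cited fast algorithms. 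Summing over the $\bigO{1+\log d_k}$ iterations for each of the $\nvars$ variables gives $\bigO{\vsdim^\expmatmul(\nvars+\log(d_1\cdots d_\nvars))}$, and adding the $\bigO{\rdim\vsdim^{\expmatmul-1}}$ cost of step~5 yields the announced bound.
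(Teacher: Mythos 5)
The proposed loop invariant is not correct, and the gap is substantive. You claim that at the end of iteration $\expnt$ for variable $k$, the retained monomials are exactly the elements of $\monbas^{(k)}$ (your notation for the subset of the final monomial basis $\monbas$ involving only $X_1,\ldots,X_k$) with $X_k$-exponent $<2^{\expnt+1}$. That is false in general: the intermediate rank profile is that of the truncated multi-Krylov matrix built from $\mulmat{1},\ldots,\mulmat{k}$ alone, and it can contain monomials that are \emph{not} in $\monbas$. Concretely, take $\nvars=2$, $\rdim=1$, $\vsdim=3$, $\evMat=(1,0,0)$, $\mulmat{1}$ the $3\times3$ upper shift matrix, $\mulmat{2}$ the nilpotent matrix with a single $1$ in position $(1,3)$, and $\ord=\ordLex$ with $X_2 \ordLex X_1$. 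Then $X_2\mul\evMat=X_1^2\mul\evMat=(0,0,1)$, so $X_1^2$ is $\ord$-reduced by $X_2$ and $\monbas=\{1,X_2,X_1\}$, hence $\monbas^{(1)}=\{1,X_1\}$. But after the $k=1$ loop exits, the algorithm retains $\{1,X_1,X_1^2\}$, since with only $\mulmat{1}$ available these three rows are independent and $X_1^2$ has the lowest available $\ord$-index attaining the third rank. Your base case has the same flaw: $\coordVec{i}\in\monbas$ is not equivalent to linear independence within $\{\coordVec{i'}\mul\evMat : \coordVec{i'}\ord\coordVec{i}\}$ unless the coordinate vectors happen to be an initial segment of all monomials under $\ord$ (true for $\ordTOP{\ord}$ orders, false for $\ordPOT{\ord}$).

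The object the algorithm actually tracks, and what the paper's proof maintains as invariants $A_1,A_2$, is the row rank profile of the submatrix $\mat{C}_{k,e}$ of $\krylov{\ord,\maxDegs}$ formed by the rows whose exponent tuple is $<(\maxDeg,\dots,\maxDeg,2^e,1,\dots,1)$. By \cref{thm:rrp_monbas} applied to $\field[X_1,\dots,X_k]^\rdim$, this rank profile is the $\ord$-monomial basis of the $k$-variable syzygy quotient, \emph{not} the intersection $\monbas\cap\field[X_1,\dots,X_k]^\rdim$; these coincide only at $k=\nvars$. Consequently your ``staircase property of $\monbas$'' argument for the inductive step, and your termination bound via the maximal $X_k$-exponent \emph{of $\monbas$}, address the wrong set. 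Both can be repaired by replacing $\monbas^{(k)}$ with the rank profile of $\mat{C}_{k,\cdot}$ (as the paper does in \cref{lem:gamma_subsequence,lemma:logdk,lem:rows_are_rows}), but as written the invariant you rely on is false and the proof does not go through. The cost analysis in your last paragraph is fine and matches the paper's.
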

\begin{proof}\renewcommand{\qedsymbol}{}
  Let $\maxDeg=2^{\lceil \log_2(\vsdim) \rceil + 1}$ as in the algorithm; for
  \(1\le k\le \nvars\) and $0 \le \expnt \le \log_2(\maxDeg)$, let us consider
  the set of monomials
  \[
    \expSet_{k,\expnt} = \{\vars^{\expnts} \coordVec{i} \mid 1\le i \le
    \rdim, \; 0\le \expnts < (\maxDeg,\ldots,\maxDeg,2^e,1,\ldots,1)\},
  \]
  where \(2^e\) is the \(k\)th entry of the tuple. Then we denote by
  \(\mat{C}_{k,\expnt} \in \matRing[(\rdim \maxDeg^{k-1}2^e)][\vsdim]\) the
  submatrix of \(\krylov{\ord,\maxDegs}\) formed by its rows in
  \(\ordIndex(\expSet_{k,\expnt})\). For $0 \le \expnt \le \expnt' \le
  \log_2(\maxDeg)$, \(\mat{C}_{k,\expnt}\) is a submatrix of
  \(\mat{C}_{k,\expnt'}\), but not necessarily a top submatrix of it. Remark
  also that for $k < r$, $\mat{C}_{k,\log_2(\maxDeg)} = \mat{C}_{k+1,0}$.

  The correctness of the algorithm is proved by an induction that involves $k$ and
  $e$. For $1 \le k \le r$, denote by $\ell_k \ge 0$ the last value of the index
  $\expnt$ for which we enter the body of the \algoword{Do-Until} loop
  (Step~\textbf{6.b}). Then, for $1\le k\le r$ and $0 \le e \le \ell_k$, define
  the following assertions, that we consider at the beginning iteration
  \((k,\expnt)\) of the \algoword{Do-Until} loop:
  \begin{itemize}
    \item [$A_1:$] the rows of indices $i_1,\dots,i_{\hat
      \delta}$ in \(\basMat\) are the rows defining the row
      rank profile of \(\mat{C}_{k,e}\);
    \item [$A_2:$] the entries of indices $i_1,\dots,i_{\hat \delta}$ in
      $\tuple t$ are the indices of these rows in
      \(\krylov{\ord,\maxDegs}\).
  \end{itemize}
  We will prove by induction that these properties hold for all values of $k$
  and $e$ considered above.  For $k=1$ and $e=0$, $\mat{C}_{1,0}$ is the
  submatrix of \(\krylov{\ord,\maxDegs}\) with rows in
  \(\{\ordIndex(\coordVec{1}), \ldots, \ordIndex(\coordVec{\rdim})\}\);
  therefore, by choice of the permutation \(\pi\) at Step~\textbf{3}, we have
  \(\mat{C}_{1,0} = \pi\evMat=\mat{B}\) at Step~\textbf{4}. Thus, upon entering
  the \algoword{For} loop for the first time, with $k=1$ and $e=0$, we see that
  $A_1$ and $A_2$ hold.

  Then let \(k\) be in \(\{1,\ldots,\nvars\}\) and $\expnt$ in
  $\{0,\dots,\ell_k\}$; we assume that $A_1$ and $A_2$ hold at indices $k$ and
  $e$. Let us denote \(\rkprofs = \{\rkprof_1,\ldots,\rkprof_\delta\}\) as
  defined in Step~\textbf{6.b.}(ii), and let \(\hat\rkprofs =
  \{\hat\rkprof_1,\ldots,\hat\rkprof_\delta\}\), where \(\hat{\rkprof}_j =
  \ordIndex(\var_k^{2^\expnt}\ordIndex^{-1}(\rkprof_j))\) for \(1\le j\le
  \delta\) are the indices computed at Step~\textbf{6.b.}(iv). Let also
  \((\gamma_1,\ldots,\gamma_\nu)\) be the indices of the rows of
  \(\krylov{\ord,\maxDegs}\) corresponding to the row rank profile of its
  submatrix \(\mat{C}_{k,\expnt+1}\). To complete this proof, we will use three
  intermediate lemmas; first, we claim that the following holds:
\end{proof}
  \begin{lemma}
    \label{lem:gamma_subsequence}
    \((\gamma_1,\ldots,\gamma_\nu)\) is a subsequence of the tuple
    \(\tuple{t}\).
  \end{lemma}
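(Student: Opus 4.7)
The plan is to reduce the subsequence claim to set membership: since both $(\gamma_1,\ldots,\gamma_\nu)$ and $\tuple{t}$ are strictly increasing tuples of $\krylov{\ord,\maxDegs}$-row-indices (and hence, under $\ordIndex^{-1}$, of $\ord$-increasing monomials), it is enough to prove that every $\gamma_j$ appears among the entries $\{\rkprof_1,\ldots,\rkprof_\delta,\hat\rkprof_1,\ldots,\hat\rkprof_\delta\}$ of $\tuple{t}$. I would fix such a $\gamma$, set $\vars^{\expnts}\coordVec{i} = \ordIndex^{-1}(\gamma) \in \expSet_{k,\expnt+1}$, and split according to whether $\vars^{\expnts}\coordVec{i}$ lies in $\expSet_{k,\expnt}$ or in the complementary ``shifted'' half $\expSet_{k,\expnt+1} \setminus \expSet_{k,\expnt} = X_k^{2^\expnt}\expSet_{k,\expnt}$.

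In the first case, $\mat{C}_{k,\expnt}$ is a row-submatrix of $\mat{C}_{k,\expnt+1}$ whose $\ord$-smaller rows form a subset of those of $\mat{C}_{k,\expnt+1}$; hence linear independence of the row indexed $\gamma$ from all $\ord$-smaller rows in $\mat{C}_{k,\expnt+1}$ implies the same property inside $\mat{C}_{k,\expnt}$, so $\gamma$ lies in the row rank profile of $\mat{C}_{k,\expnt}$. By the induction hypothesis $A_1, A_2$ at indices $(k,\expnt)$, this profile is precisely $\{\rkprof_1,\ldots,\rkprof_\delta\}$, giving the conclusion. In the second case, write $\vars^{\expnts}\coordVec{i} = X_k^{2^\expnt} \mu$ with $\mu \in \expSet_{k,\expnt}$, so that $\gamma = \ordIndex(X_k^{2^\expnt}\mu)$; the goal is to show that $\ordIndex(\mu) \in \{\rkprof_1,\ldots,\rkprof_\delta\}$, from which $\gamma \in \{\hat\rkprof_1,\ldots,\hat\rkprof_\delta\}$ follows by the definition of $\hat\rkprof_j$ at Step~\textbf{6.b.}(iv). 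I would argue by contrapositive: if $\ordIndex(\mu)$ were absent from the row rank profile of $\mat{C}_{k,\expnt}$, then $\mu \mul \evMat$ would be a $\field$-linear combination of rows $\mu' \mul \evMat$ with $\mu' \in \expSet_{k,\expnt}$ and $\mu' \ord \mu$; right-multiplying by $\mulmat{k}^{2^\expnt}$ and using the identity $(\nu \mul \evMat)\mulmat{k}^{2^\expnt} = (X_k^{2^\expnt}\nu)\mul\evMat$ (valid because the $\mulmat{j}$'s commute) transports this into a linear relation for $(X_k^{2^\expnt}\mu)\mul\evMat$ in terms of the rows $(X_k^{2^\expnt}\mu')\mul\evMat$. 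The monomial order axiom $\mu' \ord \mu \Rightarrow X_k^{2^\expnt}\mu' \ord X_k^{2^\expnt}\mu$ ensures every such $X_k^{2^\expnt}\mu'$ is $\ord$-smaller than $X_k^{2^\expnt}\mu$ and still lies in $\expSet_{k,\expnt+1}$, contradicting the fact that $\gamma$ belongs to the row rank profile of $\mat{C}_{k,\expnt+1}$.

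The main obstacle I foresee is not a hidden difficulty of the argument itself but rather the bookkeeping between the global $\krylov{\ord,\maxDegs}$-indexing, used to state the lemma, and the local positions inside the submatrices $\mat{C}_{k,\expnt}$ and $\mat{C}_{k,\expnt+1}$; one must check that the row rank profile in the local ordering corresponds, via $\ordIndex$, to a strictly increasing tuple of global indices, so that ``subsequence of $\tuple{t}$'' really follows from pointwise membership. The crux of the second case is the passage $\mu \mapsto X_k^{2^\expnt}\mu$ on linear relations, which is exactly the structural property that makes the Keller-Gehrig-style doubling scheme usable here.
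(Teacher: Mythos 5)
Your proposal is correct and follows essentially the same two-case decomposition as the paper's proof: it first treats the case where $\ordIndex^{-1}(\gamma_j) \in \expSet_{k,\expnt}$ by noting that the row rank profile of a submatrix cannot lose indices that survive in the larger matrix, then handles $\ordIndex^{-1}(\gamma_j) \in \expSet_{k,\expnt+1}\setminus\expSet_{k,\expnt}$ by the same contrapositive right-multiplication by $\mulmat{k}^{2^\expnt}$ invoking $A_2$. You are slightly more explicit than the paper about the monomial-order compatibility $\mu' \ord \mu \Rightarrow X_k^{2^\expnt}\mu' \ord X_k^{2^\expnt}\mu$ and about the reduction from ``subsequence'' to set membership via strict monotonicity of both tuples, but these are details the paper leaves implicit rather than genuine departures.
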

  \begin{proof}[Proof of \cref{lem:gamma_subsequence}]
    Let $j$ be in $\{1,\dots,\nu\}$ and let us prove that \(\gamma_j\) is in
    \(\rkprofs\cup\hat{\rkprofs}\). By assumption, the row of index $\gamma_j$
    in \(\krylov{\ord,\maxDegs}\) is not a linear combination of the rows of
    smaller indices in the submatrix \(\mat{C}_{k,\expnt+1}\) of
    \(\krylov{\ord,\maxDegs}\).

    Suppose first that \(\ordIndex^{-1}(\gamma_j)\) is in
    \(\expSet_{k,\expnt}\), so that $\gamma_j$ actually corresponds to a row in
    \(\mat{C}_{k,\expnt}\).  Since \(\mat{C}_{k,\expnt}\) is a submatrix of
    \(\mat{C}_{k,\expnt+1}\), the remark above implies that the row of index
    $\gamma_j$ in \(\krylov{\ord,\maxDegs}\) is not a linear combination of
    the rows of smaller indices in \(\mat{C}_{k,\expnt}\). This means that
    this row belongs to the row rank profile of \(\mat{C}_{k,\expnt}\), and so
    (by $A_2$) $\gamma_j$ is in \(\rkprofs\).

    Now, we assume that \(\ordIndex^{-1}(\gamma_j) \in \expSet_{k,\expnt+1} -
    \expSet_{k,\expnt}\), and we prove that \(\gamma_j \in \hat\rkprofs\), or
    in other words, that \(\ordIndex^{-1}(\gamma_j) \in \{X_k^{2^e}
    \ordIndex^{-1}(\rkprof_j) \mid 1\le j\le \delta\}\). Since
    \(\ordIndex^{-1}(\gamma_j) \in \expSet_{k,\expnt+1} - \expSet_{k,\expnt}\),
    we can write \(\ordIndex^{-1}(\gamma_j) = X_k^{2^e} \vars^{\bm{f}}
    \coordVec{i}\), with \(\vars^{\bm{f}} \coordVec{i}\) in
    \(\expSet_{k,\expnt}\). Suppose that \(\vars^{\bm{f}} \coordVec{i}\) is not
    in $\ordIndex^{-1}(\rkprofs)$, so that by $A_2$, the row of index
    $\ordIndex(\vars^{\bm{f}} \coordVec{i})$ in $\krylov{\ord,\maxDegs}$ is a
    linear combination of the previous rows in $\mat{C}_{k,\expnt}$.
    Right-multiply all these rows by $\mulmat{k}^{2^e}$; this shows that the
    row indexed by $\gamma_j$ in \(\krylov{\ord,\maxDegs}\) is a linear
    combination of rows of smaller indices in its submatrix
    \(\mat{C}_{k,\expnt+1}\), a contradiction.  Our claim is proved.
  \end{proof}

  Now, by $A_1$ and $A_2$, after the update at Steps~\textbf{6.b.}(vi), the
  rows of $\mat{B}$ are precisely the rows of \(\krylov{\ord,\maxDegs}\) of
  indices in $\rkprofs \cup \hat\rkprofs$ sorted in increasing order.  Thus,
  after the row rank profile computation at Step~\textbf{6.b.}(vii), the rows
  in $\mat{B}$ of indices \(i_1,\ldots,i_{\hat\delta}\) are the rows of
  $\mat{C}_{k,e+1}$ corresponding to its row rank profile, and the subtuple of
  \(\tuple{t}\) with entries \(i_1,\ldots,i_{\hat\delta}\) is precisely
  $(\gamma_1,\dots,\gamma_\nu)$. 

  If $e < \ell_k$, this implies that $A_1$ and $A_2$ still hold at step
  $(k,e+1)$. Suppose next that instead, $e=\ell_k$. We claim the following.

  \begin{lemma}\label{lemma:logdk}
    We have $\ell_k \le \lceil \log_2(d_k) \rceil$; equivalently, if we exit
    the \algoword{Do-Until} loop at the end of iteration $e$, then $e \le
    \lceil \log_2(d_k) \rceil$.
  \end{lemma}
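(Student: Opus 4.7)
The plan is to show that if the \algoword{Do-Until} loop body is entered at iteration $e = \lceil \log_2(d_k) \rceil$, then the exit condition holds at its end. Writing $L = \lceil \log_2(d_k) \rceil$, invariants $A_1$ and $A_2$ together with \cref{lem:gamma_subsequence} show that, after Step~\textbf{6.b.}(vii), the data in $\basMat$ and $(i_1,\ldots,i_{\hat\delta})$ corresponds to the row rank profile of $\mat{C}_{k,e+1}$. Hence the exit condition at the end of iteration $e = L$ is equivalent to the row rank profile of $\mat{C}_{k,L+1}$ being equal to that of $\mat{C}_{k,L}$.

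The main claim I would prove is: every row of $\mat{C}_{k,L+1}$ indexed by a monomial $\mu \in \expSet_{k,L+1} \setminus \expSet_{k,L}$ is a $\field$-linear combination of rows indexed by monomials $\mu' \in \expSet_{k,L}$ with $\mu' \prec \mu$. To establish this, write $\mu = X_k^{2^L + a}\mu_0$ with $0 \le a < 2^L$ and $\mu_0$ a monomial with $X_k$-exponent zero (so $\mu_0 \in \expSet_{k,0}$). Since $2^L + a \ge d_k$, the matrix $\mulmat{k}^{2^L + a}$ lies in the $\field$-span of $\idMat{\vsdim}, \mulmat{k}, \ldots, \mulmat{k}^{d_k - 1}$, and consequently the row at $\mu$ is a $\field$-linear combination of the rows at $X_k^c \mu_0$ for $0 \le c < d_k$. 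Each $X_k^c \mu_0$ lies in $\expSet_{k,L}$ (since $c < d_k \le 2^L$), and the monomial order property gives $X_k^c \mu_0 \prec X_k^{2^L + a}\mu_0 = \mu$ since $c < 2^L + a$.

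Given this claim, equality of the two rank profiles follows by an induction over the monomials in $\ord$-order, comparing the greedy row-selection processes on $\mat{C}_{k,L}$ and on $\mat{C}_{k,L+1}$: every row indexed by a monomial in $\expSet_{k,L+1} \setminus \expSet_{k,L}$ is skipped during the scan of $\mat{C}_{k,L+1}$ by the claim, so the span of already-kept rows at each stage coincides with the corresponding span in the scan of $\mat{C}_{k,L}$, and rows indexed by monomials in $\expSet_{k,L}$ therefore make identical keep/skip decisions in both scans. The subtle point, and the main obstacle, is that the interleaving of new and old monomials under $\ord$ could in principle disrupt the rank profile by displacing some old rank-profile indices; this is ruled out precisely because the claim expresses each new row's dependence in terms of strictly $\ord$-smaller rows that already lie in $\expSet_{k,L}$. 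We conclude that the exit condition holds at the end of iteration $L$, so $\ell_k \le L = \lceil \log_2(d_k) \rceil$.
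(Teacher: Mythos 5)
Your proof is correct and follows essentially the same strategy as the paper: once the $X_k$-exponent reaches $d_k$, powers of $\mulmat{k}$ become linearly redundant, so the greedy rank-profile computation stabilizes. One small technical refinement is worth noting. You factor out the \emph{full} $X_k$-power $X_k^{2^L+a}$ from each new monomial and apply the minimal polynomial, so every new row is expressed as a combination of strictly $\prec$-smaller rows whose $X_k$-exponent is $<d_k\le 2^L$ — that is, of \emph{old} rows already in $\mat{C}_{k,L}$ — which makes equality of the two rank profiles immediate via the comparison of the two greedy scans. The paper instead factors out only $X_k^{2^{e-1}}$, writing each new row as a combination of lower-indexed rows of $\mat{C}_{k,e}$, some of which may themselves be new; concluding that the two rank profiles coincide therefore relies on an implicit secondary induction on indices, which your stronger intermediate claim avoids. (The paper also argues by contradiction from $\ell_k\ge L+1$ whereas you argue directly that the exit condition triggers at iteration $L$; these are logically equivalent.)
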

  \begin{proof}[Proof of \cref{lemma:logdk}]
    First, we observe that the indices of the rows in
    \(\krylov{\ord,\maxDegs}\) corresponding to the row rank profile of
    $\mat{C}_{k,e-1}$, and of those corresponding to the row rank profile of
    $\mat{C}_{k,e}$, are different. Indeed, $A_1$ and $A_2$ show that the
    former correspond to indices $(\rho_1,\dots,\rho_\delta)$ obtained at
    Step~\textbf{6.b.}(ii) at iteration $e-1$, the latter to the same indices
    at iteration $e$, and the fact that we did not exit the loop at step $e-1$
    implies that they are different.

    Suppose then, by means of contradiction, that $e \ge \lceil \log_2(d_k)
    \rceil+1$, so that $e \ge \log_2(d_k) + 1$. Consider a row $\row{r}$ in
    $\mat{C}_{k,e}$ that is not in $\mat{C}_{k,e-1}$; then, $\row{r} = \row{s}
    \mulmat{k}^{2^{e-1}}$ for some row $\row{s}$ in $\mat{C}_{k,e-1}$.  By
    assumption, $2^{e-1}$ is at least equal to the degree $d_k$ of the minimal
    polynomial of $\mulmat{k}$. In particular, $\mulmat{k}^{2^{e-1}}$ is a
    linear combination of powers of $\mulmat{k}$ of exponent less than
    $2^{e-1}$. Now, all rows $\row{s} \mulmat{k}^i$, for $i < 2^{e-1}$, are in
    $\mat{C}_{k,e}$, and have lower indices than $\row{r}$. This implies that
    $\row{r}$ is not in the row rank profile of $\mat{C}_{k,e}$, and thus
    $\mat{C}_{k,e-1}$ and $\mat{C}_{k,e}$ have the same row rank profile. This
    contradicts the property in the previous paragraph, hence $e \le \lceil
    \log_2(d_k) \rceil$.
  \end{proof}

  Using this property, we prove that $A_1$ and $A_2$ now hold for indices $k+1$
  and $e=0$ (this will be enough to conclude our induction proof).  
  \begin{lemma}
    \label{lem:rows_are_rows}
    If we exit the \algoword{Do-Until} loop after step $e$ (equivalently, if
    $e=\ell_k$), then the rows in $\mat{B}$ of indices
    \(i_1,\ldots,i_{\hat\delta}\) are the rows of $\mat{C}_{k,\log_2(\maxDeg)}$
    corresponding to its row rank profile, and the subtuple of \(\tuple{t}\)
    with entries \(i_1,\ldots,i_{\hat\delta}\) is the indices of these rows in
    \(\krylov{\ord,\maxDegs}\).
  \end{lemma}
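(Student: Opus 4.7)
The plan is to show that, once the exit condition holds at iteration $e$, the row rank profile of $\mat{C}_{k,e'}$ remains equal to $(\rho_1,\ldots,\rho_\delta)$ for every $e \le e' \le \log_2(\maxDeg)$. Combined with the invariants $A_1$ and $A_2$ as established at iteration $(k,e+1)$ right after Step \textbf{6.b.}(vii), this identifies the rows of $\mat{B}$ at positions $i_1,\ldots,i_{\hat\delta}$ as the rank profile rows of $\mat{C}_{k,\log_2(\maxDeg)}$ and the subtuple of $\tuple{t}$ at those positions as their indices in $\krylov{\ord,\maxDegs}$.

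First I would restate the exit condition more usefully. By \cref{lem:gamma_subsequence} and the discussion just before this lemma, the rank profile computed at Step \textbf{6.b.}(vii) agrees (after mapping positions through $\tuple{t}$) with the rank profile of $\mat{C}_{k,e+1}$; hence the exit condition is equivalent to: ``the row rank profile of $\mat{C}_{k,e+1}$ equals $(\rho_1,\ldots,\rho_\delta)$''. Setting $\mu_j = \ordIndex^{-1}(\rho_j)$ and writing $\row{r}_\gamma$ for the row of $\krylov{\ord,\maxDegs}$ at index $\gamma$, this in turn implies that for every $j$, the row $\row{r}_{\rho_j}\mulmat{k}^{2^e}$, which sits at index $\ordIndex(\mu_j X_k^{2^e}) \in \ordIndex(\expSet_{k,e+1})\setminus\ordIndex(\expSet_{k,e})$, lies in $\mathrm{span}\{\row{r}_{\rho_i} : \rho_i \le \ordIndex(\mu_j X_k^{2^e})\}$.

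The next step, which I expect to be the delicate bookkeeping obstacle, is a short induction on $q \ge 1$ showing that $\row{r}_{\rho_j}\mulmat{k}^{q\cdot 2^e}$ belongs to $\mathrm{span}\{\row{r}_{\rho_i} : \rho_i \le \ordIndex(\mu_j X_k^{q\cdot 2^e})\}$. The inductive step expands $\row{r}_{\rho_j}\mulmat{k}^{q\cdot 2^e}$ using the previous step, multiplies each summand by $\mulmat{k}^{2^e}$ using the base case, and uses the monomial-order axiom (if $\mu_i \ord \mu_j X_k^{q\cdot 2^e}$ or the two are equal, then $\mu_i X_k^{2^e} \ord \mu_j X_k^{(q+1)\cdot 2^e}$ or the two are equal) to bound the indices appearing in the resulting combination.

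Finally I would deduce the result for arbitrary $e'$: given $\mu$ with $\ordIndex(\mu) \in \ordIndex(\expSet_{k,e'})$, write the $X_k$-degree of $\mu$ as $q\cdot 2^e + r$ with $0 \le r < 2^e$ and factor $\mu = \mu^* X_k^{q\cdot 2^e}$ with $\mu^* \in \expSet_{k,e}$. Expressing $\row{r}_{\ordIndex(\mu^*)}$ in the rank-profile basis of $\mat{C}_{k,e}$, right-multiplying by $\mulmat{k}^{q\cdot 2^e}$, and applying the iterated bound (using that $\mu_i \ord \mu^*$ or equality implies $\mu_i X_k^{q\cdot 2^e} \ord \mu$ or equality) yields $\row{r}_{\ordIndex(\mu)} \in \mathrm{span}\{\row{r}_{\rho_i} : \rho_i \le \ordIndex(\mu)\}$. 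Consequently, every row of $\mat{C}_{k,e'}$ at an index outside $\{\rho_1,\ldots,\rho_\delta\}$ is a linear combination of rank-profile rows with strictly smaller indices, so the row rank profile of $\mat{C}_{k,e'}$ is exactly $(\rho_1,\ldots,\rho_\delta)$; specializing to $e' = \log_2(\maxDeg)$ concludes.
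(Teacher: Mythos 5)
Your proposal is correct, and the overall plan matches the paper's: reinterpret the exit condition as saying that $\mat{C}_{k,e}$ and $\mat{C}_{k,e+1}$ have the same row rank profile, then propagate this equality up to $\mat{C}_{k,\log_2(\maxDeg)}$ using the compatibility of the monomial order with multiplication by $X_k$. The difference is in how the propagation is carried out. The paper does a single strong induction on the row index in $\mat{C}_{k,\log_2(\maxDeg)}$: each row $\row{r}$ not in $\mat{C}_{k,e}$ is factored as $\row{s}\,\mulmat{k}^c$ with $\row{s}$ a row of $\mat{C}_{k,e+1}$ and $c\ge 0$ arbitrary (not a multiple of $2^e$), $\row{s}$ is written in terms of lower rows of $\mat{C}_{k,e}$, this expression is right-multiplied by $\mulmat{k}^c$, and the induction hypothesis on the resulting lower-index rows closes the argument. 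You instead prove a separate intermediate statement by induction on $q$ --- that $\row{r}_{\rho_j}\mulmat{k}^{q\cdot 2^e}$ lies in the span of the rank-profile rows of not-greater index --- and then apply it after decomposing the $X_k$-exponent of each target monomial as $q\cdot 2^e + r$. Both arguments rest on the same monomial-order axiom to bound the indices appearing in the linear combinations, so they are essentially equivalent in content; the paper's version avoids the $q\cdot 2^e+r$ bookkeeping by exploiting that $c$ can be arbitrary, while yours makes the linear-algebra mechanism a bit more explicit. Two small points worth making explicit in your write-up: your induction on $q$ should be restricted to those $q$ for which $\mu_j X_k^{q\cdot 2^e}$ remains in the domain of $\ordIndex$ (i.e.\ stays within the degree bounds $\maxDegs$), and your final step needs a separate (trivial) treatment of the case $q=0$, where $\mu=\mu^*\in\expSet_{k,e}$ and the rank-profile expression of $\mat{C}_{k,e}$ already suffices without invoking the iterated bound.
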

  \begin{proof}[Proof of \cref{lem:rows_are_rows}]
    Our assumption means that \((\gamma_1,\ldots,\gamma_\nu) =
    (\rho_1,\dots,\rho_\delta)\); this is equivalent to saying that the indices
    in \(\krylov{\ord,\maxDegs}\) of the row rank profiles of its submatrices
    \(\mat{C}_{k,\expnt}\) and \(\mat{C}_{k,\expnt+1}\) are the same. In
    particular, any row in \(\mat{C}_{k,\expnt+1}\) is a linear combination of
    rows of lower indices in \(\mat{C}_{k,\expnt}\).

    We will prove the following below: \emph{any row in
    \(\mat{C}_{k,\log_2(\maxDeg)}\) is a linear combination of rows of lower
  indices in \(\mat{C}_{k,\expnt}\)}. In particular, this implies that the
  indices in \(\krylov{\ord,\maxDegs}\) of the row rank profiles of its
  submatrices \(\mat{C}_{k,e+1}\) and \(\mat{C}_{k,\log_2(\maxDeg)}\) are all
  the same.  Since we saw that after Step~\textbf{6.b.}(vii), the rows in
  $\mat{B}$ of indices \(i_1,\ldots,i_{\hat\delta}\) are the rows of
  $\mat{C}_{k,e+1}$ corresponding to its row rank profile, and the subtuple of
  \(\tuple{t}\) with entries \(i_1,\ldots,i_{\hat\delta}\) are the indices of
  these rows in \(\krylov{\ord,\maxDegs}\), this is enough to prove the lemma.

    We prove our claim by induction on the rows of
    $\mat{C}_{k,\log_2(\maxDeg)}$. Let thus $\row{r}$ be a row in
    $\mat{C}_{k,\log_2(\maxDeg)}$, and assume the claim holds for all previous
    rows.

    If $\row{r}$ is from its submatrix $\mat{C}_{k,e}$, we are done. Else,
    since $e+1 \le \log_2(\maxDeg)$ holds (from the previous lemma), $\row{r}$
    can be written as $\row{r}=\row{s} \mat{M}_k^c$, for some $c \ge 0$, where
    $\row{s}$ is a row in $\mat{C}_{k,e+1}$. We know that $\row{s}$ is a linear
    combination of rows $\row{s}_1,\dots,\row{s}_\iota$ of lower indices in
    $\mat{C}_{k,e}$, so that $\row{r}$ is a linear combination of $\row{s}_1
    \bm{M}_k^c,\dots,\row{s}_\iota \bm{M}_k^c$. All these rows are in
    $\mat{C}_{k,\log_2(\maxDeg)}$ and have lower indices than $\row{r}$. By our
    induction assumption, they are linear combinations of rows of lower indices
    in \(\mat{C}_{k,\expnt}\), and thus so is $\row{r}$.
  \end{proof}

  \begin{proof}[(Continuing proof of \cref{prop:algo:monomial_basis}.)]
  If \(k<\nvars\) then we can turn to the next variable \(\var_{k+1}\), since
  the former lemma, together with the equality \(\mat{C}_{k,\log_2(\maxDeg)} =
  \mat{C}_{k+1,0}\), shows that $A_1$ and $A_2$ hold for indices $(k+1,0)$.
  For $k=r$, since \(\mat{C}_{r,\log_2(\maxDeg)}=\krylov{\ord,\maxDegs}\), the
  lemma shows that the output of the algorithm is indeed the row rank profile
  of \(\krylov{\ord,\maxDegs}\) and the submatrix of \(\krylov{\ord,\maxDegs}\)
  formed by the corresponding rows.  According to \cref{thm:rrp_monbas}, the
  \(\ord\)-monomial basis can directly be deduced from the rank profile of
  \(\krylov{\ord,\maxDegs}\). This concludes the proof of correctness.

  Concerning the cost bound, according to \cite[Thm.\,2.10]{Storjohann00}, the
  row rank profile computation at Step~\textbf{5} can be computed in
  \(\bigO{\matRank^{\expmatmul-2} \rdim \vsdim}\) operations, where \(\matRank
  \le D\) is the rank of \(\evMat\). In particular, this is \(\bigO{\rdim
  \vsdim^{\expmatmul-1}}\).

  Let us now focus on the iteration \((k,\expnt)\) and we show that it uses
  \(\bigO{\vsdim^\expmatmul}\) operations. First, note that $\delta \le D$
  holds throughout (since $\delta$ is the rank of a matrix with $D$ columns).
  Since upon entering Step~\textbf{6.b.}(vi), \(\basMat\) has \(\delta \le
  \vsdim\) rows and \(\vsdim\) columns, its update and the row rank profile of
  the latter can be computed in \(\bigO{\vsdim^\expmatmul}\) operations.
  Finally, squaring the \(\vsdim\times\vsdim\) matrix \(\mat{P}\) at
  Step~\textbf{6.b.}(viii) is also done in \(\bigO{\vsdim^\expmatmul}\)
  operations.

  To conclude the proof of the cost bound, we recall from \cref{lemma:logdk}
  that in iteration \(k\) of the \algoword{For} loop, we pass $\ell_k+1 \le
  \lceil \log_2(d_k) \rceil + 1$ times through the body of the
  \algoword{Do-Until} loop.
\end{proof}

\begin{remark}
  \label{rmk:krylov_lextop}
  We may slightly refine the analysis for some particular monomial orders.
  Indeed, the order in which the \algoword{For} and \algoword{Do-Until} loops
  introduce the new monomials to be processed corresponds to the
  \(\ordLex\)-term over position order \(\ordTOP{\ordLex}\) over \(\relSpace\),
  with \(X_1 \ordLex \cdots \ordLex X_\nvars\). As a result, the behaviour and
  the cost bound of the algorithm can be described with more precision if the
  input monomial order is \(\ord\; = \ordTOP{\ordLex}\).

  In this case, we are processing the rows of \(\krylov{\ord,\maxDegs}\) in the
  order they are in the matrix. In particular, the permutation \(\pi\) at
  Steps~\textbf{3} and~\textbf{6.b.}(v) is always the identity matrix, and the
  tuple \((\rkprof_1,\ldots,\rkprof_\delta)\) inside the loops consists of the
  first \(\delta\) entries of the actual row rank profile of
  \(\krylov{\ord,\maxDegs}\).

  Furthermore, the fact that we are processing the rows in their actual order
  has a small impact on the cost bound, as follows. Let us denote by \(\gb\)
  the reduced \(\ord\)-Gr\"obner basis of \(\modRel\), and let
  \(\boldsymbol{\hat\maxDeg} = (\hat\maxDeg_1,\ldots,\hat\maxDeg_\nvars)\) be
  the tuple of maximum degrees in \(\gb\), that is, \(\hat\maxDeg_k =
  \max_{\rel \in \gb} \deg_{X_k}(\rel)\) for \(1 \le k \le \nvars\). 

  Then, at iteration \(k\) of the \algoword{For} loop, the \algoword{Do-Until}
  loop does at most \(\lceil \log_2(\hat\maxDeg_k+1) \rceil + 1\) iterations;
  indeed, when reaching the iteration that introduces powers of the variable
  \(X_k\) all greater than \(\hat\maxDeg_k\), the partial row rank profile
  \((\rkprof_1,\ldots,\rkprof_\delta)\) is not modified anymore, and
  the \algoword{Do-Until} loop exits. Now, considering the total number of
  iterations, we claim that
  \begin{equation}
    \label{eqn:upper_bound_sumlog}
    \sum_{1\le k\le \nvars} \log(\hat\maxDeg_k+1)
    \;\le\;
    \nvars \log\!\left(\frac{\hat\maxDeg_1+\cdots+\hat\maxDeg_\nvars}{\nvars}+ 1\right)
    \;\le\;
    \nvars \log\!\left(\frac{\vsdim}{\nvars}+2\right).
  \end{equation}
  As a result, when the input order is \(\ord\; = \ordTOP{\ordLex}\),
  \cref{algo:monomial_basis} uses
  \[
    \bigO{\rdim \vsdim^{\expmatmul-1} + \nvars \vsdim^\expmatmul
    \log\!\left(\frac{\vsdim}{\nvars}+2\right)}
  \]
  operations in \(\field\).

  We now prove our claim. The first inequality in \cref{eqn:upper_bound_sumlog}
  is a direct application of the arithmetic mean-geometric mean inequality.
  The second inequality follows from the bound
  \(\hat\maxDeg_1+\cdots+\hat\maxDeg_\nvars \le \vsdim + \nvars - 1\), which
  holds since the \(\ord\)-monomial basis, whose cardinality \(\dimvs\) is at
  most $\vsdim$, contains the
  \(1+\hat\maxDeg_1+\cdots+\hat\maxDeg_\nvars-\nvars\) distinct monomials
  specified hereafter. By definition of \(\boldsymbol{\hat\maxDeg}\), for each
  \(k \in \{1,\ldots,\nvars\}\) such that \(\hat\maxDeg_k > 0\), there is a
  monomial appearing in some element of \(\gb\) which is a multiple of
  \(X_k^{\hat\maxDeg_k} \coordVec{i_k}\) for some \(1 \le i_k \le \rdim\).
  Thus \(X_k^{\hat\maxDeg_k} \coordVec{i_k}\) is either in the
  \(\ord\)-monomial basis or in its border, which implies that the
  \(\ord\)-monomial basis contains \(\{\coordVec{i_k}, X_k \coordVec{i_k},
  \ldots, X_k^{\hat\maxDeg_k-1} \coordVec{i_k}\}\). Considering the union of
  all such sets for \(1 \le k \le \nvars\) (with the empty set if
  \(\hat\maxDeg_k = 0\)) yields at least
  \(1+(\hat\maxDeg_1-1)+\cdots+(\hat\maxDeg_\nvars-1)\) distinct monomials,
  since each intersection of a pair of such sets has at most one element (the
  coordinate vector).
\end{remark}

\subsection{Fast computation of the basis of syzygies}
\label{sec:syzygy:relbas}

Next, we present our algorithm to compute the reduced Gr\"obner basis of
$\modRel$. By definition, it can be described by the minimal generators of the
leading module of $\modRel$ along with the associated normal forms. We first
show how to use the knowledge of the monomial basis to compute such normal
forms efficiently. In all this section, we let $\mulmats =
(\mulmat{1},\ldots,\mulmat{\nvars})$ be pairwise commuting matrices in
$\matRing[\vsdim]$, $\evMat$ be in $\matRing[\rdim][\vsdim]$, $\ord$ be a
monomial order on $\relSpace$, and $\monbas$ be the \(\ord\)-monomial basis of
\(\relSpace/\modRel\).

\subsubsection{Simultaneous computation of normal forms of monomials}
\label{sec:syzygy:relbas:nf_rref}

First, for some arbitrary monomials $\{\vars^{\expnts_1} \coordVec{i_1},
\ldots, \vars^{\expnts_\ngens} \coordVec{i_\ngens}\}$, we give an algorithm
that computes their $\ord$-normal forms with respect to $\modRel$. Each of
these $\ord$-normal forms is a uniquely defined $\field$-linear combination of
the monomials in the $\ord$-monomial basis $\monbas$; the main task of our
algorithm is to find the coefficients of these \(\ngens\) combinations, which
we gather below in an \(\ngens\times\dimvs\) matrix \(\nfMat\).

In the linearized viewpoint, we associate the monomials $\{\vars^{\expnts_1}
\coordVec{i_1}, \ldots, \vars^{\expnts_\ngens} \coordVec{i_\ngens}\}$ with a
matrix $\termMat \in \matRing[\ngens][\vsdim]$, whose $j$th row is
$\vars^{\expnts_j} \coordVec{i_j} \cdot \evMat = \row{f}_{i_j}
\mulmats^{\expnts_j}$, where $\row{f}_{i_j}$ is the $i_j$th row of $\evMat$ (we
are using the notation of~\cref{dfn:syzygy_module}). Similarly, we associate the
monomial basis $\monbas$ with a matrix $\basMat \in \matRing[\dimvs][\vsdim]$,
where $\dimvs=\dim_\field (\relSpace / \modRel)$ as above; if we write
$\monbas=(\basVec{1} \ord \cdots \ord \basVec{\dimvs})$, then the $j$th row of
$\basMat$ is $\basVec{j} \mul \evMat$. 

Then the rows of $\termMat$ are $\field$-linear combinations of the rows of
$\basMat$: there is a matrix $\nfMat \in \matRing[\ngens][\dimvs]$ such that
$\termMat = \nfMat \basMat$. (The notation $\termMat$ stands for \emph{terms},
while $\basMat$ stands for \emph{basis}, and $\nfMat$ for \emph{normal forms}.)
To compute this matrix $\nfMat$, one can directly use $\nfMat = \termMat
\basMat^{-1}$ if $\basMat$ is square, and more generally one can use a similar
identity $\nfMat = \matt{\termMat} \matt{\basMat}^{-1}$ where $\matt{\basMat}$
is a \(\dimvs\times\dimvs\) invertible submatrix of \(\basMat\), as detailed in
Step~\textbf{1} of \cref{algo:normal_forms}.

\begin{algobox}
  \algoInfo
  {NormalForm}
  {algo:normal_forms}

  \dataInfos{Input}{
    \item matrix $\termMat \in \matRing[\ngens][\vsdim]$,
    \item list of monomials $\monbas = (\basVec{1},\dots,\basVec{\dimvs})$ in $\relSpace$,
    \item matrix $\basMat \in \matRing[\dimvs][\vsdim]$ with full row rank. }

  \dataInfo{Output}{%
    list $(\row{\nu}_1,\ldots,\row{\nu}_\ngens)$ of elements of \(\relSpace\) }

  \dataInfo{Ensures}{%
    assuming the following holds:
    \begin{itemize}
      \item $\mulmats = (\mulmat{1},\ldots,\mulmat{\nvars})$ are pairwise commuting matrices in $\matRing[\vsdim]$,  
      \item $\evMat$ is a matrix in $\evSpace{\vsdim}$ with rows \(\row{f}_1,\ldots,\row{f}_\rdim\),
      \item the $j$th row of $\termMat$ is $\vars^{\expnts_j} \coordVec{i_j} \cdot \evMat = \row{f}_{i_j} \mulmats^{\expnts_j}$
         for some monomial $\vars^{\expnts_j} \coordVec{i_j}$, 
      \item $\monbas$ is the $\ord$-monomial basis of $\relSpace / \modRel$ for some monomial order $\ord$ on $\ring^{\rdim}$,
      \item the $j$th row of $\basMat$ is $\basVec{j} \mul \evMat$,
    \end{itemize}
    then $\row{\nu}_j = \nf{\vars^{\expnts_j} \coordVec{i_j}}$ for \(1 \le j \le \ngens\). }

  \algoSteps{
    \item \comment{Compute the matrix $\nfMat \in \matRing[\ngens][\dimvs]$ such that $\termMat = \nfMat \basMat$} \\
      $(\bar\rkprof_1,\ldots,\bar\rkprof_\dimvs) \assign$ the column rank profile of $\basMat$ \\
      $\matt{\basMat}$ and $\matt{\termMat} \assign$ submatrices of $\basMat$ and $\termMat$ formed by the columns $\{\bar\rkprof_1,\ldots,\bar\rkprof_\dimvs\}$ \\
      $\nfMat = [\nu_{i,j}]_{1 \le i \le s, 1 \le j \le \dimvs} \assign \matt{\termMat} \matt{\basMat}^{-1}$
    \item \comment{Deduce normal forms} \\
      \algoword{For} \(i\) \algoword{from} \(1\) \algoword{to} \(\ngens\):
      \(\row{\nu}_i \assign \nu_{i,1} \basVec{1} + \cdots + \nu_{i,\dimvs} \basVec{\dimvs} \in \relSpace\)
    \item \algoword{Return} $(\row{\nu}_1,\ldots,\row{\nu}_\ngens)$ }
\end{algobox}

\begin{proposition}
  \label{prop:algo:normal_forms}
  \cref{algo:normal_forms} is correct and uses \(\bigO{\dimvs^{\expmatmul-1} (\vsdim
  + \ngens)}\) operations in $\field$.
\end{proposition}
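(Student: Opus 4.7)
The plan is to prove correctness and the cost bound separately, both being fairly direct given the setup developed in this section.

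For correctness, I will exploit the fact that for each $j$, the difference $\vars^{\expnts_j}\coordVec{i_j} - \nf{\vars^{\expnts_j}\coordVec{i_j}}$ belongs to $\modRel$, so the action $\cdot\,\evMat$ annihilates it. Thus the $j$th row of $\termMat$, which by assumption is $\row{f}_{i_j}\mulmats^{\expnts_j} = \vars^{\expnts_j}\coordVec{i_j}\cdot\evMat$, coincides with $\nf{\vars^{\expnts_j}\coordVec{i_j}}\cdot\evMat$. Writing $\nf{\vars^{\expnts_j}\coordVec{i_j}} = \sum_{k=1}^{\dimvs} \nu_{j,k}\basVec{k}$ and using that the $k$th row of $\basMat$ is $\basVec{k}\cdot\evMat$, this means that the $j$th row of $\termMat$ equals the $j$th row of $\nfMat\basMat$ with $\nfMat = [\nu_{j,k}]_{j,k}$. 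Hence $\termMat = \nfMat\basMat$, and since $\basMat$ has full row rank $\dimvs$, this identity determines $\nfMat$ uniquely. Selecting any set of $\dimvs$ linearly independent columns of $\basMat$ -- for instance those of the column rank profile -- yields a $\dimvs\times\dimvs$ invertible submatrix $\matt{\basMat}$ such that $\matt{\termMat} = \nfMat\matt{\basMat}$, whence $\nfMat = \matt{\termMat}\matt{\basMat}^{-1}$. Step~\textbf{2} then reads off $\row{\nu}_j = \nf{\vars^{\expnts_j}\coordVec{i_j}}$ from the rows of $\nfMat$ against the basis $\monbas$.

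For the cost bound, I rely on standard fast linear algebra. First, the column rank profile of the $\dimvs\times\vsdim$ matrix $\basMat$ of rank $\dimvs$ is computed in $\bigO{\dimvs^{\expmatmul-1}\vsdim}$ operations, by applying (to the transpose) the variant of Storjohann's algorithm already cited in \cref{sec:syzygy:monbas}; extracting $\matt{\basMat}$ and $\matt{\termMat}$ is then free. Second, inverting the $\dimvs\times\dimvs$ matrix $\matt{\basMat}$ takes $\bigO{\dimvs^\expmatmul}$ operations. Third, for the product $\matt{\termMat}\matt{\basMat}^{-1}$ whose left factor has size $\ngens\times\dimvs$, block-partitioning $\matt{\termMat}$ into $\lceil\ngens/\dimvs\rceil$ slices of $\dimvs$ rows and multiplying each slice by $\matt{\basMat}^{-1}$ costs $\bigO{\lceil\ngens/\dimvs\rceil\dimvs^\expmatmul} \subseteq \bigO{\dimvs^{\expmatmul-1}\ngens + \dimvs^\expmatmul}$ operations. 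Finally, assembling the polynomials $\row{\nu}_j$ in Step~\textbf{2} costs $\bigO{\ngens\dimvs}$ operations. Using $\dimvs\le\vsdim$ (noted after \cref{thm:rrp_monbas}) to absorb the $\dimvs^\expmatmul$ term from the inversion into $\dimvs^{\expmatmul-1}\vsdim$, the total cost is $\bigO{\dimvs^{\expmatmul-1}(\vsdim+\ngens)}$.

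I do not anticipate any serious obstacle: the algorithm is a direct linearization step, and neither the correctness argument nor the complexity analysis require a subtle invariant. The only point worth stating cleanly in the proof is the equality $\termMat = \nfMat\basMat$, which combines the two facts that the $\ring$-action on $\matRing[1][\vsdim]$ induced by $\mulmats$ vanishes on $\modRel$ and that $\termMat$ and $\basMat$ encode this action row-by-row on the corresponding monomials, so that normal-form equivalence translates exactly into a row identity between scalar matrices.
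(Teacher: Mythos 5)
Your proof is correct and essentially matches the paper's; the one genuine difference is how uniqueness of $\nfMat$ is established. The paper first appeals to \cref{thm:rrp_monbas} to recognize $\basMat$ as the submatrix of the multi-Krylov matrix $\krylov{\ord,\maxDegs}$ formed by its rank-profile rows, which yields existence and uniqueness of $\nfMat$ with $\termMat = \nfMat\basMat$, and only then runs the identity $\termMat - \nfMat\basMat = \matz$ forward to conclude that the corresponding polynomial lies in $\modRel$ and hence that the output is the normal form. You run the same algebraic content in the opposite direction: you start from the normal-form expansion on $\monbas$ and the fact that $\modRel$ is annihilated by $\cdot\,\evMat$ to derive $\termMat = \nfMat\basMat$ directly, and then invoke only the algorithm's full-row-rank input requirement on $\basMat$ to conclude that the computed $\nfMat$ is the unique solution. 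Your route is slightly more self-contained (it bypasses \cref{thm:rrp_monbas}), and it is precisely the argument underlying \cref{rmk:normal_form:border_basis}, which the paper states without a separate proof. The cost accounting coincides; the extra $\bigO{\ngens\dimvs}$ you charge for assembling the output in Step~\textbf{2} is harmless, though the paper treats that step as free since it involves no field arithmetic.
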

\begin{proof}
  For the correctness, we focus on the case \(\ngens=1\); to prove the general
  case \(\ngens\ge1\) it suffices to apply the following arguments for each \(j
  \in \{1,\ldots,\ngens\}\), to the $j$th row of \(\termMat\) and the
  corresponding output element \(\row{\nu}_j\).  Thus, we consider $\termMat =
  \vars^{\expnts} \coordVec{i} \mul \evMat = \row{f}_i \mulmats^{\expnts}$ for
  some $\expnts \in \NN^\nvars$ and \(1 \le i \le \rdim\), and our goal is to
  prove that \(\nf{\vars^{\expnts} \coordVec{i}} = \nu_1 \basVec{1} + \cdots +
  \nu_\dimvs \basVec{\dimvs}\) where $\nfMat = [\nu_1 \;\; \cdots \;\;
  \nu_\dimvs]$ is the unique vector in $\matRing[1][\dimvs]$ such that
  \(\termMat = \nfMat \basMat\). Choosing large enough exponent bounds
  $\maxDegs \in \NNp^\nvars$, such as \(\maxDegs =
  (\max(\expnts,\vsdim)+1,\ldots,\max(\expnts,\vsdim)+1)\), we recall from
  \cref{dfn:multi_krylov} that \(\termMat\) is a row of the multi-Krylov matrix
  \(\krylov{\ord,\maxDegs}\), and from \cref{thm:rrp_monbas} that \(\basMat\)
  is the submatrix of \(\krylov{\ord,\maxDegs}\) formed by the rows
  corresponding to its row rank profile. This proves the existence and
  uniqueness of \(\nfMat\) such that \(\termMat = \nfMat \basMat\).

  We now explain the computation of \(\nfMat\) in Step~\textbf{1}. We use the
  column rank profile of \(\basMat\) as a specific set of column indices
  $\bar\rkprof_1<\cdots<\bar\rkprof_\dimvs$ such that the corresponding
  \(\dimvs\times\dimvs\) submatrix $\matt{\basMat}$ of $\basMat$ is invertible.
  Then, writing $\matt{\termMat} \in \matRing[1][\dimvs]$ for the subvector of
  \(\termMat\) formed by its entries
  $\{\bar\rkprof_1,\ldots,\bar\rkprof_\dimvs\}$, the identity $\termMat =
  \nfMat \basMat$ yields $\matt{\termMat} = \nfMat \matt{\basMat}$, hence
  $\nfMat = \matt{\termMat} \matt{\basMat}^{-1}$.

  Since the $j$th row of $\basMat$ is $\basVec{j} \mul \evMat$, we have
  \(\mat{0} = \termMat - \nfMat \basMat = (\vars^{\expnts} \coordVec{i} -
  \nu_1 \basVec{1} + \cdots + \nu_\dimvs \basVec{\dimvs}) \mul \evMat\), hence
  $\vars^{\expnts} \coordVec{i} - \nu_1 \basVec{1} + \cdots + \nu_\dimvs
  \basVec{\dimvs}$ is in \(\modRel\). Thus
  \[
    \nf{\vars^{\expnts} \coordVec{i}}
  = \nf{\nu_1 \basVec{1} + \cdots + \nu_\dimvs \basVec{\dimvs}}
  = \nu_1 \basVec{1} + \cdots + \nu_\dimvs \basVec{\dimvs};
  \]
  indeed \(\nu_1 \basVec{1} + \cdots + \nu_\dimvs \basVec{\dimvs}\) is already
  in \(\ord\)-normal form as it is a combination of the \(\ord\)-monomial basis
  \(\monbas\) of $\relSpace / \modRel$. This concludes the proof of correctness.

  Concerning the cost bound, Steps~\textbf{2} and~\textbf{3} do not require
  operations in \(\field\). In Step~\textbf{1}, the column rank profile is
  obtained in $\bigO{\dimvs^{\expmatmul-1} \vsdim}$ operations according to
  \cite[Thm.\,2.10]{Storjohann00}; the inversion of $\matt{\basMat}$ costs
  $\bigO{\dimvs^\expmatmul}$; and the multiplication $\matt{\termMat}
  \matt{\basMat}^{-1}$ uses $\bigO{\ngens \dimvs^{\expmatmul-1}}$ operations if
  $\ngens \ge \dimvs$, and $\bigO{\dimvs^\expmatmul}$ otherwise. Since $\dimvs
  \le \vsdim$ we obtain the announced bound.
\end{proof}

\begin{remark}
  \label{rmk:normal_form:border_basis}
  One may observe that \cref{algo:normal_forms} works in the more general case
  where \(\monbas\) is a basis of the \(\field\)-vector space \(\relSpace /
  \modRel\), and thus in particular when \(\monbas\) is the monomial basis
  associated to a border basis of \(\modRel\) which is not necessarily related
  to a monomial order. In that case, each output element $\row{\nu}_j \in
  \relSpace$ is the unique polynomial which is equal to $\vars^{\expnts_j}
  \coordVec{i_j}$ modulo \(\modRel\) and whose monomials are in \(\monbas\).
  Besides, the argument referring to \cref{thm:rrp_monbas} in the proof above
  can be replaced by the fact that, assuming \(\maxDegs\) large enough,
  \(\basMat\) is a submatrix of \(\krylov{\ord,\maxDegs}\) formed by \(\dimvs =
  \rank{\krylov{\ord,\maxDegs}}\) linearly independent rows. The above cost
  bound thus also holds in this more general case, yet one should note that
  using \cref{algo:normal_forms} requires the knowledge of the input matrix
  \(\basMat\), which might be expensive to compute from \(\monbas\) and
  \(\evMat\) depending on the choice of \(\monbas\). In our specific case,
  \cref{algo:monomial_basis} outputs both \(\monbas\) and \(\basMat\).
\end{remark}


\subsubsection{Computing reduced Gr\"obner bases of syzygies}
\label{sec:syzygy:relbas:groebner}

To compute the reduced $\ord$-Gr\"obner basis $\gb$ of $\modRel$, we start by
using \cref{algo:monomial_basis} to find the $\ord$-monomial basis
$\monbas=(\basVec{1},\ldots,\basVec{\dimvs})$, together with the matrix
$\mat{B}$ giving all $\basVec{k} \mul \evMat$. From $\monbas$, we deduce the set
$\LM = \{ \lt{\rel} \mid \rel \in \gb\}$ formed by the $\ord$-leading terms of
the polynomials in $\gb$, as explained in the next paragraph. Finally, having
$\LM$, we compute $\ord$-normal forms modulo $\modRel$ using
\cref{algo:normal_forms} so as to obtain $\gb = \{f - \nf{f} \mid f \in \LM\}$. We
refer to \cref{sec:preliminaries} for more details concerning the latter
identity and the sets of monomials $\expSet$ and $\border$ used in the next
paragraph.

To find $\LM$, we start from $\monbas = (\basVec{1},\ldots,\basVec{\dimvs})$
and consider the set of multiples
\[
  \expSet = \{ \var_k \basVec{j} \mid 1 \le k \le \nvars, 1 \le j \le \dimvs \}
  \cup \{ \coordVec{i} \mid 1 \le i \le \rdim \text{ such that } \coordVec{i} \not\in \monbas\}.
\]
It gives us the border, as $\border=\expSet-\monbas$. The latter is a set of
monomial generators for the monomial submodule $\genBy{\ltMod{\modRel}}$, while
$\LM$ is the minimal generating set of the same submodule. Thus $\LM$ can be
found from $\border$ by discarding all monomials in $\border$ which are
divisible by another monomial in $\border$.  The number of generators
$\card{\gb}$ is not known in advance; it is at least $\rdim$, and at most
$\nvars\dimvs + \rdim$ as explained in \cref{sec:preliminaries}. In particular,
the output Gr\"obner basis is represented using $\nvars \vsdim^2 + \rdim
\vsdim$ field elements, as claimed in the introduction.

The $\ord$-normal forms of the monomials in $\LM$ will be computed using
\cref{algo:normal_forms}; for this, we need to know $f \mul \evMat$, for $f$ in
$\LM$. The matrix $\mat{B}$ describes all $\basVec{j} \mul \evMat$, for $1 \le
j \le \dimvs$; on the other hand, we know that any $f$ in $\LM$ which is not
among $\{\coordVec{1},\ldots,\coordVec{\rdim}\}$ is a product of the form $f= X_k
\basVec{j}$, for some $k$ in $\{1,\dots,\nvars\}$ and $j$ in
$\{1,\dots,\dimvs\}$. In such a case, $f \mul \evMat$ can be computed as
$(\basVec{j} \mul \evMat) \mulmat{k}$; in the algorithm, we use fast matrix
multiplication to compute several $f \mul \evMat$ at once. Altogether,
\cref{algo:syzygy_module_basis,prop:algo:syzygy_module_basis} prove \cref{thm:grb}.

\begin{algobox}
  \algoInfo
  {SyzygyModuleBasis}
  {algo:syzygy_module_basis}

  \dataInfos{Input}{
    \item monomial order $\ord$ on $\mvPolRing{\nvars}^{1\times\rdim}$,
    \item pairwise commuting matrices $\mulmats = (\mulmat{1},\ldots,\mulmat{\nvars})$ in $\matRing[\vsdim]$, 
    \item matrix $\evMat \in \evSpace{\vsdim}$. }

  \dataInfo{Output}{ the reduced $\ord$-Gr\"obner basis of $\modRel$. }

  \algoSteps{
    \item \comment{Compute monomial basis} \\
      $\monbas=(\basVec{1},\ldots,\basVec{\dimvs}),\basMat \assign \algoname{MonomialBasis}(\ord,\mulmats,\evMat)$ 
    \item \comment{Compute leading monomials and their linearizations} \\
      $\LM \assign$ minimal generating set of $\genBy{\ltMod{\modRel}}$, deduced from $\monbas$ \\[0.1cm] 
      $\LM_0 \assign \LM \cap \{\coordVec{i} \mid 1\le i\le \rdim\}$ \\[0.1cm]
      write $\LM_0$ as  $\{\coordVec{i_{0,j}} \mid 1 \le j \le \ngens_0\}$
      for some indices $i_{0,1},\ldots,i_{0,\ngens_0}$ \\[0.1cm]
      $(\expnts_{0,1}, \ldots, \expnts_{0,\ngens_0}) \assign (0,\ldots,0)$ \\[0.1cm]
      \algoword{For} $k$ \algoword{from} $1$ \algoword{to} $\nvars$ \\[-0.6cm]
      \begin{algosteps}[{\bf a.}]
        \item $\LM_k \assign \{f \in \LM - (\LM_0\cup\cdots\cup\LM_{k-1}) \mid X_k \text{ divides } f
          \text{ and } X_k^{-1} f \in \monbas\}$
        \item write $\LM_k$ as $\{\vars^{\expnts_{k,j}} \coordVec{i_{k,j}} \mid 1 \le j \le \ngens_k\}$
          for some exponents and indices $\expnts_{k,j},i_{k,j}$
        \item \algoword{For} $j$ \algoword{from} $1$ \algoword{to} $\ngens_k$: $\mu_j \assign$ index such that $ X_k^{-1}\vars^{\expnts_{k,j}} \coordVec{i_{k,j}} = \basVec{\mu_j}$
        \item $\termMat_k \assign$ matrix formed by the rows $\mu_1,\ldots,\mu_{\ngens_k}$ of $\basMat$, in this order
        \item $\termMat_k \assign \termMat_k \mulmat{k}$
      \end{algosteps} 
      $\termMat \assign \left[\begin{smallmatrix}\termMat_0 \\ \termMat_1 \\ \svdots \\ \termMat_\nvars\end{smallmatrix}\right]$
    \item \comment{Compute normal forms $\row{\nu}_{k,j} = \nf{\vars^{\expnts_{k,j}} \coordVec{i_{k,j}}}$ and return} \\
      $(\row{\nu}_{0,1},\ldots,\row{\nu}_{0,\ngens_0},\ldots,\row{\nu}_{\nvars,1},\ldots,\row{\nu}_{\nvars,\ngens_\nvars}) \assign \algoname{NormalForm}(\termMat,\monbas,\basMat)$ \\[0.1cm]
      \algoword{Return} $\{\vars^{\expnts_{k,j}} \coordVec{i_{k,j}} - \row{\nu}_{k,j} \mid 1 \le j \le \ngens_k, 0\le k\le \nvars\}$ }
\end{algobox}

\begin{proposition}
  \label{prop:algo:syzygy_module_basis}
  \cref{algo:syzygy_module_basis} is correct and uses
  \[
    \bigO{\rdim \vsdim^{\expmatmul-1} +  \vsdim^\expmatmul (\nvars + \log(d_1 \cdots d_\nvars))} 
    \subset \bigO{\rdim \vsdim^{\expmatmul-1} + \nvars \vsdim^\expmatmul \log(\vsdim)}
  \]
  operations in $\field$, where $d_k \in \{1,\ldots,\vsdim\}$ is the degree
  of the minimal polynomial of $\mulmat{k}$, for $1 \le k \le \nvars$.
\end{proposition}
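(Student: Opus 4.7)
The plan is to combine the results already established for the two sub-procedures (\cref{prop:algo:monomial_basis} for \algoname{MonomialBasis} and \cref{prop:algo:normal_forms} for \algoname{NormalForm}) with the structural characterization of the reduced Gr\"obner basis via the minimal generating set of leading terms recalled in \cref{sec:preliminaries}. Concretely, once Step~\textbf{1} has returned the \(\ord\)-monomial basis $\monbas = (\basVec{1},\ldots,\basVec{\dimvs})$ together with the matrix $\basMat$ whose $j$th row is $\basVec{j}\mul\evMat$, the reduced $\ord$-Gr\"obner basis can be written as $\gb = \{\mu - \nf{\mu} \mid \mu \in \LM\}$, where $\LM$ is the minimal generating set of \(\genBy{\ltMod{\modRel}}\). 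Step~\textbf{2} identifies $\LM$ from $\monbas$ and Step~\textbf{3} delegates the computation of the normal forms $\nf{\mu}$ to \algoname{NormalForm}. So correctness reduces to checking (i) that the partition $\LM = \LM_0 \disuni \LM_1 \disuni \cdots \disuni \LM_\nvars$ built in Step~\textbf{2} indeed enumerates $\LM$ without repetition, and (ii) that the row of $\termMat$ corresponding to each $\mu \in \LM$ is exactly $\mu \mul \evMat$, so that \cref{prop:algo:normal_forms} applies and returns $\nf{\mu}$.

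For (i), recall from \cref{sec:preliminaries} that $\LM \subseteq \border = \expSet - \monbas$, and $\expSet$ consists of the coordinate vectors $\coordVec{i}$ together with the monomials $\var_k\basVec{j}$. Every $\mu \in \LM$ that is not a coordinate vector can therefore be written $\mu = \var_k\basVec{j}$ for some $k$ and some $j$ with $\basVec{j} \in \monbas$; the algorithm assigns $\mu$ to $\LM_k$ for the smallest such $k$ (that is what the exclusion clause $\mu\notin\LM_0\cup\cdots\cup\LM_{k-1}$ enforces), and the union $\LM_0\cup\cdots\cup\LM_\nvars$ is exactly $\LM$. For (ii), the sub-block $\termMat_0$ collects the rows $\row{f}_{i_{0,j}}$ of $\evMat$, which equal $\coordVec{i_{0,j}}\mul\evMat$; and for $k\ge 1$, each row of $\termMat_k$ is obtained as $(\basVec{\mu_j}\mul\evMat)\mulmat{k} = (X_k\basVec{\mu_j})\mul\evMat = \vars^{\expnts_{k,j}}\coordVec{i_{k,j}}\mul\evMat$, using the commutation of the $\mulmat{i}$'s and the definition of $\mul$. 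Then \cref{prop:algo:normal_forms} directly yields $\row{\nu}_{k,j} = \nf{\vars^{\expnts_{k,j}}\coordVec{i_{k,j}}}$, hence the returned set is $\gb$.

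For the cost analysis, Step~\textbf{1} costs $\bigO{\rdim\vsdim^{\expmatmul-1} + \vsdim^\expmatmul(\nvars + \log(d_1\cdots d_\nvars))}$ by \cref{prop:algo:monomial_basis}. In Step~\textbf{2}, the extraction of $\LM$ from $\monbas$ is a purely combinatorial operation with no arithmetic cost, and the only field operations are the $\nvars$ matrix products $\termMat_k\mulmat{k}$; since each $\termMat_k$ has at most $\dimvs\le\vsdim$ rows and $\vsdim$ columns and $\mulmat{k}$ is $\vsdim\times\vsdim$, these products cost $\bigO{\nvars\vsdim^\expmatmul}$ in total. In Step~\textbf{3}, \cref{prop:algo:normal_forms} runs in $\bigO{\dimvs^{\expmatmul-1}(\vsdim + \ngens)}$ where $\ngens = \card{\LM}$. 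Using the bound $\card{\LM}\le\card{\expSet}\le\nvars\vsdim + \rdim$ from \cref{sec:preliminaries}, and $\dimvs\le\vsdim$, this gives $\bigO{\rdim\vsdim^{\expmatmul-1} + \nvars\vsdim^\expmatmul}$. Summing the three contributions yields the claimed bound.

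The only subtle point is the partition argument in (i), which is easy once one recalls the explicit description of $\expSet$ and $\border$; after that, everything else is an assembly of results already proved. The cost bound is essentially driven by Step~\textbf{1}, the other two steps contributing only the additive terms $\bigO{\nvars\vsdim^\expmatmul + \rdim\vsdim^{\expmatmul-1}}$ which are already absorbed.
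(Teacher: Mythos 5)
Your proof is correct and takes essentially the same approach as the paper: correctness follows by combining the characterization $\gb = \{\mu - \nf{\mu} \mid \mu \in \LM\}$ with the subroutine guarantees of \cref{prop:algo:monomial_basis,prop:algo:normal_forms}, and the cost is assembled from the three steps exactly as in the paper. Your slightly more explicit verification of the partition $\LM = \LM_0 \disuni \cdots \disuni \LM_\nvars$ is a minor amplification of what the paper treats tersely, and your observation that $\termMat_0$ requires no arithmetic is if anything cleaner than the paper's phrasing.
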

\begin{proof}
  Concerning correctness, the construction of $\basMat$ ensures that after
  Step~\textbf{2.d}, the rows of $\termMat_k$ are the rows
  $X_k^{-1}\vars^{\expnts_{k,j}} \coordVec{i_{k,j}} \cdot \evMat$. Therefore,
  after Step~\textbf{2.e} the rows of $\termMat_k$ are the rows
  $\vars^{\expnts_{k,j}} \coordVec{i_{k,j}} \cdot \evMat = \row{f}_{i_{k,j}}
  \mulmats^{\expnts_{k,j}}$. Then \cref{prop:algo:normal_forms} implies that
  $\row{\nu}_{k,j}$ computed at Step~\textbf{3} is the normal form
  $\nf{\vars^{\expnts_{k,j}} \coordVec{i_{k,j}}}$, for $1 \le j \le \ngens_k$
  and $0\le k\le \nvars$. This shows the correctness of the algorithm since, as
  explained above, the reduced $\ord$-Gr\"obner basis of syzygies is $\{f -
  \nf{f} \mid f \in \LM\}$.

  Concerning the cost bound, the cost of Step~\textbf{1} is given in
  \cref{prop:algo:monomial_basis} and is precisely the cost bound in the
  present proposition. Then, at the iteration $k$ of the \algoword{For} loop,
  the multiplication at Step~\textbf{2.e} involves the $\ngens_k \times \vsdim$
  matrix $\termMat_k$ and the $\vsdim\times\vsdim$ matrix $\mulmat{k}$.  For
  $k$ in $\{1,\dots,\nvars\}$, since we have by definition $\ngens_k =
  \card{\LM_k} \le \card{\monbas} = \dimvs \le \vsdim$, this multiplication is
  performed in $\bigO{\vsdim^\expmatmul}$ operations; over the $\nvars$
  iterations, this leads to a total of $\bigO{\nvars \vsdim^{\expmatmul}}$
  operations. For $k=0$, we have  $\ngens_0 =\rdim$, so the cost is
  $\bigO{\rdim \vsdim^{\expmatmul-1} + \vsdim^\expmatmul}$.  Finally, we have
  \(\LM_0 \cup \cdots \cup \LM_\nvars = \LM\) and therefore
  \(\ngens_0+\cdots+\ngens_\nvars \le \card{\LM} \le \nvars \dimvs + \rdim \le
  \nvars \vsdim + \rdim\); hence the cost for computing normal forms at
  Step~\textbf{3} is in $\bigO{\dimvs^{\expmatmul-1}
    (\vsdim+\ngens_0+\cdots+\ngens_\nvars)} \subseteq \bigO{\rdim
  \vsdim^{\expmatmul-1} + \nvars \vsdim^\expmatmul}$ according to
  \cref{prop:algo:normal_forms}.
\end{proof}

\begin{remark}
  \label{rmk:syzygy_border_basis}
  By considering \(\border\) instead of \(\LM\) at the second step, one could
  slightly modify \cref{algo:syzygy_module_basis} so that, instead of the reduced
  \(\ord\)-Gr\"obner basis, it returns the border basis with respect to the
  \(\ord\)-monomial basis computed at the first step. One can verify that the
  computation of that monomial basis remains the most expensive step of the
  modified algorithm, and thus that the overall cost bound is the same as the
  one in \cref{prop:algo:syzygy_module_basis}.
\end{remark}

\section{Computing multiplication matrices from the Gr\"obner basis}
\label{sec:mulmat}

In this section, we tackle \cref{pbm:mulmat} and prove \cref{thm:mulmat}. In
what follows, \(\nodule\) is a submodule of \(\ring^\sdim\) such that
$\ring^\sdim/\nodule$ has finite dimension $\vsdim$ as a $\field$-vector space,
$\ord$ is a monomial order on $\ring^\sdim$, and $\gb$ is the reduced
$\ord$-Gr\"obner basis of $\nodule$. Having as input \(\gb\), we give an
algorithm to compute the multiplication matrices for this quotient with respect
to the \(\ord\)-monomial basis, under the assumption on the leading module of
\(\nodule\) described in \cref{dfn:structure_monomial}. For conciseness,
hereafter this assumption is called the \emph{structural assumption}.

\subsection{Overview of the algorithm}
\label{sec:mulmat:overview}

We first discuss the shape of the $\ord$-monomial basis of
$\ring^\sdim/\nodule$ (see \citep{FaGiLaMo93} and \cite[Sec.\,3]{MaMoMo93} for
similar observations in the case of ideals), and then we present an overview of
our approach for computing the multiplication matrices.

Let $\monbas = (\basVec{1},\ldots,\basVec{\vsdim})$ be the $\ord$-monomial
basis of $\ring^\sdim/\nodule$. Below, when discussing multiplication matrices
(with respect to \(\monbas\)) and elements of $\ring^\sdim/\nodule$ represented
on the basis \(\monbas\), we assume that one has fixed an order on the elements
of this basis, for example \(\basVec{1} \ord \cdots \ord \basVec{\vsdim}\).
Then the sought multiplication matrices
$\mulmat{1},\ldots,\mulmat{\nvars}\in\mulmatSpace$ are such that the row $j$ of
$\mulmat{k}$ is the coefficient vector of the normal form $\nf{X_k \basVec{j}}$
represented in the basis $\monbas$, for $1\le k\le \nvars$ and $1 \le j \le
\vsdim$. Thus, we consider the set of these monomials obtained by multiplying
those of $\monbas$ by a variable:
\[
  \expSet = \{ \var_k \basVec{j} \mid 1 \le k \le \nvars, 1 \le j \le \vsdim \}
  \cup
  \{\coordVec{i} \mid 1 \le i \le \sdim \text{ such that } \coordVec{i} \in \ltMod{\nodule}\} .
\]
Note that we have added the coordinate vectors \(\coordVec{i}\) that are in the
leading module \(\ltMod{\nodule}\), or equivalently that are not in \(\monbas\).
This is because the normal forms of these coordinate vectors will also be
computed by our algorithm, for a negligible cost since they will be directly
obtained from the \(\ord\)-Gr\"obner basis.

Regarding the computation of the normal forms of the monomials in $\expSet$, we
can divide them into three disjoint categories:
\[
  \expSet = (\expSet - \border) \disuni \LM \disuni (\border - \LM),
\]
where $\border = \expSet - \monbas$ is the border and $\LM = \lt{\gb} \subseteq
\border$ is the minimal generating set of $\genBy{\ltMod{\nodule}}$ (see
\cref{sec:preliminaries} for more details, and \cref{fig:staircase_2vars} for
an example).

\begin{figure}[t]
  \centering
  \begin{tikzpicture}[
    scale=0.5,
    lm/.style = {rectangle,inner sep=2pt,fill,draw},
    border/.style = {diamond,inner sep=1.4pt,fill,draw},
    expset/.style = {circle,inner sep=1.2pt,draw},
    ]
    \draw[step=1.0,black,very thin] (0.5,0.5) grid (15.5,15.5);
    \draw[thick,->] (0.5,1) -- (16,1) node[below,pos=0.98] {$X$};
    \draw[thick,->] (1,0.5) -- (1,16) node[left,pos=0.98] {$Y$};
    \node[style=lm] at (1,14) {};
    \node[style=lm] at (2,12) {};
    \node[style=lm] at (3,11) {};
    \node[style=lm] at (4,9) {};
    \node[style=lm] at (5,6) {};
    \node[style=lm] at (6,5) {};
    \node[style=lm] at (7,3) {};
    \node[style=lm] at (8,1) {};
    \node[style=border] at (2,13) {};
    \node[style=border] at (4,10) {};
    \node[style=border] at (5,8) {};
    \node[style=border] at (5,7) {};
    \node[style=border] at (7,4) {};
    \node[style=border] at (8,2) {};
    \foreach \j in {2,...,13} { \node[style=expset] at (1,\j) {}; }
    \foreach \j in {1,...,11} { \node[style=expset] at (2,\j) {}; }
    \foreach \j in {1,...,10} { \node[style=expset] at (3,\j) {}; }
    \foreach \j in {1,...,8} { \node[style=expset] at (4,\j) {}; }
    \foreach \j in {1,...,5} { \node[style=expset] at (5,\j) {}; }
    \foreach \j in {1,...,4} { \node[style=expset] at (6,\j) {}; }
    \foreach \j in {1,...,2} { \node[style=expset] at (7,\j) {}; }
      \draw[draw=black,fill=black,opacity=0.1,]
        (7.7,1) -- (7.7,2.7) --
        (6.7,2.7) -- (6.7,4.7) --
        (5.7,4.7) -- (5.7,5.7) --
        (4.7,5.7) -- (4.7,8.7) --
        (3.7,8.7) -- (3.7,10.7) --
        (2.7,10.7) -- (2.7,11.7) --
        (1.7,11.7) -- (1.7,13.7) --
        (1,13.7) -- (1,14.7) --
        (1,14.7) -- (1,15.2) --
        (15.2,15.2) -- (15.2,1) --
        cycle ;
  \end{tikzpicture}
  \caption{Illustration of the sets of exponents in the case of the bivariate
    monomial ideal generated by \(\LM = \{Y^{13}, Y^{11} X, Y^{10} X^2, Y^8
    X^3, Y^5 X^4, Y^4 X^5, Y^2 X^6, X^7\}\). The elements of \(\LM\) are
    represented by squares, those of \(\border - \LM\) by diamonds, and those
  of $\expSet - \border$ by circles. Here, the monomial basis is $\monbas =
\{1\} \cup (\expSet-\border)$. Monomials in the greyed area are those in
\(\genBy{\LM}\), or in other words, those not in $\monbas$.}
\label{fig:staircase_2vars}
\end{figure}

The first set $\expSet - \border$ is contained in $\monbas$; precisely,
\[
  \monbas \,=\,
  (\expSet-\border) \,
  \cup \,
  \{\coordVec{i} \mid 1 \le i \le \sdim \text{ such that } \coordVec{i} \not\in \ltMod{\nodule}\}.
\]
As a result, each monomial in $\expSet-\border$ is its own $\ord$-normal form,
and the corresponding rows of the multiplication matrices are coordinate
vectors of length \(\vsdim\) which are obtained for free.

The monomials in the second set $\LM$ are the \(\ord\)-leading terms of the
elements of \(\gb\), so that $\gb = \{f - \nf{f} \mid f \in \LM\}$. Thus, from
the knowledge of \(\gb\), one can obtain $\nf{\LM}$ using at most $\ngens
\vsdim$ computations of opposites in $\field$, where \(\ngens=\card{\gb}\); by
opposite, we mean having on input $\alpha \in \field$ and computing $-\alpha$.
We recall from \cref{sec:preliminaries} that \(\ngens = \card{\LM} \le
\card{\expSet} < \nvars \vsdim + \sdim\) (the bound is strict here since
\(\vsdim>0\)).

Thus, to obtain the multiplication matrices, the main task is to compute the
normal forms of the third set $\border-\LM$. As discussed above, our algorithm
works under the structural assumption \(\hyp{\genBy{\ltMod{\nodule}}}\) from
\cref{dfn:structure_monomial}. The next lemma summarizes the above discussion
about the computation of \(\nf{\monbas \cup \LM}\), and also highlights one
example of how one can exploit the structural assumption; note that this result
appears in \cite[Sec.\,7]{FaGaHuRe13} in the case of ideals, under slightly
different assumptions.

\begin{lemma}
  \label{lem:bivariate_mulmaty}
  Given the reduced $\ord$-Gr\"obner basis \(\gb\) of \(\nodule\), one can
  compute \(\nf{\monbas \cup \LM}\) using at most $\ngens \vsdim$ operations in
  \(\field\), where \(\ngens\) is the cardinality of \(\gb\). Assuming
  $\hyp{\genBy{\ltMod{\nodule}}}$, we have \(\{X_\nvars\basVec{j} \mid 1 \le j
  \le \vsdim \} \subset \monbas \cup \LM\) and thus $\mulmat{\nvars}$ can be
  read off from \(\nf{\monbas \cup \LM}\).
\end{lemma}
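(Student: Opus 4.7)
The proof naturally splits along the two claims. For the cost of computing $\nf{\monbas \cup \LM}$, I would first observe that every $\basVec{j}\in\monbas$ is already in $\ord$-normal form, so no arithmetic is needed there. For the monomials $\mu\in\LM$, I would invoke the identity $\gb = \{\mu - \nf{\mu} \mid \mu\in\LM\}$ recalled in Section~\ref{sec:preliminaries}: each $f\in\gb$ equals $\mu - \nf{\mu}$ with $\mu$ a monic leading term and $\nf{\mu}$ a $\field$-linear combination of monomials in $\monbas$, hence involving at most $\vsdim$ terms. Recovering $\nf{\mu}$ from $f$ then reduces to negating these at most $\vsdim$ coefficients, giving a total of at most $\ngens\vsdim$ field operations over all $f\in\gb$.

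For the second claim, the plan is to argue by contradiction. Fix $j$ and suppose $X_\nvars\basVec{j} \notin \monbas \cup \LM$. Since $X_\nvars\basVec{j}$ lies in $\expSet$ by construction, and since $\expSet = (\expSet-\border)\cup(\border-\LM)\cup\LM$ with $\expSet-\border \subseteq \monbas$, this forces $X_\nvars\basVec{j}\in\border-\LM$. By minimality of $\LM$ as a generating set of $\genBy{\ltMod{\nodule}}$, we may then write $X_\nvars\basVec{j} = \nu\mu$ for some $\mu\in\LM$ and some monomial $\nu\in\ring$ with $\nu\ne 1$. The key step, and the only real obstacle, is to use $\hyp{\genBy{\ltMod{\nodule}}}$ to contradict $\basVec{j}\in\monbas$ (equivalently $\basVec{j}\notin\genBy{\ltMod{\nodule}}$). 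If $X_\nvars$ divides $\nu$, then $\basVec{j} = (\nu/X_\nvars)\mu$ already lies in $\genBy{\ltMod{\nodule}}$, a contradiction. Otherwise some $X_k$ with $k<\nvars$ must divide $\nu$ (since $\nu\ne 1$), and $X_\nvars$ must divide $\mu$; applying the structural assumption to $\mu$ with indices $j=\nvars$ and $i=k$ yields $(X_k/X_\nvars)\mu \in \genBy{\ltMod{\nodule}}$, and writing $\nu = X_k\nu'$ we obtain $\basVec{j} = \nu'\,(X_k\mu/X_\nvars)\in\genBy{\ltMod{\nodule}}$, again a contradiction.

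To conclude that $\mulmat{\nvars}$ is obtained from $\nf{\monbas\cup\LM}$, I would recall that the $j$th row of $\mulmat{\nvars}$ is by definition the coordinate vector of $\nf{X_\nvars\basVec{j}}$ on $\monbas$. The inclusion just established places $X_\nvars\basVec{j}$ either in $\monbas$ (and then $\nf{X_\nvars\basVec{j}} = X_\nvars\basVec{j}$, so the row is a coordinate vector known for free) or in $\LM$ (and then its normal form was produced in the first part of the proof). Hence no additional arithmetic is needed beyond the $\ngens\vsdim$ operations already accounted for, and $\mulmat{\nvars}$ can be read off.
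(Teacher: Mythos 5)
Your proof is correct and follows essentially the same route as the paper: the first claim reduces to negating the tails of the Gr\"obner basis elements, and the second proceeds by contradiction, splitting on whether $X_\nvars$ divides the cofactor $\nu$ and invoking $\hyp{\genBy{\ltMod{\nodule}}}$ in the remaining case. The only cosmetic difference is that you apply the structural assumption directly to the minimal generator $\mu\in\LM$ and then multiply by $\nu'$, whereas the paper applies it to the intermediate monomial $\frac{1}{X_k}X_\nvars\basVec{j}$; both yield the same contradiction $\basVec{j}\in\genBy{\ltMod{\nodule}}$.
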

\begin{proof}
  The first claim follows from the above discussion, recalling that \(\ngens\)
  is also the cardinality of \(\LM\). Indeed, \(\nf{\monbas}\) is obtained for
  free, and for each monomial \(f\) in \(\LM\), its normal form \(\nf{f}\) is
  computed using at most \(\vsdim\) computations of opposites of elements of
  \(\field\).

  Now, assume $\hyp{\genBy{\ltMod{\nodule}}}$ and suppose by contradiction that
  $X_\nvars\basVec{j} \not\in \monbas \cup \LM$ for some $j$. Since
  $X_\nvars\basVec{j}$ is not in $\monbas$, it is in $\ltMod{\nodule}$, and thus
  it is a multiple $X_\nvars\basVec{j} = X_1^{\alpha_1} \cdots
  X_\nvars^{\alpha_\nvars} f$ of some $f \in \LM$, for some exponents
  $\alpha_1,\ldots,\alpha_\nvars$. Since $X_\nvars\basVec{j} \not\in \LM$, we
  have $\alpha_k > 0$ for some \(1 \le k\le \nvars\). If $\alpha_\nvars>0$,
  then $\basVec{j} = X_1^{\alpha_1} \cdots X_\nvars^{\alpha_\nvars-1} f \in
  \ltMod{\nodule}$, which is absurd since $\basVec{j} \in \monbas$; hence \(k <
  \nvars\) and $X_\nvars\basVec{j} = X_1^{\alpha_1} \cdots
  X_{\nvars-1}^{\alpha_{\nvars-1}} f$. But then $\frac{1}{X_k}
  X_\nvars\basVec{j} \in \ltMod{\nodule}$, and using
  $\hyp{\genBy{\ltMod{\nodule}}}$ we arrive at the same contradiction:
  $\frac{X_k}{X_\nvars} \frac{1}{X_k} X_\nvars\basVec{j} = \basVec{j} \in
  \ltMod{\nodule}$. Therefore $X_\nvars\basVec{j} \in \monbas \cup \LM$ for all
  $j$, which proves the inclusion in the lemma.

  The last claim follows, since each row of \(\mulmat{\nvars}\) is the
  \(\ord\)-normal form of a monomial $X_\nvars\basVec{j}$, for some \(1 \le j
  \le \vsdim\).
\end{proof}

Computing the remaining multiplication matrices requires to compute normal
forms of monomials in the third set $\border-\LM$, which is more involved. Our
main algorithmic ingredient to compute those efficiently is a procedure which
computes a collection of vector-matrix products of the form \(\row{v}
\mulmats^e\), where \(\mulmats\) is some \(\vsdim\times\vsdim\) matrix; in our
context \(\mulmats\) is one of the multiplication matrices that are already
known at some point of the algorithm. We call this operation \emph{Krylov
evaluation} and we give an algorithm for it in
\cref{sec:mulmat:krylov_evaluation}. The next example gives a simple
illustration of how Krylov evaluation occurs in the computation of
multiplication matrices.

\begin{remark}
  \label{rmk:bivariate_mulmatx}
  Assume \(\nodule\) is an ideal of \(\ring=\field[X_1,X_2]\), that is,
  \(\nvars=2\) and \(\sdim=1\). The above lemma shows how to compute
  $\mulmat{2}$ under the structural assumption. Having \(\mulmat{2}\), we will
  now see how Krylov evaluation allows us to compute \(\mulmat{1}\) using
  $\bigO{\vsdim^\expmatmul \log(\vsdim)}$ operations in $\field$; hence, in
  this context, both multiplication matrices are obtained in this cost bound.

  As explained above, the rows of $\mulmat{1}$ that correspond to normal forms
  in $\nf{\monbas \cup \LM}$ are found using $\bigO{\vsdim^2}$ operations,
  since here \(\ngens\le\vsdim\). Thus, it remains to compute its rows that are
  in $\nf{\border_1}$, where \(\border_1 = \{X_1\basVec{j} \mid 1 \le j \le
  \vsdim\} - (\monbas\cup\LM)\).  Write $\LM = \{ X_1^{\alpha_j} X_2^{\beta_j}
\mid 1\le j \le \ngens\}$ with $(\alpha_{j+1},\beta_{j+1}) \ordLex
(\alpha_j,\beta_j)$ for $1\le j < \ngens$; since \(\LM\) is the minimal
generating set of $\genBy{\ltMod{\nodule}}$, $(\alpha_j)$ is decreasing with
$\alpha_\ngens = 0$ and $(\beta_j)$ is increasing with $\beta_1 = 0$.  Then \(
\border_1 = \{X_1^{\alpha_j} X_2^{\beta_j+k}  \mid 1 \le k < \beta_{j+1} -
\beta_j, 1\le j < \ngens\}.  \)

  Now let $\row{v}_j \in \matRing[1][\vsdim]$ be the vector which represents
  $\nf{X_1^{\alpha_j}X_2^{\beta_j}}$ in the basis $\monbas$; since
  $\{\row{v}_1,\ldots,\row{v}_\ngens\}$ represent $\nf{\LM}$, these vectors are
  among the rows of $\mulmat{1}$ that have already been computed. Then the
  vectors representing $\nf{\border_1}$ are
  \[
    \{\row{v}_j \mulmat{2}^k \mid 1 \le k < \beta_{j+1} - \beta_j, 1\le j < \ngens\}.
  \]
  Performing this Krylov evaluation using the algorithm in
  \cref{sec:mulmat:krylov_evaluation} takes $\bigO{\vsdim^\expmatmul
  \log(\vsdim)}$ operations in $\field$, as stated in
  \cref{lem:krylov_evaluation} which involves parameters that are here $\sigma
  = \beta_{\ngens} \le \vsdim$ and \(\mu \le \vsdim\).
\end{remark}

More generally, for \(\nvars\) variables and \(\sdim\ge1\), one may similarly
obtain $\mulmat{\nvars-1}$ by computing such Krylov evaluations, assuming
$\hyp{\genBy{\ltMod{\nodule}}}$ (this is a consequence of the more general
\cref{lem:read_mulmat,lem:nextvar_iteration}). However, when \(\nvars>2\)
this does not extend into an iterative computation of the multiplication
matrices: $\mulmat{\nvars-2}$ cannot be obtained simply by Krylov evaluation
with the matrix $\mulmat{\nvars-1}$ and the normal forms in
$\nf{\monbas\cup\LM}$ and those given by the rows of $\mulmat{\nvars-1}$. The
reason is that some of the normal forms which constitute the rows of
$\mulmat{\nvars-2}$ are actually obtained by Krylov evaluation with the matrix
$\mulmat{\nvars}$ and the normal forms in $\nf{\LM}$.

Thus we change our focus, from the computation of the multiplication matrices
to that of the normal forms which we can obtained by Krylov evaluation with the
known multiplication matrices and known normal forms. Roughly, our algorithm is
as follows. The first iteration is for \(i=\nvars\) and considers
$\expSet_\nvars = \monbas \cup \LM$, for which we have seen how to efficiently
compute $\nf{\expSet_\nvars}$. Our structural assumption ensures that these
normal forms contain those giving $\mulmat{\nvars}$. Then the iteration
\(i=\nvars-1\) considers the monomials $\expSet_{\nvars-1}$ that can be
obtained from $\expSet_\nvars = \monbas \cup \LM$ by multiplication by
$\var_\nvars$, and their normal forms $\nf{\expSet_{\nvars-1}}$ are computed
using Krylov evaluation with $\mulmat{\nvars}$ and the vectors representing
\(\nf{\expSet_\nvars}\). Again, our assumption ensures that
$\nf{\expSet_{\nvars-1}}$ gives $\mulmat{\nvars-1}$, but it also contains other
normal forms which correspond to rows of multiplication matrices
$\mulmat{1},\ldots,\mulmat{\nvars-2}$, whose computation is not complete yet.
Then we continue this process until $i=1$: at this stage, we have covered the
whole set of monomials $\expSet$ and we thus have all the normal forms in
$\nf{\expSet}$, from which we read the rows of the multiplication matrices. The
algorithm is described in detail in \cref{sec:mulmat:computing_mulmat}.

\subsection{Algorithm for Krylov evaluation}
\label{sec:mulmat:krylov_evaluation}

Now we give a simple method for the computation of a collection of
vector-matrix products of the form $\row{v} \mulmats^\expnt$, obtaining
efficiency via repeated squaring of the matrix $\mulmats$. This is detailed in
\cref{algo:krylov_evaluation}, in which we use the following conventions. When
specified, instead of indexing the rows of a matrix $\mat{K} \in
\matRing[\sigma][\vsdim]$ using the integers \((1,\ldots,\sigma)\), we index
them by a given totally ordered set $(\mathcal{P},\le)$ of cardinality
$\sigma$.  Explicitly, if \(\mathcal{P}=\{\expnts_1,\ldots,\expnts_\sigma\}\)
with \(\expnts_1 \le \cdots \le \expnts_\sigma\), then the \(i\)th row of
\(\mat{K}\) has index \(\expnts_i\). Then, for any subset $\mathcal{P}'
\subseteq \mathcal{P}$, we write $\matrows{\mat{K}}{\mathcal{P}'}$ for the
submatrix of $\mat{K}$ formed by its rows with indices in $\mathcal{P}'$. An
assignment operation such as $\matrows{\mat{K}}{\mathcal{P}'} \assign \mat{A}$
for some $\mat{A} \in \matRing[\card{\mathcal{P'}}][\vsdim]$ does modify the
corresponding entries of $\mat{K}$.

\begin{algobox}
  \algoInfo
  {KrylovEval}
  {algo:krylov_evaluation}

  \dataInfos{Input}{%
    \item a matrix $\mulmats \in \matRing[\vsdim]$ for some $\vsdim\in\NNp$,
    \item row vectors $\row{v}_1,\ldots,\row{v}_\nbvec \in \matRing[1][\vsdim]$ for some
      $\nbvec \in \NNp$,
    \item bounds $\expnt_1,\ldots,\expnt_\nbvec \in \NNp$.  }

  \dataInfo{Output}{%
    the matrix $\mat{K} \in \matRing[(\expnt_1+\cdots+\expnt_\nbvec)][\vsdim]$
    whose row $\expnt_1+\cdots+\expnt_{j-1} + \expnt$ is equal to $\row{v}_j
    \mulmats^\expnt$, for $1 \le \expnt \le \expnt_j$ and $1\le j\le \nbvec$.  }

  \algoSteps{
    \item \label{step:krylov_eval:init} \comment{Initialize set of indices and output matrix} \\
      $\mathcal{P} \assign \{(\expnt,j) \mid 1 \le \expnt \le \expnt_j, 1\le j\le \nbvec\}$ \\
      $\mat{K} \assign \mat{0} \in \matRing[(\expnt_1 + \cdots + \expnt_\nbvec)][\vsdim]$ with its rows indexed by $(\mathcal{P},\ordLex)$
    \item \label{step:krylov_eval:e_one} \comment{Case $\expnt=1$:~compute $\row{v}_j \mulmats$ for $1\le j\le \nbvec$} \\
      $\mathcal{P}' \assign \{(1,j) \mid 1 \le j \le \nbvec\}$ \eolcomment{$\mathcal{P}' \subseteq \mathcal{P}$} \\
      $\matrows{\mat{K}}{\mathcal{P}'} \assign \trsp{\begin{bmatrix} \trsp{\row{v}_1} & \cdots & \trsp{\row{v}_\nbvec} \end{bmatrix}} \mulmats$
    \item \label{step:krylov_eval:loop} \comment{Repeated squaring:~handle \(\expnt \in \{2^{i-1}+1,\ldots,2^i\}\) for \(i>1\)} \\
      $\mat{N} \assign \mulmats$ \\
     \algoword{For} $i$ \algoword{from} $1$ \algoword{to} $\lceil
      \log_2(\max_i \expnt_i) \rceil$:
      \begin{algosteps}[ ]
        \item \algoword{If} $i>1$ \algoword{then} $\mat{N} \assign \mat{N}^2$
          \eolcomment{\(\mat{N} = \mulmats^{2^{i-1}}\)}
        \item $\mathcal{P}' \assign \{(\expnt,j) \mid 2^{i-1} < \expnt \le \min(\expnt_j,2^i), 1 \le j \le \nbvec\}$
              \eolcomment{$\mathcal{P}' \subseteq \mathcal{P}$}
        \item $\mathcal{P}'' = \{(\expnt-2^{i-1},j) \mid (\expnt,j) \in \mathcal{P}'\}$
              \eolcomment{$\mathcal{P}'' \subseteq \mathcal{P}$}
        \item $\matrows{\mat{K}}{\mathcal{P}'} \assign \matrows{\mat{K}}{\mathcal{P}''} \cdot \mat{N}$
      \end{algosteps}
    \item \algoword{Return} $\mat{K}$ }
\end{algobox}

\begin{lemma}
  \label{lem:krylov_evaluation}
  Given $\mulmats \in \matRing[\vsdim]$, let $\row{v}_1,\ldots,\row{v}_\nbvec
  \in \matRing[1][\vsdim]$, and let $\expnt_1,\ldots,\expnt_\nbvec \in \NNp$,
  \cref{algo:krylov_evaluation} computes the row vectors $\{\row{v}_j
  \mulmats^\expnt \mid 1 \le \expnt \le \expnt_j, 1\le j\le \nbvec\}$ using 
  \[
    \bigO{\vsdim^\expmatmul (1+\log(\mu)) + \vsdim^{\expmatmul-1} \sigma (1 + \log(\mu\nbvec/\sigma))}
    \;\subseteq\;
    \bigO{\vsdim^{\expmatmul-1} (\vsdim+\sigma) (1+\log(\mu))}
  \]
  operations in $\field$, where $\sigma = \expnt_1 + \cdots + \expnt_\nbvec$
  and $\mu = \max(\expnt_1,\ldots,\expnt_\nbvec)$.
\end{lemma}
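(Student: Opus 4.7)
The plan is to handle correctness and the cost bound separately, both being fairly direct.

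For correctness, I would argue by induction on the loop index $i$ in Step~\ref{step:krylov_eval:loop} that at the start of iteration $i$, for every pair $(\expnt,j)$ with $\expnt \le 2^{i-1}$ and $\expnt \le \expnt_j$, the row of $\mat{K}$ indexed by $(\expnt,j)$ equals $\row{v}_j \mulmats^\expnt$. The base case $i=1$ is exactly what Step~\ref{step:krylov_eval:e_one} sets up: stacking $\row{v}_1,\dots,\row{v}_\nbvec$ and multiplying by $\mulmats$ gives the rows indexed by $(1,1),\dots,(1,\nbvec)$. For the induction step, an easy invariant shows $\mat{N} = \mulmats^{2^{i-1}}$ after the squaring, and then for $(\expnt,j) \in \mathcal{P}'$ (i.e.\ $2^{i-1}<\expnt\le \min(\expnt_j,2^i)$) the partner index $(\expnt-2^{i-1},j)$ belongs to $\mathcal{P}''$ and, by the induction hypothesis applied at value $\expnt - 2^{i-1} \le 2^{i-1}$, the row of $\mat{K}$ at $(\expnt-2^{i-1},j)$ equals $\row{v}_j \mulmats^{\expnt - 2^{i-1}}$; multiplying by $\mat{N}$ yields $\row{v}_j \mulmats^{\expnt}$, which is assigned to the row indexed by $(\expnt,j)$. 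After the last iteration, all pairs in $\mathcal{P}$ have been filled, so $\mat{K}$ has the prescribed form.

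For the cost bound, I would split the work into three parts: the single rectangular product at Step~\ref{step:krylov_eval:e_one}, the squarings of $\mat{N}$, and the rectangular products inside the loop. Using the standard block decomposition of an $a \times \vsdim$ by $\vsdim \times \vsdim$ product into $\lceil a/\vsdim \rceil$ blocks of size $\vsdim \times \vsdim \times \vsdim$, Step~\ref{step:krylov_eval:e_one} costs $\bigO{(\vsdim+\nbvec)\vsdim^{\expmatmul-1}}$, and each squaring costs $\bigO{\vsdim^{\expmatmul}}$. Since there are at most $\lceil \log_2 \mu \rceil$ iterations, the squarings altogether contribute $\bigO{\vsdim^{\expmatmul} (1+\log \mu)}$.

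The heart of the analysis is the total cost of the rectangular products $\matrows{\mat{K}}{\mathcal{P}''} \cdot \mat{N}$ inside the loop. Let $\rho_i = \card{\mathcal{P}'_i}$ at iteration $i$; since at most $2^{i-1}$ new exponents per $j$ can appear at iteration $i$ and the total number of exponents is $\sigma$, one has $\rho_i \le \min(\nbvec\, 2^{i-1}, \sigma)$. The cost of the product at iteration $i$ is $\bigO{(\vsdim+\rho_i)\vsdim^{\expmatmul-1}}$, so the total over the loop is $\bigO{\vsdim^{\expmatmul}(1+\log\mu) + \vsdim^{\expmatmul-1} \sum_i \rho_i}$. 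Splitting the sum at the crossover $i^\star$ where $\nbvec\, 2^{i^\star-1} \approx \sigma$ (i.e.\ $i^\star \approx 1 + \log_2(\sigma/\nbvec)$), the small-$i$ terms sum as a geometric series dominated by the last one, giving $\bigO{\sigma}$, and the large-$i$ terms contribute $\sigma$ per iteration over the remaining $\bigO{1 + \log(\mu\nbvec/\sigma)}$ iterations, for a total $\bigO{\sigma(1+\log(\mu\nbvec/\sigma))}$. Combining, one obtains the announced bound; the second inclusion follows from $\log(\mu\nbvec/\sigma)\le \log \mu$ (since $\nbvec \le \sigma$) and absorbing the term $\vsdim^\expmatmul$ into $\vsdim^{\expmatmul-1}\vsdim$.

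The only mildly delicate point is the geometric-sum bookkeeping used to replace the naive bound $\bigO{\sigma \log \mu}$ by $\bigO{\sigma(1+\log(\mu\nbvec/\sigma))}$; everything else is routine matrix-multiplication accounting.
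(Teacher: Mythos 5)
Your proposal is correct and follows essentially the same approach as the paper: the same induction for correctness, the same three-way cost split (initial product, squarings, in-loop products), the same bound $\card{\mathcal{P}'_i} \le \min(\nbvec\,2^{i-1},\sigma)$, and the same geometric-series split at the crossover $2^{i-1}\nbvec \approx \sigma$ to sharpen the naive $\sigma\log\mu$ into $\sigma(1+\log(\mu\nbvec/\sigma))$.
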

\begin{proof}
  We want to prove that after \cref{step:krylov_eval:loop}, the row
  $\expnt_1+\cdots+\expnt_{j-1} + \expnt$ of \(\mat{K}\) is $\row{v}_j
  \mulmats^\expnt$, for $1 \le \expnt \le \expnt_j$ and $1\le j\le \nbvec$.
  Indexing the rows of \(\mat{K}\) by $(\mathcal{P},\ordLex)$ as in the
  algorithm, this means that the row of \(\mat{K}\) at index \((\expnt,j)\) is
  $\row{v}_j \mulmats^\expnt$, for all \((\expnt,j) \in \mathcal{P}\). To prove
  this, we show that at the end of the \(i\)th iteration of the loop at
  \cref{step:krylov_eval:loop}, the following assertion holds:
  \begin{align*}
    \mathcal{A}_i: & \text{ the row of } \mat{K} \text{ at index } (\expnt,j) \text{ is equal to } \row{v}_j \mulmats^\expnt, \\
                   & \text{ for all } (\expnt,j) \in \mathcal{P} \text{ such that } \expnt \le 2^i.
  \end{align*}
  This gives the conclusion, since when the algorithm completes the loop, we
  have $i= \lceil \log_2(\mu) \rceil$ and \(2^i \ge \mu\), and all \((\expnt,j)
  \in \mathcal{P}\) are such that \(\expnt \le \mu\) by definition of \(\mu\).

  First, \((1,j) \in \mathcal{P}\) for $1\le j\le \nbvec$, and after
  \cref{step:krylov_eval:e_one}, the row of \(\mat{K}\) at index \((1,j)\) is
  equal to $\row{v}_j \mulmats$. So \(\mathcal{A}_0\) holds before the first
  iteration \(i=1\). Now, assume that \(\mathcal{A}_{i-1}\) holds before the
  \(i\)th iteration: we want to prove that \(\mathcal{A}_i\) holds at the end
  of this iteration. After the squaring at the beginning of the iteration, we
  have \(\mat{N} = \mulmats^{2^{i-1}}\). By construction, \(\mathcal{P}'\) is the set
  of indices such that we have the equality
  \[
    \mathcal{P}' \disuni
    (\mathcal{P} \cap \{(\expnt,j) \mid 1 \le \expnt \le 2^{i-1}, 1 \le j \le \nbvec\})
    =
    \mathcal{P} \cap \{(\expnt,j) \mid 1 \le \expnt \le 2^i, 1 \le j \le \nbvec\}.
  \]
  Thus our goal is to show that for each \((\expnt,j) \in \mathcal{P}'\), at
  the end of the iteration the row of \(\mat{K}\) at index \((\expnt,j)\) is
  \(\row{v}_j \mulmats^{\expnt}\). By assumption, the row of \(\mat{K}\) at
  index \((\expnt-2^{i-1},j)\) is \(\row{v}_j \mulmats^{\expnt-2^{i-1}}\).
  Then the last step of the iteration ensures that the row of \(\mat{K}\) at
  index \((\expnt,j)\) is \(\row{v}_j \mulmats^{\expnt-2^{i-1}} \mat{N} =
  \row{v}_j \mulmats^{\expnt-2^{i-1}} \mulmats^{2^{i-1}} = \row{v}_j
  \mulmats^{\expnt}\). Thus, \(\mathcal{A}_i\) holds, and this concludes the
  proof of correctness.

  At \cref{step:krylov_eval:e_one}, we multiply an \(\nbvec\times\vsdim\)
  matrix and a \(\vsdim\times\vsdim\) matrix, using $\bigO{\vsdim^{\expmatmul}
  + \vsdim^{\expmatmul-1} \nbvec}$ operations in \(\field\); this is within the
  bound in the lemma since \(\nbvec \le \sigma\). Over all iterations of the loop at
  \cref{step:krylov_eval:loop}, the squarings of \(\mat{N}\) use a total of
  $\bigO{\vsdim^\expmatmul (\lceil\log_2(\mu)\rceil-1)} \subseteq
  \bigO{\vsdim^\expmatmul \log_2(\mu)}$ operations. The product
  \(\matrows{\mat{K}}{\mathcal{P}''} \cdot \mat{N}\) is computed using
  \(\bigO{\vsdim^\expmatmul + \vsdim^{\expmatmul-1} \card{\mathcal{P}''}}\)
  operations in \(\field\) since the matrices have size
  \(\card{\mathcal{P}''}\times\vsdim\) and \(\vsdim\times\vsdim\),
  respectively. By definition, \(\card{\mathcal{P}''} = \card{\mathcal{P}'}\),
  and since \(\mathcal{P}' = \mathcal{P} \cap \{(\expnt,j) \mid 2^{i-1} <
  \expnt \le 2^{i}, 1 \le j \le \nbvec\}\), we obtain \(\card{\mathcal{P}''}
  \le \card{\mathcal{P}} = \sigma\) and \(\card{\mathcal{P}''} \le 2^{i-1} \nbvec\).
  Then the total number of operations in \(\field\) used for the computation
  of \(\matrows{\mat{K}}{\mathcal{P}''} \cdot \mat{N}\) over all iterations of
  the loop at \cref{step:krylov_eval:loop} is
  \[
    \bigO{\sum_{i=1}^{\lceil \log_2(\mu) \rceil} (\vsdim^\expmatmul + \vsdim^{\expmatmul-1} \min(2^{i-1}\nbvec, \sigma))} 
    \subseteq \bigO{\vsdim^\expmatmul \lceil \log_2(\mu) \rceil + \vsdim^{\expmatmul-1} \sum_{i=1}^{\lceil \log_2(\mu) \rceil} \min(2^{i-1}\nbvec, \sigma)},
  \]
  which is within the cost bound in the lemma. Indeed, \(\lceil \log_2(\mu) \rceil \le 1
  + \log_2(\mu)\) and
  \begin{align*}
    \sum_{i=1}^{\lceil \log_2(\mu) \rceil} \min(2^{i-1}\nbvec, \sigma)
    \le \sum_{i=1}^{\lfloor\log_2(\sigma/\nbvec)\rfloor} 2^{i-1} \nbvec + \sum_{i=\lfloor\log_2(\sigma/\nbvec)\rfloor+1}^{\lceil \log_2(\mu) \rceil} \sigma 
    & \le \sigma (1 + \lceil \log_2(\mu) \rceil - \lfloor\log_2(\sigma/\nbvec)\rfloor) \\
    & \le \sigma (3 + \log_2(\mu\nbvec/\sigma)). \qedhere
  \end{align*}
\end{proof}

\subsection{Computing the multiplication matrices}
\label{sec:mulmat:computing_mulmat}

Now, we describe our algorithm for computing the multiplication matrices and
give a complexity analysis. We follow on from notation in
\cref{sec:mulmat:overview}.

We are going to show how to compute the normal forms of all monomials in
\[
  \monbas \cup \border =
  \expSet \cup \{\coordVec{i} \mid 1 \le i \le \sdim \text{ such that } \coordVec{i} \not\in \ltMod{\nodule}\};
\]
since this set contains \(\expSet\), this directly yields the multiplication
matrices. We design an iteration on the \(\nvars\) variables which computes the
normal forms of \(\nvars\) sets \(\monbas \cup \LM = \nextExpSets{\nvars}
\subseteq \nextExpSets{\nvars-1} \subseteq \cdots \subseteq \nextExpSets{1} =
\monbas \cup \border\); at the end, \(\nf{\nextExpSets{1}}\) gives the sought
normal forms.

Thus, we start with the normal forms of the monomials in $\nextExpSets{\nvars}
= \monbas \cup \LM$, which are easily found from $\gb$ (see
\cref{lem:bivariate_mulmaty}). Then, for $1 \le i < \nvars$, we consider the
monomials in $\monbas \cup \border$ which are obtained from $\monbas \cup \LM$
through multiplication by $\var_{i+1},\ldots,\var_\nvars$:
\[
  \nextExpSets{i} = \{\var_{i+1}^{\expnt_{i+1}} \cdots \var_\nvars^{\expnt_\nvars} f \mid
            \expnt_{i+1},\ldots,\expnt_{\nvars} \ge 0, f \in \monbas
          \cup \LM \} \cap (\monbas \cup \border).
\]
The normal forms \(\nf{\nextExpSets{i}}\) can be obtained from those in
$\nf{\monbas \cup \LM}$ through multiplication by
$\mulmat{i+1},\ldots,\mulmat{\nvars}$, if these matrices are known. 

From these sets, we define the sets mentioned at the end of
\cref{sec:mulmat:overview}: $\expSet_{\nvars} = \nextExpSets{\nvars} = \monbas
\cup \LM$ for $i=\nvars$, and $\expSet_i = \nextExpSets{i} - \nextExpSets{i+1}$
for $1 \le i <\nvars$.  Therefore $\nextExpSets{i}$ is the disjoint union
$\expSet_i \disuni \cdots \disuni \expSet_\nvars$, and $\expSet_i$ is the set
of monomials in $\border-\nextExpSets{i+1}$ which can be obtained from $\monbas
\cup \LM$ through multiplication by a monomial in
$\var_{i+1},\ldots,\var_\nvars$ which does involve the variable $\var_{i+1}$. That is,
\begin{align*}
  \expSet_{i} & = \{ \var_{i+1}^{\expnt_{i+1}} \cdots \var_\nvars^{\expnt_\nvars} f \mid
    \expnt_{i+1} > 0, \expnt_{i+2},\ldots,\expnt_{\nvars} \ge 0, f \in \monbas
  \cup \LM \} \cap (\border - \nextExpSets{i+1}) \\
            & = \{ \var_{i+1}^\expnt f \mid \expnt>0, f \in \nextExpSets{i+1} \}
            \cap (\border - \nextExpSets{i+1}).
\end{align*}
In particular, if $\mulmat{i+1}$ and $\nf{\nextExpSets{i+1}}$ are known, then
$\nf{\expSet_{i}}$ can be computed via Krylov evaluation with the matrix
$\mulmat{i+1}$ and the vectors representing $\nf{\nextExpSets{i+1}}$.  Having
$\nf{\expSet_{i}}$ gives us \(\nf{\nextExpSets{i}} = \nf{\expSet_{i}} \cup
\nf{\nextExpSets{i+1}}\), and we will prove in \cref{lem:read_mulmat} that
$\mulmat{i}$ can be read off from $\nf{\nextExpSets{i}}$, under the structural
assumption. Thus we can proceed iteratively, since then, from $\mulmat{i}$ and
$\nf{\nextExpSets{i}}$ we can use Krylov evaluation to find
\(\nf{\expSet_{i-1}}\), from which we deduce $\mulmat{i-1}$, etc. At the end of
this process we have computed \(\nf{\nextExpSets{1}} \supseteq \nf{\expSet}\)
and deduced all the multiplication matrices.

\begin{lemma}
  \label{lem:read_mulmat}
  Assuming \(\hyp{\genBy{\ltMod{\nodule}}}\), we have
  \[
    \{\var_i \basVec{j} \mid 1 \le j \le \vsdim \} \;\;\subseteq\;\; \nextExpSets{i}
    \quad\text{for all}\;\; 1 \le i \le \nvars ;
  \]
  in particular, the multiplication matrices $\mulmat{i}, \ldots,
  \mulmat{\nvars}$ can be read off from $\nf{\nextExpSets{i}}$.
\end{lemma}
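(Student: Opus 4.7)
The plan is to fix $i \in \{1, \ldots, \nvars\}$ and $j \in \{1, \ldots, \vsdim\}$ and split into cases according to whether $X_i \basVec{j}$ lies in $\monbas$ or in $\border$. The first case is trivial: take $e_{i+1} = \cdots = e_\nvars = 0$ and $f = X_i \basVec{j} \in \monbas$ in the defining expression of $\nextExpSets{i}$. The substantive case is $X_i \basVec{j} \in \border \subseteq \ltMod{\nodule}$, where the structural assumption $\hyp{\genBy{\ltMod{\nodule}}}$ must be put to use.

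In this case I would pick any generator $g \in \LM$ dividing $X_i \basVec{j}$ and write $X_i \basVec{j} = \mu g$ with $\mu = X_1^{\alpha_1} \cdots X_\nvars^{\alpha_\nvars}$; the goal is then to show $\alpha_k = 0$ for every $k \le i$. A preliminary observation covers the index $k = i$: in any decomposition $X_i \basVec{j} = \nu h$ with $h \in \LM$, the monomial $\nu$ cannot have positive $X_i$-degree, for otherwise $\basVec{j} = (\nu/X_i)\,h$ would lie in $\ltMod{\nodule}$, contradicting $\basVec{j} \in \monbas$. Applied to the chosen $g$ this gives $\alpha_i = 0$, and comparing $X_i$-degrees on both sides of $X_i \basVec{j} = \mu g$ then forces $X_i \mid g$.

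The main obstacle is ruling out indices $k < i$, which I would do by contradiction: suppose $\alpha_k \ge 1$ for some $k < i$. Since $X_i \mid g$ and $k < i$, the structural assumption $\hyp{\genBy{\ltMod{\nodule}}}$ applies to $g \in \genBy{\ltMod{\nodule}}$ and yields $g' := (X_k/X_i)\,g \in \genBy{\ltMod{\nodule}}$, so some minimal generator $g^* \in \LM$ divides $g'$, say $g' = \tau g^*$. Rewriting
\[
X_i \basVec{j} = \mu g = \frac{\mu}{X_k} \cdot (X_k g) = \frac{\mu}{X_k} \cdot (X_i g') = \left(\frac{\mu}{X_k}\, X_i\, \tau\right) g^*
\]
exhibits a decomposition of $X_i \basVec{j}$ as a multiple of $g^* \in \LM$ by a monomial of $X_i$-degree at least one (since $\mu/X_k$ still has $X_i$-degree zero). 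This contradicts the preliminary observation, so $\alpha_k = 0$ for all $k \le i$, and hence $X_i \basVec{j} = X_{i+1}^{\alpha_{i+1}} \cdots X_\nvars^{\alpha_\nvars} g$ does belong to $\nextExpSets{i}$.

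For the last claim I would note that the defining expression of $\nextExpSets{i}$ is monotone in $i$, so $\nextExpSets{\ell} \subseteq \nextExpSets{i}$ whenever $\ell \ge i$. The first part then gives $X_\ell \basVec{j} \in \nextExpSets{\ell} \subseteq \nextExpSets{i}$ for every $\ell \in \{i, \ldots, \nvars\}$ and $j \in \{1, \ldots, \vsdim\}$, and since the $j$th row of $\mulmat{\ell}$ is by definition the coordinate vector of $\nf{X_\ell \basVec{j}}$ on the basis $\monbas$, the matrices $\mulmat{i}, \ldots, \mulmat{\nvars}$ can all be read off from $\nf{\nextExpSets{i}}$.
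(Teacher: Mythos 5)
Your proof is correct. It takes a mildly different route than the paper's in the substantive case ($X_i\basVec{j}\notin\monbas$). The paper applies the structural assumption directly to the monomial $X_i\basVec{j}/X_k$: since $\alpha_k>0$, this monomial is still a multiple of the generator $f\in\LM$, hence lies in $\genBy{\ltMod{\nodule}}$, and is divisible by $X_i$, so the assumption yields $\frac{X_k}{X_i}\cdot\frac{X_i\basVec{j}}{X_k}=\basVec{j}\in\genBy{\ltMod{\nodule}}$ in a single stroke, contradicting $\basVec{j}\in\monbas$. You instead apply the assumption to the generator $g$ itself, which requires first establishing $X_i\mid g$ (via $\alpha_i=0$ and a degree count), then produces a \emph{second} decomposition $X_i\basVec{j}=\nu g^*$ with $X_i\mid\nu$, which you rule out by reusing your preliminary observation. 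Both routes hinge on the same mechanism (shifting $X_i$ to $X_k$ via the structural assumption to land $\basVec{j}$ in the leading module) and reach the same contradiction; yours is a couple of steps longer because it detours through the generator and a fresh factorization, whereas the paper's applies the assumption to the ``right'' monomial immediately. Your handling of the monotonicity of $\nextExpSets{i}$ for the final claim matches the paper.
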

\begin{proof}
  Note that for $i=\nvars$, this result was already proved in
  \cref{lem:bivariate_mulmaty} (and we will use similar arguments in the proof
  below); besides, for $i=1$ it is straightforward since $\nextExpSets{1} =
  \monbas \cup \border$.
  
  Let $i \in \{1,\ldots,\nvars\}$. First, $\mulmat{i+1}, \ldots,
  \mulmat{\nvars}$ can be read off from $\nf{\nextExpSets{i}}$ since for $k \in
  \{i+1,\ldots,\nvars\}$ we have $\{\var_{k} \basVec{j}, 1\le j \le\vsdim\}
  \subseteq \nextExpSets{k-1} \subseteq \nextExpSets{i}$. The fact that
  \(\mulmat{i}\) can be read off from $\nf{\nextExpSets{i}}$ follows from the
  inclusion $\{\var_i \basVec{j} \mid 1 \le j \le \vsdim \} \subseteq
  \nextExpSets{i}$ in the lemma; it remains to prove that inclusion.

  Thus, we want to prove $\var_i \basVec{j} \in \nextExpSets{i}$ for any $j \in
  \{1, \ldots, \vsdim\}$; for this we will use the structural assumption.  The
  particular case $\var_i \basVec{j} \in \monbas\cup\LM$ is obvious since
  \(\monbas\cup\LM = \expSet_\nvars \subseteq \nextExpSets{i}\). Now we
  consider $\var_i \basVec{j} \not\in \monbas\cup\LM$. Then $\var_i \basVec{j}
  \in \ltMod{\nodule}$ and there exist exponents $\alpha_1,\ldots,\alpha_r$ not
  all zero and a monomial $f \in \LM$ such that $\var_i \basVec{j} =
  \var_1^{\alpha_1} \cdots \var_\nvars^{\alpha_\nvars} f$.

  Suppose by contradiction that there exists \(k\in\{1,\ldots,i\}\) such that
  $\alpha_k > 0$. If $k=i$, then $\alpha_i > 0$ implies that $\basVec{j}$ is a
  multiple of $f$, hence $\basVec{j} \in \ltMod{\nodule}$, which is not the
  case. If $1 \le k<i$, $\alpha_k > 0$ implies $\frac{1}{\var_k}\var_i
  \basVec{j} \in \ltMod{\nodule}$, and using the structural assumption we
  obtain the same contradiction: $\frac{\var_k}{\var_i} \frac{1}{\var_k}\var_i
  \basVec{j} = \basVec{j} \in \ltMod{\nodule}$. Thus there is no such $k$, and
  $\alpha_1 = \cdots = \alpha_i = 0$. As a result, $\var_i \basVec{j} =
  \var_{i+1}^{\alpha_{i+1}} \cdots \var_\nvars^{\alpha_\nvars} f$, which is in
  \(\nextExpSets{i}\).
\end{proof}

For completeness, in  \cref{algo:next_monomials} we describe a straightforward
subroutine for determining the sets of monomials $(\expSet_i)_{1\le i\le
\nvars}$, which is directly based on the description of these sets in
\cref{lem:expSeti}. Note that this computation does not involve field
operations but only comparisons of exponents of monomials, so that here the
time for finding these sets is not taken into account in our cost bounds. In an
efficient implementation of our algorithm for finding the multiplication
matrices, one would rather compute these sets while building $\border$ from
$\gb$, and we believe that finding these sets should indeed be a negligible
part of the running time of such an implementation.

\begin{lemma}
  \label{lem:expSeti}
  Assume \(\hyp{\genBy{\ltMod{\nodule}}}\) and let \(1 \le i < \nvars\). Let
  \[
    \{f_1,\ldots,f_t\} = \{f \in \nextExpSets{i+1} \cap \border \mid X_{i+1} f
    \in \border - \nextExpSets{i+1}\},
  \]
  and for \(1 \le j \le t\), let \(\expnt_j \in \NNp\) be the largest integer such
  that $\var_{i+1}^{\expnt_j} f_j \in \border - \nextExpSets{i+1}$. Then
  \[
    \expSet_{i} = \{\var_{i+1}^{\expnt} f_j \mid 1\le \expnt \le \expnt_j, 1 \le j \le t\}.
  \]
\end{lemma}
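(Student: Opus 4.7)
The plan is to establish both set inclusions, making essential use of the structural assumption \(\hyp{\genBy{\ltMod{\nodule}}}\).

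For the inclusion \(\supseteq\), fix \(X_{i+1}^{e} f_j\) with \(1 \le e \le e_j\). To show membership in \(\border\), I exploit that \(X_{i+1}^{e_j} f_j \in \border \subseteq \expSet\), so it can be written as \(X_k \nu\) for some \(\nu \in \monbas\); the case \(k = i+1\) is ruled out since then \(\nu = X_{i+1}^{e_j-1} f_j\) would be simultaneously in \(\monbas\) and in \(\ltMod{\nodule}\) (via \(f_j \in \ltMod{\nodule}\)). Consequently \(k \ne i+1\), so \(X_{i+1}^{e} f_j / X_k\) divides \(\nu \in \monbas\) and hence lies in \(\monbas\) by downward closure, exhibiting \(X_{i+1}^{e} f_j \in \expSet \cap \ltMod{\nodule} = \border\). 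For non-membership in \(\nextExpSets{i+1}\), the central observation is that the set \(S_j := \{e \ge 0 : X_{i+1}^{e} f_j \in \nextExpSets{i+1}\}\) is downward closed: if \(X_{i+1}^{e} f_j \in \nextExpSets{i+1}\) is witnessed by some \(h \in \monbas \cup \LM\) matching the first \(i+1\) exponents, then \(X_{i+1}\) divides \(h\), and \(h/X_{i+1}\) lies in \(\monbas\) (by downward closure if \(h\in\monbas\), or by minimality of \(h\) if \(h\in\LM\)) and witnesses \(X_{i+1}^{e-1} f_j \in \nextExpSets{i+1}\). Since \(0 \in S_j\) while \(1 \notin S_j\) by the definition of \(f_j\), downward closure forces \(S_j = \{0\}\), and hence \(X_{i+1}^{e} f_j \notin \nextExpSets{i+1}\) for every \(e \ge 1\).

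For the inclusion \(\subseteq\), given \(g \in \expSet_i\), I find \(f_j\) and \(e\) with \(g = X_{i+1}^{e} f_j\) by induction on the \(X_{i+1}\)-exponent of \(g\). The key auxiliary fact, where \(\hyp{\genBy{\ltMod{\nodule}}}\) is used, is that \(g/X_{i+1} \in \ltMod{\nodule}\). To prove it, pick any \(\eta \in \LM\) dividing \(g\): if its \(X_1,\ldots,X_{i+1}\)-exponents all match those of \(g\), then \(\eta\) directly witnesses \(g \in \nextExpSets{i+1}\), contradicting \(g \in \expSet_i\); if the \(X_{i+1}\)-exponent of \(\eta\) is strictly less than \(g\)'s, then \(\eta \mid g/X_{i+1}\) directly; otherwise, the \(X_{i+1}\)-exponent of \(\eta\) equals \(g\)'s (so \(X_{i+1} \mid \eta\) since this exponent is at least \(1\)) and some \(X_{i'}\)-exponent of \(\eta\) with \(i' \le i\) is strictly less than \(g\)'s, and applying the structural assumption with the pair \((i',i+1)\) produces \(\tilde\eta = (X_{i'}/X_{i+1})\,\eta \in \ltMod{\nodule}\) which divides \(g/X_{i+1}\). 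Now set \(h := g/X_{i+1}\): if \(h \in \nextExpSets{i+1}\), then combined with \(h \in \ltMod{\nodule}\) one has \(h \in \nextExpSets{i+1} \cap \border\), and \(X_{i+1} h = g \in \border - \nextExpSets{i+1}\) shows that \(h\) qualifies as some \(f_j\) with \(g = X_{i+1}^{1} f_j\); otherwise, decomposing \(g = X_k \nu\) with \(\nu \in \monbas\) forces \(k \ne i+1\) (else \(h = \nu\) would be both in \(\monbas\) and in \(\ltMod{\nodule}\)) and \(X_{i+1} \mid \nu\) (comparing \(X_{i+1}\)-exponents), so \(h = X_k(\nu/X_{i+1}) \in \expSet\), hence \(h \in \expSet_i\), and the inductive hypothesis delivers a decomposition \(h = X_{i+1}^{e''} f_j\), yielding \(g = X_{i+1}^{e''+1} f_j\). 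The bound \(e \le e_j\) follows from the maximality in the definition of \(e_j\).

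The main obstacle is the non-membership assertion \(X_{i+1}^{e} f_j \notin \nextExpSets{i+1}\) for all intermediate \(e\), since a priori the set of \(e\) for which \(X_{i+1}^{e} f_j \in \nextExpSets{i+1}\) need not form an interval. The witness-reduction argument---dividing a \(\nextExpSets{i+1}\)-witness \(h\) by \(X_{i+1}\)---resolves this by showing that this set is downward-closed, reducing the full interval property to the two boundary conditions built into the definitions of \(f_j\) and \(e_j\).
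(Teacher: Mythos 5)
Your proof is correct and, for the inclusion $\expSet_i \subseteq \{X_{i+1}^e f_j \mid 1 \le e \le e_j, 1 \le j \le t\}$, takes a genuinely different route. The paper derives $g = X_{i+1}^e f$ with $f \in \nextExpSets{i+1}$ directly from the defining decomposition of $\nextExpSets{i}$, picks the smallest $e'$ with $X_{i+1}^{e'}f \notin \nextExpSets{i+1}$, and shows $f' = X_{i+1}^{e'-1}f$ is one of the $f_j$'s, invoking the structural assumption only indirectly through \cref{lem:read_mulmat}; no induction is used. You instead induct on the $X_{i+1}$-exponent of $g$, and establish that $g/X_{i+1}$ lies in the leading module via a three-way case split on a generator $\eta \in \LM$ dividing $g$, one branch of which applies $\hyp{\genBy{\ltMod{\nodule}}}$ directly; this makes the role of the structural assumption more visible, at the cost of an induction. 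For the reverse inclusion you are more explicit than the paper, which simply asserts that $X_{i+1}^e f_j \in \border - \nextExpSets{i+1}$ for all $1 \le e \le e_j$; you supply the justification via the downward closure of your set $S_j$ together with a closure property of $\border$. Two intermediate steps are worth spelling out to make the argument airtight. In the downward-closure argument, the witness $h/X_{i+1}$ gives $X_{i+1}^{e-1}f_j$ the required multiplicative form, but membership in $\nextExpSets{i+1}$ also requires $X_{i+1}^{e-1}f_j \in \monbas\cup\border$; this holds because it divides $X_{i+1}^e f_j \in \border$ and divisors of border monomials lie in $\monbas\cup\border$. In the inductive step, the conclusion ``$h \in \expSet$, hence $h \in \expSet_i$'' additionally needs $h \in \nextExpSets{i}$, which is not a consequence of $h \in \border - \nextExpSets{i+1}$ alone; take a decomposition $g = X_{i+1}^{a_{i+1}}\cdots X_\nvars^{a_\nvars}\theta$ with $\theta \in \monbas\cup\LM$ witnessing $g \in \nextExpSets{i}$, observe that $a_{i+1} \ge 1$ since $g \notin \nextExpSets{i+1}$, and divide by $X_{i+1}$ to obtain such a decomposition of $h$.
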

\begin{proof}
  First, we prove that \(\expSet_i\) contains the latter set. Let $1 \le j\le
  t$ and \(1 \le \expnt \le \expnt_j\). Then $f_j$ is in \(\nextExpSets{i+1}\)
  and $\var_{i+1}^{\expnt} f_j$ is in \(\border\). Since $\expnt > 0$,
  $\var_{i+1}^{\expnt} f_j$ is in $\nextExpSets{i}$. Furthermore,
  $\var_{i+1}^{\expnt} f_j$ is not in $\nextExpSets{i+1}$, so that it is in
  \(\nextExpSets{i} - \nextExpSets{i+1} = \expSet_i\).

  Now, let \(g \in \expSet_i\). By definition, this means that $g$ is in
  $\nextExpSets{i} \subseteq \monbas \cup \border$, and $g$ is not in
  \(\nextExpSets{i+1} \supseteq \monbas\cup\LM\). In particular, $g$ is in
  $\border - \nextExpSets{i+1}$. Furthermore, we can write \(g = X_{i+1}^\expnt
  f\) for some \(\expnt > 0\) and \(f \in \nextExpSets{i+1}\).
  Then let \(\expnt'\) be the smallest exponent such that \(X_{i+1}^{\expnt'}f
  \not\in \nextExpSets{i+1}\); we have \(1 \le \expnt' \le \expnt\) since $f
  \in \nextExpSets{i+1}$ and $g \not\in \nextExpSets{i+1}$. Let \(f' =
  X_{i+1}^{\expnt'-1}f \in \nextExpSets{i+1}\); thus \(f'\) is in \(\border\):
  if this was not the case, then \(f'\) would be in \(\monbas\) and \(X_{i+1}
  f'\) would be in \(\nextExpSets{i+1}\) according to \cref{lem:read_mulmat}.
  Furthermore, it is a property of the border that, since the multiple $g =
  X_{i+1}^{\expnt-\expnt'+1} f'$ is in $\border$, then $X_{i+1} f'$ is in
  $\monbas \cup \border$; yet $X_{i+1} f'$ is not in $\nextExpSets{i+1}$ which
  contains $\monbas$, hence \(X_{i+1} f' \in \border-\nextExpSets{i+1}\). It
  follows that \(f' = f_j\) for some \(1 \le j \le t\). Thus we have \(g =
  X_{i+1}^{\expnt-\expnt'+1} f_j\), with \(\expnt-\expnt'+1 \le \expnt_j\) by
  definition of \(\expnt_j\), which concludes the proof.
\end{proof}

\begin{algobox}
  \algoInfo
  {NextMonomials}
  {algo:next_monomials}

  \dataInfos{Input}{
    \item the border \(\border\),
    \item the set of monomials \(\nextExpSets{i+1} = \expSet_{i+1} \cup \cdots \cup \expSet_{\nvars}\) for some \(1 \le i <\nvars\).  }

  \dataInfo{Output}{ the set of monomials $\expSet_i$, in the form
    $\expSet_i = \{\var_{i+1}^{\expnt} f_j \mid 1\le \expnt \le \expnt_j, 1 \le j
    \le t\}$ for some $f_1,\ldots,f_t \in \nextExpSets{i+1} \cap \border$ and
$\expnt_1,\ldots,\expnt_t \in \NNp$. }

  \algoSteps{
    \item \(\expSet_i \assign \emptyset; n \assign 0\)
    \item \algoword{For each} $f \in \nextExpSets{i+1} \cap \border$ such that $X_{i+1} f \in \border - \nextExpSets{i+1}$:
          \begin{algosteps}[ ]
            \item $\expnt \assign 1$;~\algoword{While} $\var_{i+1}^{\expnt+1} f \in \border - \nextExpSets{i+1}$:
              $\expnt \assign \expnt+1$
            \item $\expSet_i \assign \expSet_i \cup \{\var_{i+1} f, \ldots, \var_{i+1}^{\expnt} f\}$
            \item $t \assign t+1; \expnt_t \assign \expnt; f_t \assign f$
          \end{algosteps}
    \item \algoword{Return} $\expSet_i = \{\var_{i+1}^{\expnt} f_j \mid 1\le \expnt \le \expnt_j, 1 \le j \le t\}$ }
\end{algobox}

Next, we show how to compute $\nf{\expSet_i}$ from $\nf{\nextExpSets{i+1}}$ and
$\mulmat{i+1}$ using Krylov evaluation.

\begin{lemma}
  \label{lem:nextvar_iteration}
  Let $i \in \{1,\ldots,\nvars-1\}$. Given
  $(\border,\nextExpSets{i+1},\nf{\nextExpSets{i+1}},\mulmat{i+1})$, one can
  compute $\expSet_i$ and $\nf{\expSet_i}$ as follows:
  \begin{itemize}
    \item $\expSet_i = \{\var_{i+1}^{\expnt} f_j \mid 1\le \expnt \le \expnt_j, 1 \le j \le \nbvec_i\} \assign \algoname{NextMonomials}(\border,\nextExpSets{i+1})$,
      for some $f_1,\ldots,f_{\nbvec_i} \in \nextExpSets{i+1} \cap \border$ and $e_1,\ldots,e_{\nbvec_i} \in \NNp$
    \item \(\{\row{v}_1,\ldots,\row{v}_{\nbvec_i}\} \subseteq \matRing[1][\vsdim] \assign \nf{\{f_1,\ldots,f_{\nbvec_i}\}}\), retrieved from \(\nf{\nextExpSets{i+1}}\) \\
      $\nf{\expSet_i} \assign$ rows of $\algoname{KrylovEval}(\mulmat{i+1},\row{v}_1,\ldots,\row{v}_{\nbvec_i},\expnt_1,\ldots,\expnt_{\nbvec_i})$
  \end{itemize}
  This uses
  \[
    \bigO{\vsdim^\expmatmul (1+\log(\mu_i)) + \vsdim^{\expmatmul-1} \sigma_i (1 + \log(\mu_i\nbvec_i/\sigma_i))}
    \;\subseteq\;
    \bigO{\vsdim^{\expmatmul-1} (\vsdim+\sigma_i) (1+\log(\mu_i))}
  \]
  operations in $\field$, where $\sigma_i = e_1 + \cdots + e_{\nbvec_i}$ is the
  cardinality of $\expSet_i$ and \(\mu_i = \max(e_1,\ldots,e_{\nbvec_i})\). We
  have $\mu_i \le \max \{\expnt \in \NN \mid \var_{i+1}^\expnt f \in \border
    \text{ for some } f \in \border\}$.
\end{lemma}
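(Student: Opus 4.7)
The plan is to split the proof into three correctness checks matching the three bullet points, followed by an application of \cref{lem:krylov_evaluation} and a brief argument for the bound on $\mu_i$.

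First I would invoke \cref{lem:expSeti} to certify that \algoname{NextMonomials} returns $\expSet_i$ in the parameterized form $\{\var_{i+1}^\expnt f_j \mid 1 \le \expnt \le \expnt_j,\; 1 \le j \le \nbvec_i\}$ as claimed. Since each $f_j$ belongs to $\nextExpSets{i+1} \cap \border$, its normal form $\row{v}_j = \nf{f_j}$ is already part of the input data $\nf{\nextExpSets{i+1}}$, so the retrieval step in the second bullet is pure bookkeeping and costs no field operations.

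The real content lies in checking the Krylov step. I would prove the identity that, for all relevant $j$ and $\expnt$, the coordinate vector of $\nf{\var_{i+1}^\expnt f_j}$ on $\monbas$ equals $\row{v}_j \mulmat{i+1}^\expnt$. The defining property of $\mulmat{i+1}$ is that, for $\basVec{k} \in \monbas$, row $k$ of $\mulmat{i+1}$ gives the coordinates of $\nf{\var_{i+1} \basVec{k}}$, so right-multiplication by $\mulmat{i+1}$ on coordinate vectors implements the map $h \mapsto \nf{\var_{i+1} h}$. A short induction on $\expnt$ using $\nf{\var_{i+1} h} = \nf{\var_{i+1} \nf{h}}$ and $\field$-linearity then yields the identity. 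Since this is exactly what \algoname{KrylovEval} outputs on input $(\mulmat{i+1},\row{v}_1,\ldots,\row{v}_{\nbvec_i},\expnt_1,\ldots,\expnt_{\nbvec_i})$, the correctness of the third step follows.

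The cost bound is then a direct instantiation of \cref{lem:krylov_evaluation} with $\sigma = \sigma_i = \expnt_1 + \cdots + \expnt_{\nbvec_i} = \card{\expSet_i}$ and $\mu = \mu_i = \max(\expnt_1,\ldots,\expnt_{\nbvec_i})$. To bound $\mu_i$, I would pick an index $j^\star$ with $\expnt_{j^\star} = \mu_i$; by construction $f_{j^\star} \in \border$ and, by the maximality that defines $\expnt_{j^\star}$, also $\var_{i+1}^{\mu_i} f_{j^\star} \in \border - \nextExpSets{i+1} \subseteq \border$, so the pair $(\mu_i, f_{j^\star})$ witnesses membership of $\mu_i$ in $\{\expnt \in \NN \mid \var_{i+1}^\expnt f \in \border \text{ for some } f \in \border\}$. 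I do not foresee a serious obstacle; the only care needed is in the induction step above, where one must be precise about the distinction between the polynomial $\var_{i+1}^\expnt f_j$ itself and the coordinate vector on $\monbas$ of its normal form, since it is the latter that the lemma actually computes.
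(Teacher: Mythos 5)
Your proof is correct and follows essentially the same route as the paper's: rely on \cref{lem:expSeti} (via \algoname{NextMonomials}) for the parametrization of $\expSet_i$, observe that the $\row{v}_j$ are already available in $\nf{\nextExpSets{i+1}}$, reduce the Krylov step to \cref{lem:krylov_evaluation}, and read the $\mu_i$ bound off the definition of the $e_j$'s. The only difference is that you explicitly spell out (by a short induction on $\expnt$) the identity relating coordinate vectors of $\nf{\var_{i+1}^\expnt f_j}$ to $\row{v}_j \mulmat{i+1}^\expnt$, which the paper states without proof; this is a welcome clarification but not a different approach.
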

\begin{proof}
  In the first step, $\expSet_i$ is determined from $\nextExpSets{i+1}$ without
  field operation as shown in \cref{algo:next_monomials}, and it is obtained in the
  form $\expSet_i = \{\var_{i+1}^{\expnt} f_j \mid 1\le \expnt \le \expnt_j, 1
  \le j \le \nbvec_i\}$, where $f_1,\ldots,f_{\nbvec_i}$ are elements of
  $\nextExpSets{i+1}$ and $\expnt_1,\ldots,\expnt_{\nbvec_i}$ are positive
  integers; in particular, $\expnt_1 + \cdots + \expnt_{\nbvec_i} =
  \card{\expSet_i} = \sigma_i$. The upper bound on \(\mu_i\) holds since, by
  construction of \(\expSet_i\) as in \cref{algo:next_monomials} (see also
  \cref{lem:expSeti}), \(f_j \in \border\) and \(X_{i+1}^{e_j} f_j \in
  \border\) for \(1 \le j \le \nbvec_i\).

  Going to the normal forms, we get
  \begin{equation}
    \label{eqn:nf_exptseti}
    \nf{\expSet_i} = \{\row{v}_j \mulmat{i+1}^{\expnt}
   \mid 1\le \expnt \le \expnt_j, 1 \le j \le \nbvec_i\}
  \end{equation}
  where $\row{v}_j = \nf{f_j} \in \matRing[1][\vsdim]$ for \(1\le j\le
  \nbvec_i\); these normal forms are already known since they are in
  \(\nf{\nextExpSets{i+1}}\). This shows that the second item correctly
  computes \(\nf{\expSet_i}\). The cost bound is a consequence of
  \cref{lem:krylov_evaluation}.
\end{proof}

\begin{algobox}
  \algoInfo
  {MultiplicationMatrices}
  {algo:multiplication_matrices}

  \dataInfos{Input}{%
    \item a monomial order $\ord$ on $\ring^\sdim$,
    \item a reduced $\ord$-Gr\"obner basis $\gb \subseteq \ring^\sdim$. }

  \dataInfos{Requirements}{%
    \item \(\ring^\sdim/\nodule\) has finite dimension \(\vsdim\) as a
      $\field$-vector space, where $\nodule = \genBy{\gb}$,
    \item \(\hyp{\genBy{\ltMod{\nodule}}}\) holds. }

  \dataInfos{Output}{%
    \item the \(\ord\)-monomial basis $\monbas$ of \(\ring^\sdim/\nodule\),
    \item the multiplication matrices of $X_1,\ldots,X_\nvars$ in
      $\ring^\sdim/\nodule$ with respect to \(\monbas\). }

  \algoSteps{
    \item \comment{Build main sets of exponents} \\
      Read $\LM$ and $\monbas = (\basVec{1},\ldots,\basVec{\vsdim})$ from $\gb$, with \(\basVec{1} \ord \cdots \ord \basVec{\vsdim}\) \\
      $\expSet \assign \{\var_k \basVec{j} \mid 1 \le k \le \nvars, 1 \le j
              \le \vsdim\}
             \cup
            \{\coordVec{i} \mid 1 \le i \le \sdim \text{ such that } \coordVec{i} \in \LM\}$ \\
      $\border \assign \expSet - \monbas$ \\
      \comment{Below, normal forms are represented in \(\monbas\), as vectors in $\matRing[1][\vsdim]$}
    \item \comment{Initialize the iteration:~\(\nextExpSets{}=\nextExpSets{\nvars}=\monbas\cup\LM\), $\mathcal{Q} = \nf{\nextExpSets{i+1}}$, find \(\mulmat{\nvars}\)} \\
      $\nextExpSets{} \assign \monbas \cup \LM$;~
      $\mathcal{Q} \assign \nf{\nextExpSets{}}$;~
      read $\mulmat{\nvars}$ from $\mathcal{Q}$
    \item \algoword{For} $i$ \algoword{from} $\nvars-1$ \algoword{to} $1$: \\
      \hphantom{sp}\comment{Before iteration $i$:~$\nextExpSets{} = \nextExpSets{i+1}$, $\mathcal{Q} = \nf{\nextExpSets{i+1}}$, \(\mulmat{i+1},\ldots,\mulmat{\nvars}\) known} \\
      \hphantom{sp}\comment{After iteration $i$:~$\nextExpSets{} = \nextExpSets{i}$, $\mathcal{Q} = \nf{\nextExpSets{i}}$, \(\mulmat{i},\ldots,\mulmat{\nvars}\) known}
      \begin{algosteps}[{\bf a.}]
        \item $\widetilde{\expSet} = \{\var_{i+1}^{\expnt} f_j \mid 1\le \expnt \le \expnt_j, 1 \le j \le t\} \assign \algoname{NextMonomials}(\border,\nextExpSets{})$,
          for some $f_1,\ldots,f_t \in \nextExpSets{}$ and $e_1,\ldots,e_t \in \NNp$
          \eolcomment{$\widetilde{\expSet} = \expSet_i$}
        \item \(\{\row{v}_1,\ldots,\row{v}_t\} \subseteq \matRing[1][\vsdim] \assign \nf{\{f_1,\ldots,f_t\}}\), retrieved from \(\mathcal{Q} = \nf{\nextExpSets{}}\) \\
          $\widetilde{\mathcal{Q}} \assign$ rows of $\algoname{KrylovEval}(\mulmat{i+1},\row{v}_1,\ldots,\row{v}_t,\expnt_1,\ldots,\expnt_t)$
          \eolcomment{$\widetilde{\mathcal{Q}}=\nf{\expSet_i}$}
        \item $\nextExpSets{} \assign \widetilde{\expSet} \cup \nextExpSets{}$;~
          $\mathcal{Q} \assign \widetilde{\mathcal{Q}} \cup \mathcal{Q}$;~
          read $\mulmat{i}$ from $\nextExpSets{}$
      \end{algosteps}
    \item \algoword{Return} $\monbas,\mulmat{1},\ldots,\mulmat{\nvars}$ }
\end{algobox}

The correctness of \cref{algo:multiplication_matrices} follows from the results
and discussions in this section. The next proposition implies the first item of
\cref{thm:mulmat}, and gives a more precise cost bound. It uses notation from
\cref{lem:nextvar_iteration}. We remark that one could easily verify that the
requirements of \cref{algo:multiplication_matrices} hold while building the
\(\ord\)-monomial basis \(\monbas\) at the first step, relying on the
characterization of \(\hyp{\genBy{\ltMod{\nodule}}}\) described in
\cref{lem:structural_assumption_mingens}.

\begin{proposition}
  \label{prop:algo:multiplication_matrices}
  Let $\ord$ be a monomial order on $\ring^\sdim$ and let $\gb$ be a reduced
  $\ord$-Gr\"obner basis such that \(\ring^\sdim/\nodule\) has dimension
  $\vsdim$, where $\nodule = \genBy{\gb}$. Assume
  \(\hyp{\genBy{\ltMod{\nodule}}}\), and using notation above, let \(\mu =
  \max(\mu_1,\ldots,\mu_{\nvars-1})\). Thus
  \[
    \mu \le \max \{\expnt \in \NN \mid \var_i^\expnt f \in \border \text{ for some } f \in \border \text{ and some } 2 \le i \le \nvars\},
  \]
  and in particular \(\mu \le \vsdim\). Then
  \cref{algo:multiplication_matrices} solves \cref{pbm:mulmat} using
  \begin{align*}
    & \bigO{\vsdim^\expmatmul \left(\nvars-1+\log(\mu_1\cdots\mu_{\nvars-1})\right) +
    \vsdim^{\expmatmul-1} \left(\card{\border-\LM} + \textstyle{\sum}_{1\le i\le \nvars} \sigma_i \log(\mu_i\nbvec_i/\sigma_i)\right)} \\
    & \subseteq\; \bigO{\nvars \vsdim^\expmatmul (1+\log(\mu))}
    \;\subseteq\; \bigO{\nvars\vsdim^\expmatmul \log(\vsdim)}
  \end{align*}
  operations in $\field$.
\end{proposition}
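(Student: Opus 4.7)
The plan is to prove correctness first by a straightforward induction on the loop of Step~\textbf{3}, and then to assemble the cost bound from the per-iteration bound given by \cref{lem:nextvar_iteration}, together with a telescoping argument that controls the total size $\sum_i \sigma_i$.

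For correctness, I would set up the loop invariant that at the entry of iteration $i$ of Step~\textbf{3}, the variable $\nextExpSets{}$ stores $\nextExpSets{i+1}$, the set $\mathcal{Q}$ stores $\nf{\nextExpSets{i+1}}$ (vectors represented in the basis $\monbas$), and the matrices $\mulmat{i+1},\ldots,\mulmat{\nvars}$ have already been read off from $\mathcal{Q}$. The base case $i=\nvars-1$ follows from Step~\textbf{2}: by \cref{lem:bivariate_mulmaty} the normal forms of $\monbas \cup \LM = \nextExpSets{\nvars}$ are obtained directly from $\gb$ (normal forms of basis elements are coordinate vectors, and for each $f \in \LM$ the normal form $\nf{f}$ is read from the corresponding element $f - \nf{f}$ of $\gb$), and \cref{lem:read_mulmat} applied with $i=\nvars$ shows that $\mulmat{\nvars}$ can be read off. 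For the inductive step, \cref{lem:nextvar_iteration} shows that Steps~\textbf{3.a} and~\textbf{3.b} correctly compute $\expSet_i$ and $\nf{\expSet_i}$ from the data available at the start of iteration $i$; then the update at Step~\textbf{3.c} turns $\nextExpSets{}$ and $\mathcal{Q}$ into $\nextExpSets{i}$ and $\nf{\nextExpSets{i}}$ (using that $\nextExpSets{i} = \expSet_i \disuni \nextExpSets{i+1}$), and a second application of \cref{lem:read_mulmat} with this value of $i$ yields $\mulmat{i}$. After iteration $i=1$ all matrices $\mulmat{1},\ldots,\mulmat{\nvars}$ are known, proving the first item.

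For the cost analysis, Step~\textbf{1} and Step~\textbf{2} use $\bigO{\card{\gb} \vsdim} \subseteq \bigO{\nvars \vsdim^2}$ operations thanks to \cref{lem:bivariate_mulmaty} and the bound $\card{\gb} \le \nvars\vsdim + \sdim$ recalled in \cref{sec:preliminaries}. For Step~\textbf{3}, \cref{lem:nextvar_iteration} gives the per-iteration bound
\[
\bigO{\vsdim^\expmatmul (1+\log(\mu_i)) + \vsdim^{\expmatmul-1} \sigma_i (1+\log(\mu_i\nbvec_i/\sigma_i))}.
\]
Summing over $i=1,\ldots,\nvars-1$ yields the first term $\vsdim^\expmatmul(\nvars-1 + \log(\mu_1\cdots\mu_{\nvars-1}))$ directly. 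The key observation for the second term is that the sets $\expSet_1,\ldots,\expSet_{\nvars-1}$ are pairwise disjoint and their union equals $\nextExpSets{1} - \nextExpSets{\nvars} = (\monbas\cup\border) - (\monbas\cup\LM) = \border - \LM$; hence $\sum_i \sigma_i = \card{\border - \LM} \le \nvars \vsdim$ by the bound recalled in \cref{sec:preliminaries}. Splitting the second term as $\vsdim^{\expmatmul-1}\sigma_i + \vsdim^{\expmatmul-1}\sigma_i \log(\mu_i\nbvec_i/\sigma_i)$ and summing gives $\vsdim^{\expmatmul-1}\card{\border-\LM}$ plus the remainder claimed in the refined bound.

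The main (minor) obstacle is the simplification to $\bigO{\nvars \vsdim^\expmatmul(1+\log\mu)}$. For the logarithmic factor, I would use $\log(\mu_1\cdots\mu_{\nvars-1}) \le (\nvars-1)\log\mu$ and argue that in the summand $\sigma_i \log(\mu_i\nbvec_i/\sigma_i)$ the ratio $\mu_i\nbvec_i/\sigma_i$ is at most $\mu_i \le \mu$ (since $\sigma_i \ge \nbvec_i$). The upper bound $\mu \le \vsdim$ follows from \cref{lem:nextvar_iteration}: if $X_i^e f$ lies in $\border$ then in particular $f, X_i f, \ldots, X_i^{e-1} f$ all lie in the $\ord$-monomial basis $\monbas$ (otherwise one of them would lie in $\genBy{\ltMod{\nodule}}$, forcing $X_i^e f \in \genBy{\ltMod{\nodule}}$ rather than in the border), so $e \le \card{\monbas} = \vsdim$. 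This yields the final simplified bound $\bigO{\nvars\vsdim^\expmatmul \log(\vsdim)}$ and completes the proof.
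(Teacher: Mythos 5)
Your proof follows the same route as the paper for both the correctness argument (your loop invariant spells out what the paper summarizes by saying correctness ``follows from the results and discussions in this section'') and the cost analysis (summing the per-iteration bounds from \cref{lem:nextvar_iteration}, using $\sum_i \sigma_i = \card{\border - \LM}$ and $\mu_i\nbvec_i/\sigma_i \le \mu_i \le \mu$). However, your justification of $\mu \le \vsdim$ is genuinely flawed. You assert that if $X_i^e f \in \border$ with $f \in \border$, then $f, X_i f, \ldots, X_i^{e-1} f$ all lie in $\monbas$; but $f$ itself belongs to $\border = \expSet - \monbas$, hence is \emph{not} in $\monbas$. Moreover, your parenthetical reasoning treats ``lying in $\genBy{\ltMod{\nodule}}$'' and ``lying in the border'' as mutually exclusive, whereas by construction $\border \subset \genBy{\ltMod{\nodule}}$: the border consists precisely of the monomials of $\expSet$ that are not in the $\ord$-monomial basis, i.e.\ that \emph{are} in the leading module.

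A correct argument looks at the monomial basis element witnessing $X_i^e f \in \border$. Assume $e \ge 1$, so $X_i^e f$ is not a coordinate vector; since it lies in $\expSet$, it has the form $X_k \basVec{m}$ for some variable $X_k$ and some $\basVec{m} \in \monbas$. If $k = i$, then $X_i^{e-1} f = \basVec{m}$ and hence, since $\monbas$ is closed under taking divisors, $f \in \monbas$, contradicting $f \in \border$; so $k \ne i$. Then $X_i^e$ divides $\basVec{m}$, and dividing $\basVec{m}$ by $X_i^j$ for $j = 0, 1, \ldots, e$ gives $e+1$ distinct monomials all in $\monbas$, whence $e + 1 \le \card{\monbas} = \vsdim$ and $e < \vsdim$. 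The rest of your proposal is correct.
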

\begin{proof}
  First, Step 2 computes $\nextExpSets{\nvars} = \expSet_\nvars = \monbas \cup
  \LM$ and $\nf{\nextExpSets{\nvars}}$ from $\gb$, using $\bigO{\nvars
  \vsdim^2}$ computations of opposites of field elements (see
  \cref{lem:bivariate_mulmaty}). Then the \algoword{For} loop iteratively
  applies \cref{lem:nextvar_iteration} to obtain the remaining matrices. Using
  notation from \cref{lem:nextvar_iteration}, the overall cost bound is
  \begin{align*}
    & \bigO{\sum_{1\le i\le \nvars-1}\vsdim^{\expmatmul}(1+\log(\mu_i))
    + \vsdim^{\expmatmul-1}\sigma_i (1 + \log(\mu_i\nbvec_i/\sigma_i))} \\
    & \subseteq \bigO{\vsdim^{\expmatmul}(\nvars-1+\log(\mu_1\cdots\mu_{\nvars-1})) +
    \vsdim^{\expmatmul-1} \left(\card{\border-\LM} + \sum_{1\le i\le \nvars-1}\sigma_i\log(\mu_i\nbvec_i/\sigma_i)\right)}.
  \end{align*}
  Indeed, we have $\sigma_1 + \cdots + \sigma_{\nvars-1} = \card{\border-\LM}$,
  since $\sigma_i = \card{\expSet_i}$ and \(\expSet_1 \cup \cdots \cup
  \expSet_{\nvars-1} = \nextExpSets{1} - \expSet_\nvars = (\monbas\cup\border) -
  (\monbas\cup\LM) = \border-\LM\). Using the bounds $\card{\border-\LM} \le
  \nvars \vsdim$ (see \cref{sec:preliminaries}) as well as
  $\mu_i\nbvec_i/\sigma_i \le \mu_i \le \mu$ for all \(i\), we obtain the
  simplified cost bound \(\bigO{\nvars \vsdim^\expmatmul (1+\log(\mu))}\).
\end{proof}

\subsection{Change of order}
\label{sec:mulmat:change_order}

Combining the above algorithms leads to an efficient change of order procedure,
detailed in \cref{algo:change_order}.

\begin{algobox}
  \algoInfo
  {ChangeOrder}
  {algo:change_order}

  \dataInfos{Input}{%
    \item a monomial order $\ord_1$ on $\ring^\sdim$,
    \item a reduced $\ord_1$-Gr\"obner basis $\gb_1 \subseteq \ring^\sdim$,
    \item a monomial order $\ord_2$ on $\ring^\sdim$. }

  \dataInfos{Requirements}{%
    \item \(\ring^\sdim/\nodule\) has finite dimension \(\vsdim\) as a
      $\field$-vector space, where $\nodule = \genBy{\gb_1}$,
    \item \(\hyp{\genBy{\ltMod[\ord_1]{\nodule}}}\) holds. }

  \dataInfos{Output}{%
    \item the reduced \(\ord_2\)-Gr\"obner basis of \(\nodule\). }

  \algoSteps{
    \newcommand{\Indent}{\hphantom{char}}
    \newcommand{\DblIndent}{\hphantom{charchar}}
    \item \label{step:change_order:monbas} \(\monbas = (\basVec{1},\ldots,\basVec{\vsdim}), \mulmats=(\mulmat{1},\ldots,\mulmat{\nvars}) \assign \algoname{MultiplicationMatrices}(\ord_1,\gb_1)\)
    \item \label{step:change_order:evmat} \comment{Build a matrix \(\evMat\) such that \(\modRel = \nodule\)} \\
      \(\evMat \assign\) matrix in \(\matRing[\sdim][\vsdim]\) \\
      \algoword{For} \(1 \le i \le \sdim\): \\
      \Indent\algoword{If} \(\coordVec{i} = \basVec{j}\) for some \(1 \le j \le \vsdim\): \\
      \DblIndent \(i\)th row of \(\evMat \assign\) \([0\;\cdots\;0\;1\;0\;\cdots\;0] \in \matRing[1][\vsdim]\) with \(1\) at index \(j\) \\
      \Indent\algoword{Else}: \comment{in this case \(\coordVec{i} - \nf[\ord_1]{\coordVec{i}} \in \gb_1\)} \\
      \DblIndent \(i\)th row of \(\evMat \assign\) vector in \(\matRing[1][\vsdim]\) representing \(\nf[\ord_1]{\coordVec{i}}\) on the basis \(\monbas\)

    \item \label{step:change_order:gb} \(\gb_2 \assign \algoname{SyzygyModuleBasis}(\ord_2,\mulmats,\evMat)\)
    \item \algoword{Return} \(\gb_2\) }
\end{algobox}

As above concerning the computation of multiplication matrices, one may easily
verify from the input of \cref{algo:change_order} whether the requirements
hold. For simplicity, here we only use the simplified cost bounds of the above
results; better bounds may be obtained in particular cases.

\begin{proposition}
  \label{prop:algo:change_order}
  \cref{algo:change_order} is correct and uses
  \(\bigO{\sdim\vsdim^{\expmatmul-1} + \nvars \vsdim^\expmatmul \log(\vsdim)}\)
  operations in \(\field\).
\end{proposition}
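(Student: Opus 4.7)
The plan is to prove correctness by establishing $\modSyz[\mulmats]{\evMat} = \nodule$ for $\mulmats$ and $\evMat$ as produced in Steps~\ref{step:change_order:monbas} and~\ref{step:change_order:evmat}, and then to bound the cost by combining the guarantees of Propositions~\ref{prop:algo:multiplication_matrices} and~\ref{prop:algo:syzygy_module_basis} with the fact that Step~\ref{step:change_order:evmat} is arithmetic-free.

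First I would apply Proposition~\ref{prop:algo:multiplication_matrices} to assert that after Step~\ref{step:change_order:monbas}, $\monbas$ is the $\ord_1$-monomial basis of $\ring^\sdim/\nodule$ and $\mulmats$ are the associated multiplication matrices; the requirement $\hyp{\genBy{\ltMod[\ord_1]{\nodule}}}$ is satisfied by assumption. By the defining property of multiplication matrices, the $\field$-linear isomorphism $\varphi : \ring^\sdim/\nodule \to \field^{1\times\vsdim}$ sending a class to the coordinate vector of its $\ord_1$-normal form on $\monbas$ is a $\ring$-module homomorphism, where $\field^{1\times\vsdim}$ carries the $\ring$-module structure defined by $\mulmats$.

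Next I would verify that the $i$-th row of $\evMat$ coincides with $\varphi(\coordVec{i})$ for $1 \le i \le \sdim$, which is the delicate step. If $\coordVec{i} = \basVec{j}$ is in $\monbas$, Step~\ref{step:change_order:evmat} writes the standard basis vector with a $1$ at position $j$, which is precisely $\varphi(\coordVec{i})$. Otherwise $\coordVec{i}$ lies in $\ltMod[\ord_1]{\nodule}$; since a coordinate vector cannot be a proper monomial multiple of any other monomial in the same component, it must belong to the minimal generating set $\LM$, so $\coordVec{i} - \nf[\ord_1]{\coordVec{i}} \in \gb_1$ and the row of $\evMat$ is read off $\gb_1$ with no field operation. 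With this row-by-row equality established, for any $\rel = \sum_i p_i \coordVec{i} \in \ring^\sdim$ the vector $\rel \mul \evMat$ equals the image of $\rel$ under the composition $\ring^\sdim \to \ring^\sdim/\nodule \to \field^{1\times\vsdim}$, whose kernel is $\nodule$; hence $\modSyz[\mulmats]{\evMat} = \nodule$. Proposition~\ref{prop:algo:syzygy_module_basis} applied in Step~\ref{step:change_order:gb} (with the parameter $\rdim = \sdim$) then yields the reduced $\ord_2$-Gr\"obner basis of $\nodule$, as required.

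For the complexity, I would simply add up the three contributions: Proposition~\ref{prop:algo:multiplication_matrices} bounds Step~\ref{step:change_order:monbas} by $\bigO{\nvars \vsdim^\expmatmul \log(\vsdim)}$ operations; Step~\ref{step:change_order:evmat} uses no field operations, only copying $\sdim \vsdim$ entries already available in $\monbas$ and $\gb_1$; and Proposition~\ref{prop:algo:syzygy_module_basis} bounds Step~\ref{step:change_order:gb} by $\bigO{\sdim \vsdim^{\expmatmul-1} + \nvars \vsdim^\expmatmul \log(\vsdim)}$. Summing yields the claimed bound, and the only non-mechanical part of the argument is the verification that the two cases of Step~\ref{step:change_order:evmat} correctly realize the homomorphism $\varphi$, which rests on the coordinate-vector-in-$\LM$ observation above.
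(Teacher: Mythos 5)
Your proof is correct and follows essentially the same route as the paper: apply \cref{prop:algo:multiplication_matrices} and \cref{prop:algo:syzygy_module_basis} for the two expensive steps, observe that \cref{step:change_order:evmat} is cheap, and prove correctness by showing $\modRel = \nodule$ via the fact that each row of $\evMat$ represents the $\ord_1$-normal form of $\coordVec{i}$. Your formulation via the explicit $\ring$-module homomorphism $\varphi$ is a clean way to state what the paper expresses as a direct chain of equivalences, and your observation that a coordinate vector in $\ltMod[\ord_1]{\nodule}$ is automatically in the minimal generating set $\LM$ makes explicit a point the paper leaves implicit in the algorithm's comment.

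One small inaccuracy: you call \cref{step:change_order:evmat} ``arithmetic-free,'' but the elements of $\gb_1$ have the form $\coordVec{i} - \nf[\ord_1]{\coordVec{i}}$, so extracting $\nf[\ord_1]{\coordVec{i}}$ requires negating up to $\vsdim$ coefficients per row in the \algoword{Else} branch. The paper counts this as $\bigO{\sdim\vsdim}$ operations. This does not affect the final complexity bound, since it is dominated by the other terms, but it is worth being precise.
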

\begin{proof}
  According to \cref{prop:algo:multiplication_matrices},
  \cref{step:change_order:monbas} uses
  \(\bigO{\nvars\vsdim^\expmatmul\log(\vsdim)}\) operations in \(\field\) and
  returns the \(\ord\)-monomial basis \(\monbas\) of \(\ring^\sdim/\nodule\)
  and the multiplication matrices \(\mulmats\) with respect to \(\monbas\). To
  build the matrix \(\evMat \in \matRing[\sdim][\vsdim]\),
  \cref{step:change_order:evmat} uses \(\bigO{\sdim\vsdim}\) operations;
  precisely, each normal form of a coordinate vector in the \algoword{Else}
  statement costs at most \(\vsdim\) computations of the opposite of an element
  in \(\field\). By \cref{prop:algo:syzygy_module_basis},
  \cref{step:change_order:gb} uses \(\bigO{\sdim\vsdim^{\expmatmul-1} + \nvars
  \vsdim^\expmatmul\log(\vsdim)}\) operations to compute the reduced
  \(\ord_2\)-Gr\"obner basis \(\gb_2\) of \(\modRel\). Hence the overall cost
  bound. Proving correctness amounts to showing that \(\modRel = \nodule\),
  which directly follows from the construction of \(\evMat\) and the fact that
  \(\coordVec{i} - \nf[\ord_1]{\coordVec{i}}\) is in \(\nodule\):
  \begin{align*}
    (p_1,\ldots,p_\sdim) \in \modRel
    & \;\Leftrightarrow\; \sum_{\substack{1\le i\le\sdim \\ \coordVec{i} \in \monbas}} p_i \coordVec{i}
                        + \sum_{\substack{1\le i\le\sdim \\ \coordVec{i} \not\in \monbas}} p_i \nf[\ord_1]{\coordVec{i}} \;\,\in \nodule \\
    & \;\Leftrightarrow\; \sum_{1\le i\le \sdim} p_i \coordVec{i} = (p_1,\ldots,p_\sdim) \in \nodule.
    \qedhere
  \end{align*}
\end{proof}

\bibliographystyle{elsarticle-harv}

\end{document}